\newcommand{\D}{{\rm{D}}}
\newcommand{\Sp}{{\rm{S}}}
\newcommand{\B}{{\cal B}}
\newcommand{\T}{{\cal T}}
\newcommand{\simpF}{\Delta^{(4)}}
\newcommand{\simpT}{\Delta^{(2)}}
\newcommand*\xbar[1]{%
  \hbox{%
    \vbox{%
      \hrule height 0.5pt % The actual bar
      \kern0.5ex%         % Distance between bar and symbol
      \hbox{%
        \kern-0.1em%      % Shortening on the left side
        \ensuremath{#1}%
        \kern-0.1em%      % Shortening on the right side
      }%
    }%
  }%
}
\newtheorem{definition}{Def.}
\newtheorem{theorem}{Theorem}
\newtheorem{construction}{Construction}
\newtheorem{corollary}{Corollary}
\newtheorem{proposition}{Proposition}
\def\bea{\begin{eqnarray}} 
\def\eea{\end{eqnarray}}
\def\be{\begin{equation}} 
\def\ee{\end{equation}} 
\def\ba{\begin{array}}
\def\ea{\end{array}}
\def\be{\begin{equation}}
\def\ee{\end{equation}}
\def\bea{\begin{eqnarray}}
\def\eea{\end{eqnarray}}
\begin{document}
%%%%%%%%%%%%%%%%%%%%%%%%%%%%%%%%%%%%%%%%%%%%%%%%%%%

\title{\bf Trisections in colored tensor models}

\author{Riccardo Martini}
\email{riccardo.martini@oist.jp}
\affiliation{
Okinawa Institute of Science and Technology Graduate University, 1919-1, Tancha, Onna,
Kunigami District, Okinawa 904-0495, Japan}

\author{Reiko Toriumi}
\email{reiko.toriumi@oist.jp}
\affiliation{
Okinawa Institute of Science and Technology Graduate University, 1919-1, Tancha, Onna,
Kunigami District, Okinawa 904-0495, Japan}

%%%%%%%%%%%%%%%
\begin{abstract}
%%%%%%%%%%%%%%%
%
We give a procedure to construct (quasi-)trisection diagrams for closed (pseudo-)manifolds generated by colored tensor models without restrictions on the number of simplices in the triangulation, therefore generalizing previous works in the context of crystallizations and PL-manifolds. We further speculate on generalization of similar constructions for a class of pseudo-manifolds generated by simplicial colored tensor models.
\end{abstract}

\pacs{}

\maketitle

\tableofcontents

\renewcommand{\thefootnote}{\arabic{footnote}}
\setcounter{footnote}{0}

%%%%%%%%%%%%%%%%%%%%%%%%%%%%%%%%%%%%%%%%%%%%%%%%%%%
\section{Introduction} \label{sec:intro}
%%%%%%%%%%%%%%%%%%%%%%%%%%%%%%%%%%%%%%%%%%%%%%%%%%%

One of the most remarkable results in theoretical physics lies in random matrix models \cite{DiFrancesco:1993cyw}
whose critical limit by 't Hooft's topological expansion \cite{tHooft:1973alw} provides a universal random geometry as a Brownian map \cite{legall}, which is proven \cite{miller2015} equivalent to Liouville continuum gravity (quantum gravity with dilaton field in two-dimensions) \cite{liouville}.
Upon introduction of nontrivial dynamics, matrix models
can be shown to be mathematically rich.
The theories based on Kontsevich-type matrix models \cite{kontsevich} can be reformulated as a non-commutative quantum field theory \cite{grossewulk}, namely, the Grosse-Wulkenhaar model. 
The Grosse-Wulkanhaar model
is an  appealing quantum field theory with mathematical rigor, 
and exhibits properties like constructive renormalizability, 
asymptotic safety \cite{gwasympsafe}, integrability \cite{gwintegrable}, and Osterwalder-Schrader positivity \cite{gwos}.

Tensor models are higher rank analogues of such random matrix models, which therefore lend themselves well to be a candidate to produce even more remarkable results  for higher dimensional random geometry and quantum gravity \cite{tensortrack, Gurau:2016cjo, Guraubook}.
Colored tensor models \cite{Gurau:2011xp} in particular, are shown to represent fluctuating piecewise-linear (PL) pseudomanifolds
via their perturbative expansion in  Feynman graphs encoding topological spaces \cite{Bandieri:1982}. Colored tensor models admit a $1/N$ expansion of the partition function
\cite{GuraulargeN} with a resummable leading order, given by melonic graphs \cite{Bonzom:2011zz},  exhibiting critical behavior and a continuum limit \cite{Bonzom:2011zz}.
Melonic graph amplitudes satisfy a Lie algebra encoded in the large $N$ limit of the Schwinger-Dyson equations for tensor models \cite{Gurau:2011tj}.
Nonperturbative aspects such as Borel summability  \cite{borel} and topological recursion \cite{tr} are also studied.

Tensor models also provide a very interesting platform to explore new types of quantum/statistical field theory, owing to their non-local interactions and their vast combinatorics. 
As with matrix models, the combinatorial nature of tensor models can be enriched by introducing differential operators such that the resulting theory contains nontrivial dynamics.
Consequently, the statistical model acquires a notion of scale and its $1/N$ expansion can be translated into a renormalization group flow of the theory.
A  series of analyses and results to understand the renormalization group flow can be found in the works  \cite{bengeloun, Carrozza:2012uv, Carrozza:2014rba, Carrozza:2016vsq}. Different methods have been developed to accommodate the non-local nature of tensor models coming from combinatorics, such as dimensional regularization \cite{BenGeloun:2014qat}  and $4-\epsilon$ expansion \cite {Carrozza:2014rya}.
Having a formulation of renormalization group flow, one can then search for non-trivial fixed points, e.g., via functional renormalization group \cite{frg} and check their stability via Ward-Takahashi identities \cite{ward}. 
Other nonperturbative studies include Polchinski equations \cite{Krajewski:2015clk}.
Moreover, in recent years, tensor models have found a new avenue of research in holography via the
large $N$ melonic limit, which is
shared with the Sachdev-Kitaev-Ye model \cite{witten}.
Indeed, tensor models are a conceptually and computationally powerful tool not only to address random geometric problems but also problems in holography \cite{syk}, non-local quantum and statistical field theories, artificial intelligence \cite{Lahoche:2020txd}, turbulence \cite{Dartois:2018kfy}, linguistics \cite{Ramgoolam:2019}, and condensed matter \cite{pspin},
and serve as a very rich playground for theoretical physicists and mathematicians alike.

In this present work, we focus on studying the topological information encoded in the graphs generated by rank-$4$ colored tensor models. 
Understanding and revealing topological information and structure of PL-manifolds generated by tensor models are important work in the context of random geometry and quantum gravity.
Of course, this present work is not the first one nor the only one to address the topological properties encoded in the PL-pseudomanifolds that colored tensor models represent.
In fact, there are precedent works examining topological spaces of tensor models \cite{Gurau:2009tw, Gurau:2010nd, Gurau:2011xp, Guraubook} e.g., homology and homotopy of the graphs have been presented.
For three-dimensions, therefore correspondingly for rank-$3$ colored tensor models, Heegaard splitting has been identified in \cite{Ryan:2011qm}.

However, this particular work of ours focuses on a novel concept, trisections in four-dimensional topology, which were recently introduced by Gay and Kirby in 2012 \cite{GayKirby}.
Trisections are a novel tool to describe $4$-manifolds by revealing the nested structure of lower-dimensional submanifolds. In particular, the trisection genus of a $4$-manifold is a topological invariant. In the context of discrete manifolds, the trisection of all standard simply connected PL 4-manifolds has been studied for example in \cite{bell2017}, 
and trisections in so-called crystallization graphs have been investigated in \cite{Casali:2019gem}.
In the former work \cite{bell2017}, they rely on Pachner moves to ensure that these submanifolds are handlebodies. 
However, in colored tensor models, we do not have the priveledge to perform Pachner moves, since they are not compatible with colors in rank-$4$ tensor models. 
In the latter work \cite{Casali:2019gem}, the study focused on crystallization graphs, which are very special graphs that ensure the connectivity of  each of the submanifolds. 
However, in tensor models we generate also graphs which are not crystallizations, and furthermore, in the continuum limit of tensor models, where we are interested in large volume and refined triangulations, we will not find crystallization graphs dominating. Hence, crystallizations have a limited applicability in tensor models.

We therefore, would like to address and formulate trisection in the colored tensor model setting in this work.

We organize our paper as follows.
In sec.~{\ref{sec:tensormodels}}, we review some key points related to colored tensor models, which our work is based on.
In particular, in sec.~\ref{sec:coloredtensormodels}, we review the construction of tensor models and the definition of their partition function.
In sec.~\ref{sec:topcolgraph}, we recall how Feynman graphs of colored tensor models can encode manifolds and what kind of topological information they store.
\\
In sec.~\ref{sec:heegaardsplitting}, we illustrate a few key concepts of three-dimensional topology  necessary to our work.
In sec.~\ref{sec:attachinghandles}, we explain how to describe manifolds via their handle decomposition and recall how, in the case of $3$-manifolds, it encodes their Heegaard splitting.
Section~\ref{sec:connected-sum} analyzes the behavior of Heegaard splittings under connected sum, which will be of great importance in the later part of the paper, while in sec.~\ref{sec:jackets-heeg} and \ref{sec:more-heeg-split}, we review two constructions of Heegaard surfaces that are known in the literature and  are based on combinatorial methods.
\\
Sec.~\ref{sec:trisections}, finally, is dedicated to the construction of trisections.
After introducing the concept of trisection for smooth $4$-manifolds, in sec.~\ref{sec:stab} we review a particular kind of move, known as stabilization, and highlight some features that stabilization shares with connected sum of trisections.
In sec.~\ref{sec:cutting-simplices}, we focus on how to partition the vertices of a $4$-simplex in three sets, which is the starting point of our construction of trisections.
In sec.~\ref{sec:split4bubbles}, we study the structure obtained in a PL-manifold via our combinatorial construction and point out what kind of problems are encountered for a generic graph of a four-dimensional manifold. From this point onward, our work departs from previous results studying trisections via triangulations of $4$-manifolds.
In sec.~\ref{sec:connect4bubbles}, finally, we show how the information about trisection can be extracted from the colored graph of reank-$4$ colored tensor models.
Sec.~\ref{sec:4dhandlebodies} and sec.~\ref{sec:central-surface} elaborate on the analyses of the result. In particular we prove that, indeed, we split the manifold under investigation into three four-dimensional handlebodies and we analyze the trisection diagram generated with our procedure.
Sec.~\ref{sec:pseudo-mfd} addresses relaxation of the hypothesis of  graphs dual to manifolds and illustrate in some cases that it is possible to draw a few topological conclusions for a wider class of graphs.
\\
Finally, in sec.~\ref{sec:conclusions}, we summarize our results and point out a few possible future directions which may benefit from our present work.

%%%%%%%%%%%%%%%%%%%%%%%%%%%%%%%%%%%%%%%%%
\section{Tensor models}
\label{sec:tensormodels}
%%%%%%%%%%%%%%%%%%%%%%%%%%%%%%%%%%%%%%%%%

%%%%%%%%%%%%%%%%%%%%%%%%%%%%%%%%%%%%%%%%%%%%%%%%%%%
\subsection{$(d+1)$-colored tensor models} 
\label{sec:coloredtensormodels}
%%%%%%%%%%%%%%%%%%%%%%%%%%%%%%%%%%%%%%%%%%%%%%%%%%%

In this section, we introduce tensor models, and in particular colored tensor models and some of their relevant objects which will be used later in order to contsruct trisections.

Tensor models are statistical theories of random tensors and can be thought as zero-dimensional field theories. Due to their low dimensionality, tensor models mostly encode combinatorial information and many of their properties can be directly imported to their higher dimensional copunterpart: tensor field theories. Colors are introduced via an extra index labeling the tensor themselves and we require the covariance of the theory to be diagonal with respect to the color indices. This last requirement will allow us to have a much greater control on the combinatorics encoded in the theory.

Besides the field content of the theory (e.g., rank of tensors and  amount of colors considered), a colored tensor model is defined in perturbation theory upon specifying a free covariance and an array of interactions (deformations around the free theory). In this paper, we restrict to a simplicial model (the meaning of this name will be clear soon). We therefore consider the multiplicative group of integers modulo $N$, $\mathbb{Z}_N$ and let $I$ be $\mathbb{Z}_N^{\times d}$ with elements ${\bf {n}} \in I$, ${\bf{n}}=\{n_1, \dots , n_d\}$ and $F(I)$ the space of complex functions on $I$. We give the following definition: 

\begin{definition}
\label{def:simplicial-tensor-model}
A $(d+1)$-colored tensor model of rank $d$ tensors is defined via a measure $d \nu$
\begin{equation}
d \nu = \prod_{i=0}^d d \mu_{C^i} (\phi^i, \bar{\phi}^i) e^{-S}, \quad S = \lambda \sum_{{\bf{n}}_i \in I} {\cal K}_{\bf{n}_0\cdots {\bf{n}}_d} \prod_{i=0}^{d} \phi^i_{{\bf{n}}_i} +{\bar \lambda} \sum_{{\bar {\bf{n}}}_i \in I} {\bar {\cal K}}_{{\bar {\bf{n}}}_0\cdots {\bar {\bf{n}}}_d} \prod_{i=0}^{d} \bar{\phi}^i_{{\bar {\bf{n}}}_i} 
\end{equation}
where
\begin{itemize}
\item $\phi^i : I \rightarrow {\mathbb C}$ are $d+1$ complex random fields;
\item $C^i$ : $F(I) \rightarrow F(I)$ are $d+1$ covariances; 
\item $ {\cal K}$, $\xbar{ {\cal K}}$ : $I^{\times (d+1)} \rightarrow \mathbb{C}$ are two vertex kernels.
\end{itemize}
\end{definition}

If $\cal K$ and $\bar{ {\cal K}}$ are such that every tensor has exactly one index ($n_i$) contracted with another tensor in the interaction, we call the model a \textit{simplicial} colored tensor model. Note that in the interaction term, every color index appears on the same footing, while the free measure factorizes in the product of single color measures. Thanks to this structure, the Feynman diagrams of a simplicial colored tensor model can be represented as {\textit colored graph}, i.e., a connected bipartite regular graph such that each line has a color in $\{0, 1, \dots, d\}$ and each node is incident to exactly one line of each color\footnote{In the following we will often have to go back and forth between graphs and triangulation. Therefore, in order to avoid confusion, we will adopt the terms \textit{node} and \textit{line} for, respectively, zero-dimensional and one-dimensional objects in a graph, while we will call \textit{vertex} and \textit{edge} a zero-dimensional and a one-dimensional object in the triangulation. When referring to edges on the boundary of two-dimensional polygons we might use the term \textit{sides}.}.

\begin{definition}
A closed $(d+1)$-colored graph is a graph $\cal G = (\cal V, \cal E)$ with node set $\cal V$ and line set $\cal E$ such that:
\begin{itemize}
\item $\cal V$ is bipartite; there is a partition of the node set ${\cal V} = V \cup \xbar V$, such that for any element $l \in \cal E$, $l= \{v, \bar v\}$ where $v\in V$ and $\bar v \in \xbar V$. The cadinalities satisfy $\vert{ \cal V} \vert = 2 \vert V \vert = 2 \vert {\xbar V} \vert$.
\item The line set is partitioned into $d+1$ subsets ${\cal E} = \cup_{i = 0}^d {\cal E}^i$, where ${\cal E}^i$ is the subset of lines with color $i$.
\item It is $(d+1)$-regular (i.e., all nodes are $(d+1)$-valent) with all lines incident to a given node having distinct colors.
\end{itemize}
\end{definition}
To distinguish, we call the elements $v \in V$ (${\bar v}\in {\xbar V}$) positive (negative) nodes and draw them with the colors clockwise (anti-clockwise) turning. We often denote by these postive (negative) nodes in colors black (white) in graphs.
The bipartition also induces an orientation on the lines, say from $v$ to ${\bar v}$.

We notice that $(d+1)$-colored graphs are dual to (colored) simplicial triangulations of piecewise linear (PL) orientable $(d+1)$-dimensional pseudomanifolds in $d$ dimensions \cite{Bandieri:1982, cristSurvey}. In particular, every node in the graph corresponds to a top dimensional simplex, every line is dual to a $(d-1)$-dimensional face and two nodes joined by a line of color $i$ represent a pair of $d$-simplices sharing the same $(d-1)$-face (i.e., an orientation reversing homeomorphism between the two boundary faces is implied). In fact, given a simplicial colored triangulation $\T$ of a PL pseudo-manifold $M$, one can consider the dual cellular decomposition $\T^*$ and notice that a colored graph is nothing but the $1$-skeleton of $\T^*$. Therefore, colored graphs are often referred to as \textit{graph encoding manifolds} (GEM), and play a fundamental role in the study of PL topological invariant from a combinatorial point of view, especially within the framework of crystallizations \cite{cristSurvey}. We remark that not every triangulation can be colored, though a refinement compatible with can always be found by means of barycentric subdivision.

We postpone a more detailed explanation of the topological description of colored graphs to the following sections. Nevertheless, it is useful to recall here how to embed a colored graphs in its dual triangulation. Consider a triangulation  $\T$ of a $4$-manifold $M$, and a colored graph $\cal G$ dual to $\T$, therefore $K(\cal G) = \T$ and $\vert K(\cal G)\vert  =M$. The most natural prescription is to embed the graph such that every component of the graph intersects its dual simplex transversally and at the barycenter. Since the graph is the $1$-skeleton of the dual cellular decomposition of $M$, it is only made of nodes and lines. Therefore, we will only have to embed nodes in the barycenter of $d$-simplices and have $i$-colored lines intersecting $i$-colored $(d-1)$-faces transversally. Examples are shown in fig.~\ref{fig:dual}. For example in four dimensions we will have nodes at the center of $4$-simplices and $i$-colored lines intersecting $i$-colored tetrahedra transversally. Though very simple, this embedding represents a very powerful tool to understand many topological properties of PL manifolds using colored graphs.

\begin{figure}[h]
    \begin{minipage}[t]{0.9\textwidth}
      \centering
\def\svgwidth{0.6\columnwidth}
\centering
\includegraphics[scale=.18]{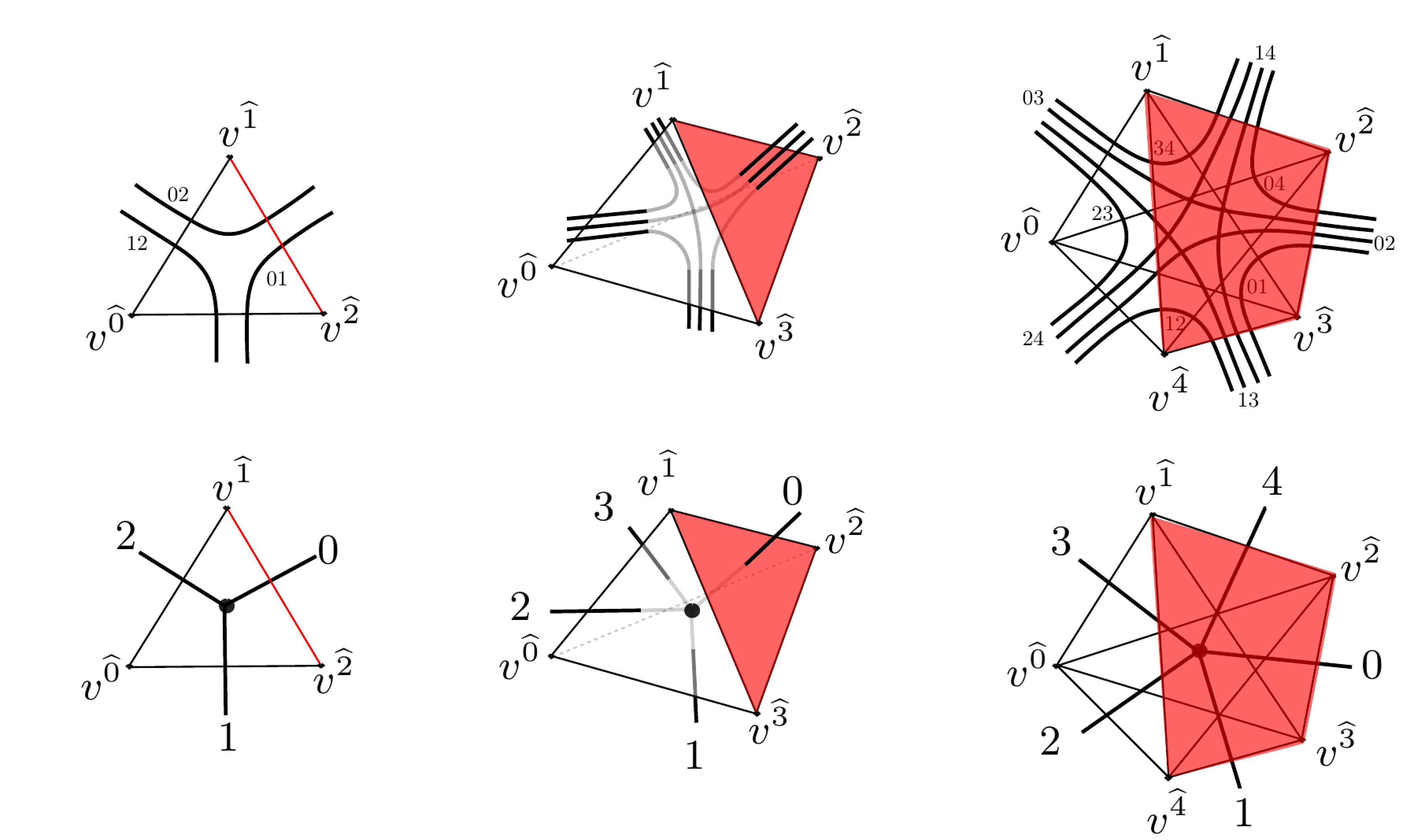}
\caption{We show $d$-simplices in $d=2, 3, 4$-dimensions, where we also embedded $d+1$-colored graphs.  From left to right, $d=2, 3, 4$, and on the top row, embedded tensor model graphs are shown in stranded representation, and on the bottom row, shown in colored representation.
We showin red,  $0$-colored faces (one-dimensional for rank $2$, two-dimensional for rank $3$, and three-dimensional for rank $4$).}
\label{fig:dual}
\end{minipage}
\end{figure}

As a final remark, we point out that bipartiteness of a colored graphs $\cal G$, which from a tensor model point of view stems from employing complex tensors and a real free covariance, implies orientability of $K({\cal G})$ \cite{Bandieri:1982}. Both in the GEM formalism and in tensor models, this condition can be relaxed if nonorientable (pseudo-)manifolds shall be considered, nevertheless, in this paper we restrict ourselves to the orientable case.

%%%%%%%%%%%%%%%%%%%%%%%%%%%%%%%%%%%%%%%%%%%%%%%%%%%
\subsection{Topology of colored graphs} 
\label{sec:topcolgraph}
%%%%%%%%%%%%%%%%%%%%%%%%%%%%%%%%%%%%%%%%%%%%%%%%%%%

As advertized, these colored graphs are extensively studied in topology especially in the form of crystallization \cite{Casali:2017tfh, Ferri:1982,Lins:1995}.
One can say that the colors therefore are responsible to encode enough topological information to construct a $d$-dimensional cellular complex, rather than the a-priori naive 1-complex of a graph. Most of the topological information is encoded within different kinds of embedded sub-complexes of $K({\cal G})$ and their combinatorial description in terms of colored graphs.

\paragraph{Bubbles.}

The first structure we present is that of \textit{bubbles}\footnote{Sometimes referred to as \textit{residues} in the literature}. Starting from a colored graph $\cal G$ dual to a colored triangulation $\T=K({\cal G})$, a $n$-bubble ${\cal B}^{i_1, \dots , i_n}_a$ is the $a$-th connected component of the subgraph spanned by the colors $i_1, \dots, i_n\in\{0,\dots, d\}$. 
In order to lighten the notation, we will indicate $d$-bubbles by their only lacking color and sometimes we will refer to them as ${\widehat{i}}$-bubble, for example in four dimensions we might consider the $\widehat{0}$-bubble $\B_a^{\widehat{0}}=\B_a^{1, 2, 3, 4}$.
Each bubble identifies a single simplex in $\T$, in particular given a $n$-bubble ${\cal B}^{i_1, \dots , i_n}_a$, its dual $K(\B^{i_1, \dots , i_n}_a)$ is PL-homeomorphic to the link of a $(d-n)$-simplex $\sigma_a$ in the first barycentric subdivision of $\T$. Upon the embedding procedure described above, we can think about $K(\B^{i_1, \dots , i_n}_a)$ as the boundary of a $n$-dimensional submanifold of $\T$, intersecting $\sigma_a$ transversally. The most important bubbles for our work are $d$-bubbles and $2$-bubbles. $d$-bubbles represent the link of vertices ($0$-simplices) in $\T$. A standard result states that $K({\cal G})$ is a manifold if and only if all $d$-bubbles are topological spheres. $2$-bubbles will be referred to as bicolored cycles\footnote{In the tensor models literature, we often refer to bicolored cycles as faces, however, in this paper, we will keep the word faces for general simplices.}, they identify $(d-2)$-simplices (triangles in four dimensions) and are often depicted in tensor models when employing the ``stranded'' notation for Feynman graphs. From a tensor model perspective, while nodes of $\cal G$ correspond to interaction vertices and lines to free propagators of the theory, bicolored cycles come from the contraction patterns of tensor indices.

\paragraph{Jackets.}
Let $\cal G$ be a $(d+1)$-colored graph. For any cyclic permutation $\eta = \{\eta_0, \dots, \eta_d\}$ of the color set, up to inverse, there exist a regular cellular embedding of $\cal G$ into an orientable surface $\Sigma_\eta$, such that regions of $\Sigma_\eta$ are bounded by bicolored cycles labeled by $\{\eta_i, \eta_{i+1}\}$ \cite{BenGeloun:2010wbk, GuraulargeN}. Then, we define a jacket ${\cal J}_{\eta}$ as the colored graph having the same nodes and lines as $\cal G$, but only the bicolored cycles $\{\eta_i, \eta_{i+1}\}$:

\begin{definition}
A colored {\textit jacket} ${\cal J}_{\eta}$ is a $2$-subcomplex of $\cal G$, labeled by a permutation $\eta$ of the set $\{0, \dots, d\}$, such that
\begin{itemize}
\item $\cal J$ and $\cal G$ have identical node sets, ${\cal V}_{\cal J} = {\cal V}_{\cal G}$;
\item $\cal J$ and $\cal G$ have identical line sets, ${\cal E}_{\cal J} = {\cal E}_{\cal G}$;
\item the bicolored cycle set of ${\cal J}_{\eta}$ is a subset of the bicolor set of $\cal G$: ${\cal F}_{\cal J} = \{ f \in {\cal F}_{\cal G} \vert f = \{\eta_i, \eta_{i+1}\}, i\in \mathbb Z_{d+1}\}$.
\end{itemize}
\end{definition}

From a tensor model perspective, jackets are merely ribbon graphs (only comprise of nodes, lines and bicolored cycles), like the ones generated by matrix models graphs. Therefore, jackets represent embedded surfaces in the cellular complex represented by colored tensor models graphs. Let us clarify this last point. The regular embedding of  $\cal G$ into $\Sigma_{\eta}$ defines a cellular decomposition of $\Sigma_{\eta}$ with polygonal $2$-cells having $(d+1)$-sides. Each $2$-cell is dual (in $\Sigma_{\eta}$) to a node of $\cal G$ and each side is dual to a line (furthermore, every vertex is dual to a bicolored cycle $\{\eta_i, \eta_{i+1}\}$). Therefore, sides inherit the colors carried by lines of $\cal G$. One may notice that the transversal intersection of a surface with a codimension-$1$ $i$-simplex is a one dimensional edge homeomorphic to such an $i$-colored side. Therefore, we can think about $K({\cal J}_{\eta})$ as an embedding of $\Sigma_{\eta}$ in $K({\cal G})$, such that it intersects transversally all the $(d-1)$-faces. If $d>3$, the dimensionality of $\Sigma_{\eta}$ is too low to define two different regions within the top dimensional simplices. If $d = 3$, though, $\Sigma_{\eta}$ splits every top dimensional simplex and have been shown to represent Heegaard surfaces of three-dimensional PL-manifold $K(\cal G)$ \cite{Ryan:2011qm}; we will be discuss this further in section~\ref{sec:jackets-heeg}.

It is evident that $\cal J$ and $\cal G$ have the same connectivity. We note here that the number of independent jackets is $d!/2$.
We define the {\textit {Euler characteristic}} of the jackets as $\chi ({\cal J}) = 2- 2 g_{\cal J} = \vert {\cal V}_{\cal J}\vert -  \vert {\cal E}_{\cal J}\vert  + \vert {\cal F}_{\cal J}\vert$, where $g_{\cal J}$ is the genus of the jacket and corresponds to the genus of $\Sigma_{\eta}$. Note that we only define jackets for the closed colored graphs here.
We also remark that jackets are also bipartite reflecting the definition above, and therefore represent orientable surfaces.

\begin{figure}[h]
    \begin{minipage}[t]{0.9\textwidth}
      \centering
\def\svgwidth{0.9\columnwidth}
\centering
\includegraphics[scale=.2]{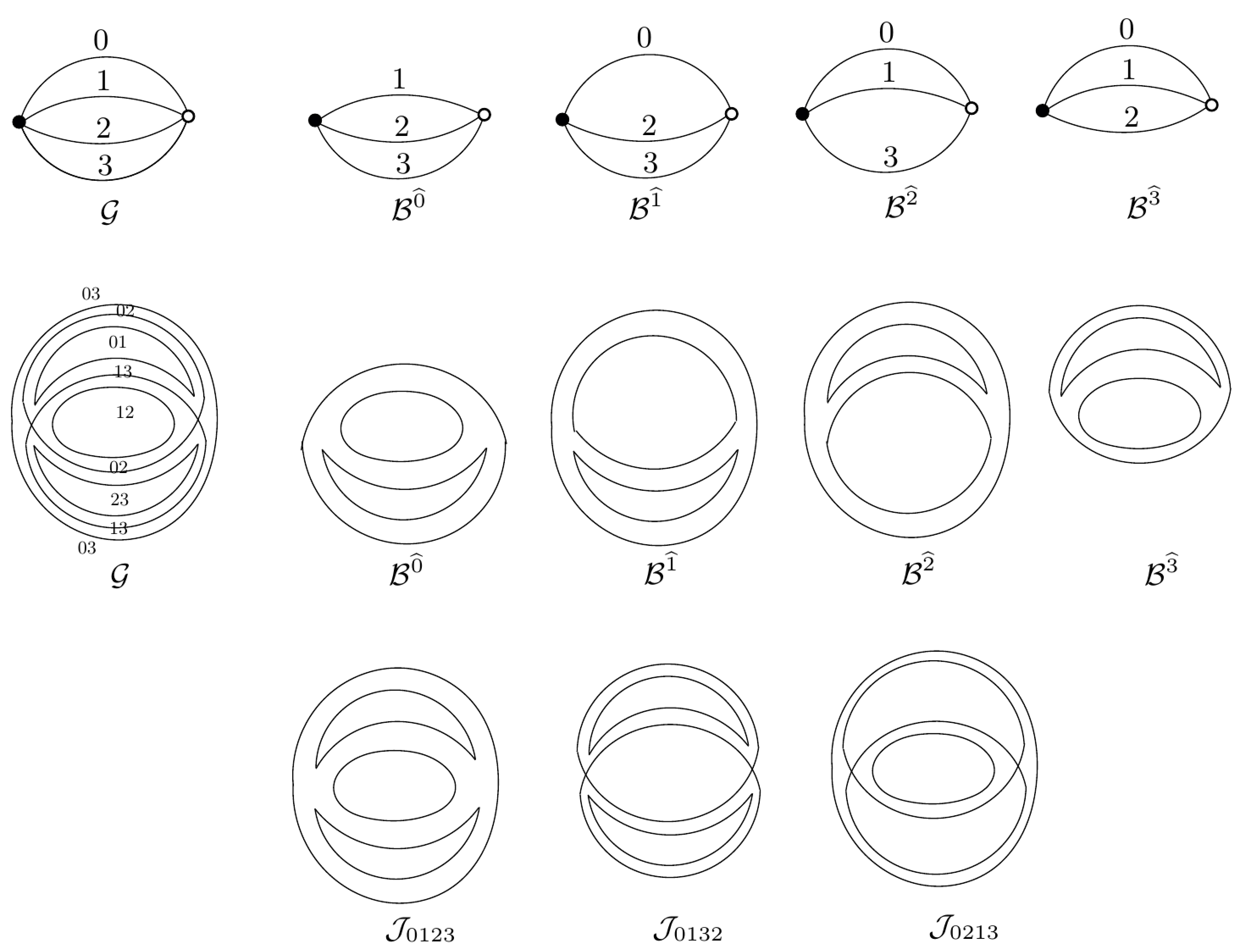}
\caption{
We show in the top row the colored representations of the elementary melon ${\cal G}$ in rank $3$ tensor bipartite colored model, its bubbles ${\cal B}^{\hat{0}}$, ${\cal B}^{\hat{1}}$, ${\cal B}^{\hat{2}}$, and ${\cal B}^{\hat{3}}$ from left to right.
In the middle row, we show the stranded representations of the same objects as the top row.
In the bottom row, we show in the stranded representation, the jackets of the elementary melon in rank $3$ tensor bipartite colored model.
}
\label{fig:bubblesjackets}
\end{minipage}
\end{figure}

\paragraph{Gurau degree.}

From a tensor model perspective, jackets play a crucial role in the large $N$ expansion of colored tensor models, as they define the so-called Gurau degree, which is the parameter that governs the large $N$ expansion. For completeness, we introduce the Gurau degree of a graph $\cal G$ as follows:
\begin{definition}
given colored graph $\cal G$ and the set of its its jackets, we define a combinatorial invariant, called {\textit {Gurau degree}}, as the sum of genera of all jackets of $\cal G$.
\begin{equation}
\omega({\cal G}) = \sum_{\cal J} g_{\cal J}.
\end{equation}
\end{definition}
It is easy to see that $\omega$ is a non-negative integer. 

A remarkable feature of Gurau degree is that if $\omega =0$, then the $K({\cal G})$ is a topological sphere, although the converse is not always true. While in $d=2$ the degree equals the genus of the triangulation dual to $\cal G$, it is not a topological invariant for $d > 2$. However, it is an important quantity in tensor models, as the classification of graphs organized by the Gurau degree allows for a $1/N$ expansion where $N$ is the size of the tensors, just like the $1/N$ expansion of matrix models according to the genus.
We defer a more detailed discussion on the large $N$ expansion of colored tensor models to other literature \cite{GuraulargeN}.

%%%%%%%%%%%%%%%%%%%%%%%%%%%%%%%%%%%%%%%%%%%%%%%%%%%
\section{Heegaard splittings of $3$-manifolds} \label{sec:heegaardsplitting}
%%%%%%%%%%%%%%%%%%%%%%%%%%%%%%%%%%%%%%%%%%%%%%%%%%%

In this section we introduce some of the concepts that are pedagogical to understanding trisections and to which we will refer often in later sections of the paper, namely handle decomposition and Heegaard splittings. We will begin defining such constructions for ojects in theTOP category (specifically for three-dimensional topological manifolds in the case of Heegaard splittings), and we will restrict later to the PL category, which is the main focus of this work.

%%%%%%%%%%%%%%%%%%%%%%%%%%%%%%%%%%%%%%%%%%%%%%%%%%%
\subsection{Attaching handles} 
\label{sec:attachinghandles}
 %%%%%%%%%%%%%%%%%%%%%%%%%%%%%%%%%%%%%%%%%%%%%%%%%%%

A handle decomposition of a closed and connected topological $d$-manifold $M$ is a prescription for the construction of $M$ by subsequently attaching handles of higher index.
We can define a $i$-handle in $d$ dimensions as a topological $d$-ball ${\D}^d$ parametrized as $\D^i\times\D^{d-i}$ and is glued to a manifold $K$ along $\Sp^{i-1}\times\D^{d-i}$, i.e., there exist an orientation reversing homeomorphism from $\Sp^{i-1}\times\D^{d-i}$ to a subset of $\partial K$. An $i$-handle can therefore be viewed as the thickening of an $i$-dimensional ball (which we call \textit{spine}); we will refer to the boundary of this ball as the \textit{attaching sphere} of the handle. 
A $(d-i)$-ball intersecting the spine transversally, will be called compression disc, and its intersection with the boundary of the handle will be referred to as \textit{belt sphere}. Note that, unless the handle decomposition of a manifold includes at least one top dimensional handle, the result will always have a boundary.

\begin{figure}[h]
    \begin{minipage}[t]{0.9\textwidth}
      \centering
\def\svgwidth{0.9\columnwidth}
\centering
\includegraphics[scale=.24]{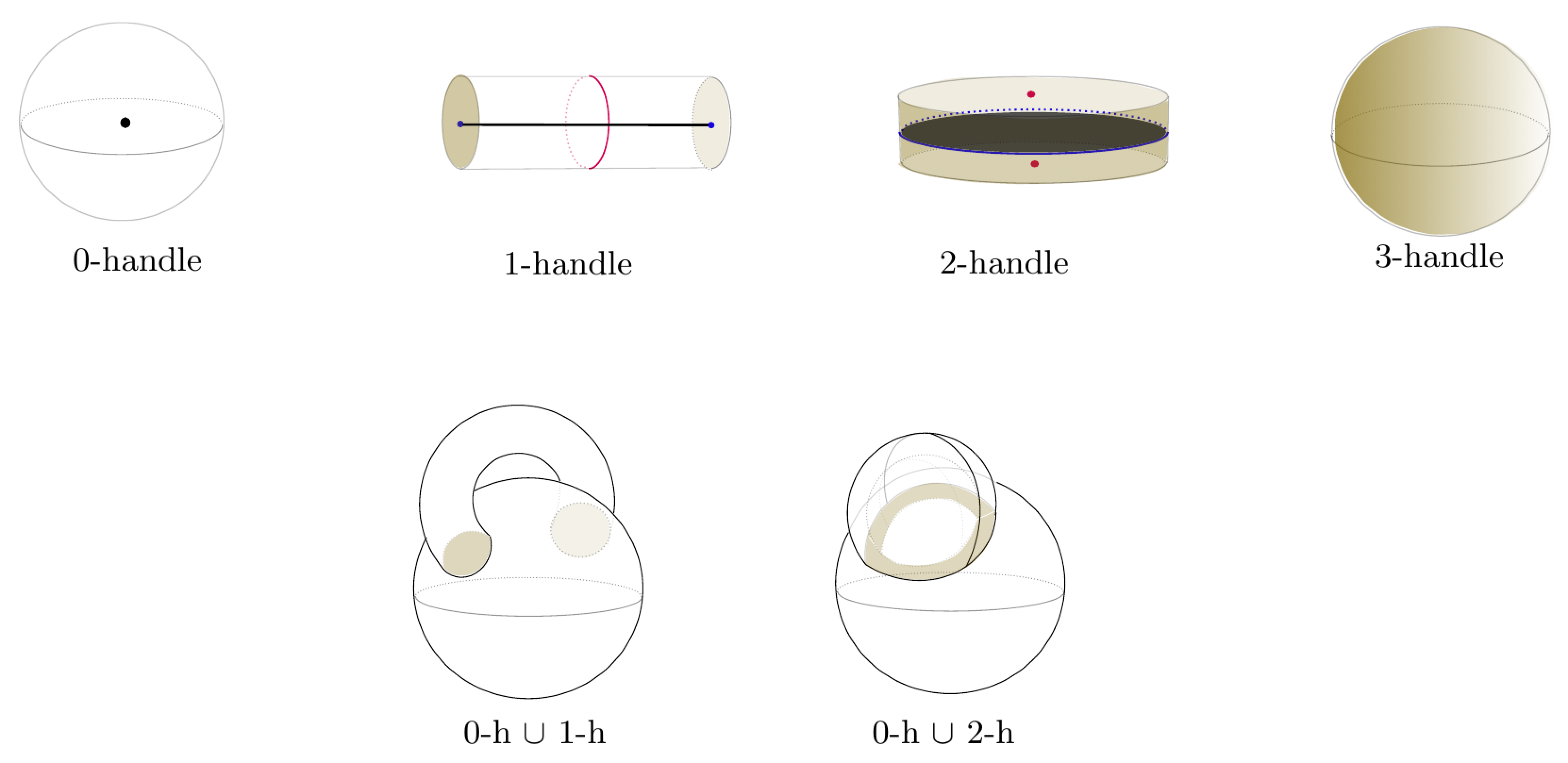}
\caption{Handles in three dimensions. 
The figure shows (from left to right) a three-dimensional a $0$-handle ($\D^3$), a $1$-handle ($\D^1 \times \D^2$ glued along $\Sp^0 \times \D^2$), a $2$-handle ($\D^2 \times \D^1$ glued along $\Sp^1 \times \D^1$), and a $3$-handle ($\D^3$ glued along $\Sp^2$). 
The gluing surfaces are colored in brown.
For $0$-handle, the spine is a point, for $1$-handle the spine is a line, and for $2$-handle the spine is a disc, all colored in solid black.
We also show in red belt spheres for $1$-handle and $2$-handle.
Lastly, we also illustrate how $1$-handles and $2$-handles attach to a $0$-handle at the gluing region which are colored in brown. 
}
\label{fig:handleanotomy}
\end{minipage}
\end{figure}

\begin{definition}\label{def:handlebody}
A \textbf{handlebody} $H$ (sometimes referred to as \textbf{1-handlebody}) is a manifold whose handle decomposition contains only a $0$-handle and $1$-handles. The genus $g$ of $H$ can be defined as the number of $1$-handles in its decomposition.
\end{definition}
Note that, if $H$ is three-dimensional, then $g$ equates the genus of $\partial  H$. Moreover, a manifold is a handlebody iff it collapses to a one-dimensional spine.

\begin{definition}\label{def:heegSplitt}
Let $H_1$ and $H_2$ be two three-dimensional handlebodies of genus $g$ and let $f$ be an orientation reversing homeomorphism from $\partial H_1$ to $\partial H_2$.  We call $(H_1, H_2, f)$ a \textbf{Heegaard splitting} of the 3-manifold $M$ if
\begin{equation}\label{eq:heegSplit}
M = H_1 \cup_f H_2\,.
\end{equation}
The common boundary $\Sigma = \partial H_1 = \partial H_2$ is then called a \textbf{Heegaard surface}.
\end{definition}

From now on, making use of a slight abuse of notation and for the sake of clarity, we will represent a Heegaard splitting with the triple $M=(H_1, H_2, \Sigma)$, by asserting $\Sigma = \partial H_1 = \partial H_2$ is provided by the homeomorphism $f$.

A Heegaard splitting allows us to represent a closed and compact 3-manifold\footnote{ 
In the present manuscript we focus on closed and orientable manifolds, nevertheless the definition of Heegaard splitting applies to a wider class of manifolds. In particular, we point out that in the case of non-orientable $3$-manifold, the Heegaard surface is non-orientable as well \cite{Rubinstein:1978}. Moreover, the definition of Heegaard splitting can be extended to manifold with boundary making use of compression bodies instead of handlebodies \cite{Meier:2016}. } $M$ via a surface and two sets of closed lines on the surface representing the homotopically inequivalent belt spheres of each handlebody. These curves, namely $\alpha$- and $\beta$-curves, encode the information on how $H_1$ and $H_2$ are glued to their boundaries. We refer to  $\alpha$- and $\beta$- curves collectively as 
\textit{attaching curves}.
The representation we just described is called a \textit{Heegaard diagram} for $M$. 
It is important to point out that cutting $\Sigma$ along the $\alpha$-curves or along the $\beta$-curves never leads to a disconnected surface, instead we obtain a $2$-sphere from which an even number of discs (two per each curve) have been removed. See fig.~\ref{fig:cutting}.

\begin{figure}[h]
    \begin{minipage}[t]{0.8\textwidth}
      \centering
\def\svgwidth{0.65\columnwidth}
\centering
\includegraphics[scale=0.4]{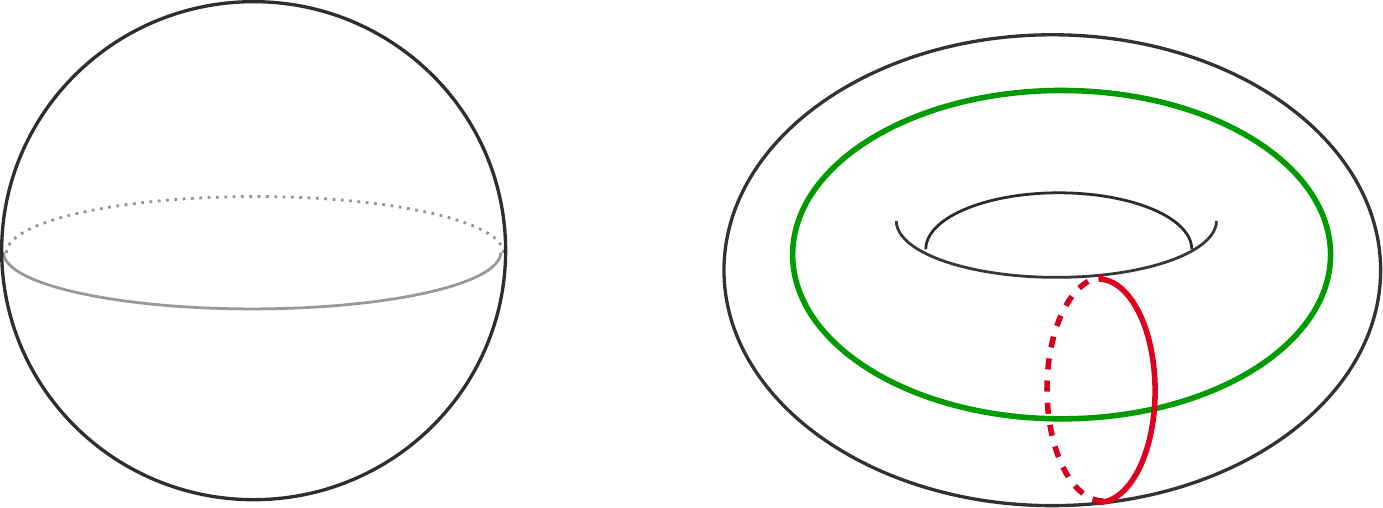}
\caption{Heegaard diagrams for $\Sp^3$. 
The picture shows two Heegaard diagrams (out of infinitely many with arbitraty genus $g$) for the sphere $\Sp^3$: for minimum genus $g=0$ (on the left), and for $g=1$ (on the right). 
The diagram with a Heegaard surface $g=0$ ($\Sp^2$) does not have any attaching curves.
The $\alpha$- and $\beta$-curves on the Heegaard surface $g=1$ ($\Sp^1 \times \Sp^1$)are shown in red and blue.
The toric Heegaard surface in the latter is the common boundary of two solid tori ($\D^2 \times \Sp^1$): we can view them such that inside this toric Heegaard surface, there is one solid torus, and there is another one outside. 
In particular, we can view the diagram as embedded in $\mathbb{R}^3$ plus a point at infinity (therefore in a space homeomorphic to $\Sp^3$). 
The outside solid torus is specified by the blue $\beta$-curve which is the boundary of a horizontally lying compression disc. 
Its spine would circle around the torus intersecting this compression disc transversally. 
Note that if one views the blue curve as the attaching sphere of a $2$-handle, the resulting manifold would be a topological ball $\D^3$, which can be easily capped-off to generate $\Sp^3$.}
\label{fig:heegaardsiagramS3}
\includegraphics[scale=0.4]{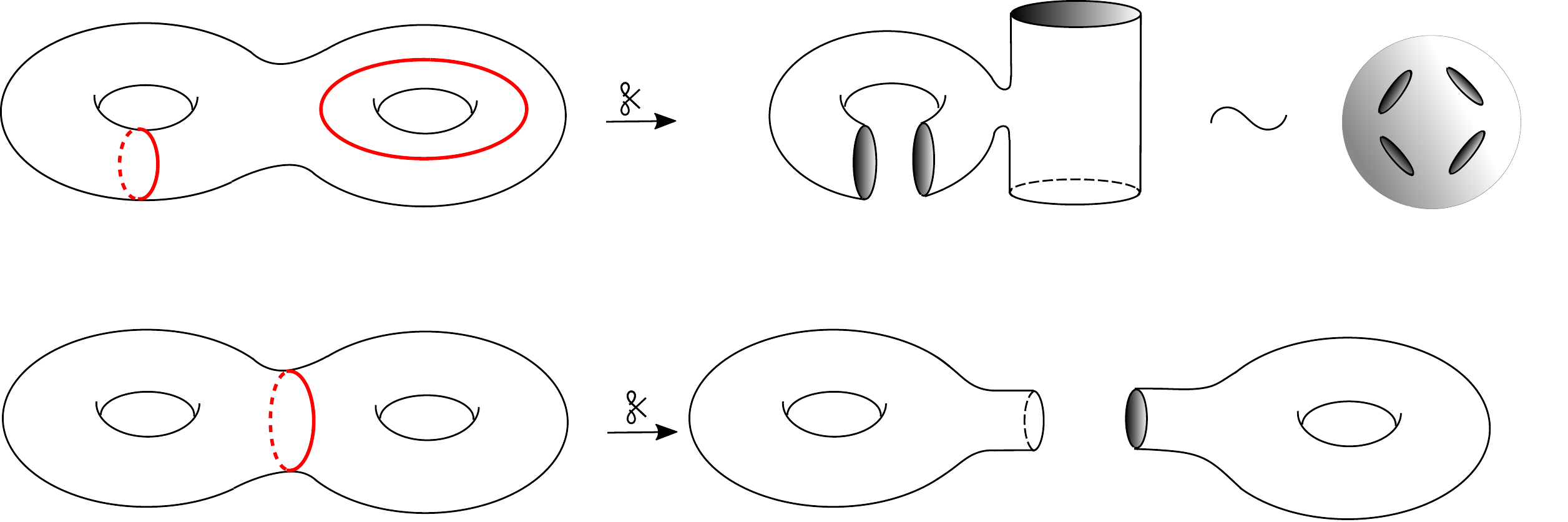}
\caption{Cutting along attaching curves. An example of viable attaching curves  (top) and a not viable one (bottom). Note that cutting along the latter separated the would-be Heegaard surface in two connected components.
}
\label{fig:cutting}
\end{minipage}
\end{figure}

We should point out the symmetry between $i$-handles and $(d-i)$-handles in $d$ dimensions. Since $\partial(\D^i\times\D^{d-i}) = (\Sp^{i-1}\times\D^{d-i})\cup(\D^{i}\times\Sp^{d-i-1})$, the difference between the two types of handles is which portion of the handle's boundary will glue to a onto a manifold and which part will remain  for other handles to be glued on. In particular, the $1$-handles and $3$-handles of $H_2$ in \eqref{eq:heegSplit}, glue onto $H_1$ as $2$-handles and $3$-handles respectively.

Finally, we point out that a Heegaard splitting of a 3-manifold is not unique, nevertheless two splittings of the same manifold (and the respective Heegaard diagrams), are always connected by a finite sequence of moves, called \textit{Heegaard moves}, consisting in:
\begin{itemize}
\item handle slides,
\item insertion/removal of  topologically trivial couples of $1$-handle and $2$-handle (i.e. glued in such a way that together they form a $3$-ball $\D^3$).
\end{itemize}

\begin{definition}
\label{def:heeg-genus}
Given a 3-manifold $M$, the minimal genus over all the possible Heegaard surfaces is a topological invariant. We call this number \textbf{Heegaard genus}.
\end{definition}

%%%%%%%%%%%%%%%%%%%%%%%%%%%%%%%%%%%%%%%%%%%%%%%%%%%%%%%%%%%%%%%%%%%%
\subsection{Connected sum and Heegaard splittings}
\label{sec:connected-sum}
%%%%%%%%%%%%%%%%%%%%%%%%%%%%%%%%%%%%%%%%%%%%%%%%%%%%%%%%%%%%%%%%%%%%

The \textit{connected sum} $M\,\sharp\, N$ of two $d$-manifolds $M$ and $ N$ is constructed by removing a topological $d$-ball $D^d$ from their interior and gluing $M$ and $N$ by identifying their boundaries (homemorphic to $S^{d-1}$). If $M$ and $N$ are both oriented,  there is a unique connected sum constructed through an orientation reversing map between the boundaries after the removal of the $d$-balls and the resulting manifold is unique up to homeomorphisms.

We define the \textit{boundary-connected sum} of two $d$-manifolds with boundaries, $M$ and $N$, as the manifold $ M\,\natural\,N$ obtained by performing a connected sum of their boundaries $\partial M\,\sharp\,\partial N$.
Note that the boundary connected sum of handlebodies $H_1$ and $H_2$ is a handlebody itself. 
The spine of $H_1\,\natural\,H_2$ can be represented by joining the two spines through a line or a point\footnote{The line connecting the two spines does not represent any handle, rather, the identification of two discs on the boundaries of the two handlebodies and, therefore, can be contracted to a point. Nevertheless is useful for the moment to consider it as a specification of the way the boundary-connected sum is performed.}.

A question that naturally arises is: given two $3$-manifolds $M$ and $N$, is there a way to represent a Heegaard splitting of $M\,\sharp\, N$ in terms of Heegaard splittings $M=\{H_1, H_2, \Sigma_{M}\}$ and $N=\{K_1, K_2, \Sigma_{N}\}$?
To answer this question, we consider 
a $3$-ball $\D_M$ (resp. $\D_N$) intersecting $\Sigma_M$ ($\Sigma_N$) transversally 
in one $2$-ball.
Since the result is unique up to homeomorphism, we can choose the ball to be removed as better suits us. 
Since the intersection of the $3$-ball with each element of the splittings is a ball of the appropriate dimension, the connected sum of $M\,\sharp\,N$ 
performed removing $\D_M$ and $\D_N$
will naturally give rise to a Haagaard splitting of the form $\{H_1\,\natural\,K_1,H_2\,\natural\,K_2,\Sigma_M\,\sharp\,\Sigma_N\}$.

\begin{figure}[h]
    \begin{minipage}[t]{0.8\textwidth}
      \centering
\def\svgwidth{0.5\columnwidth}
\centering
\includegraphics[scale=.15]{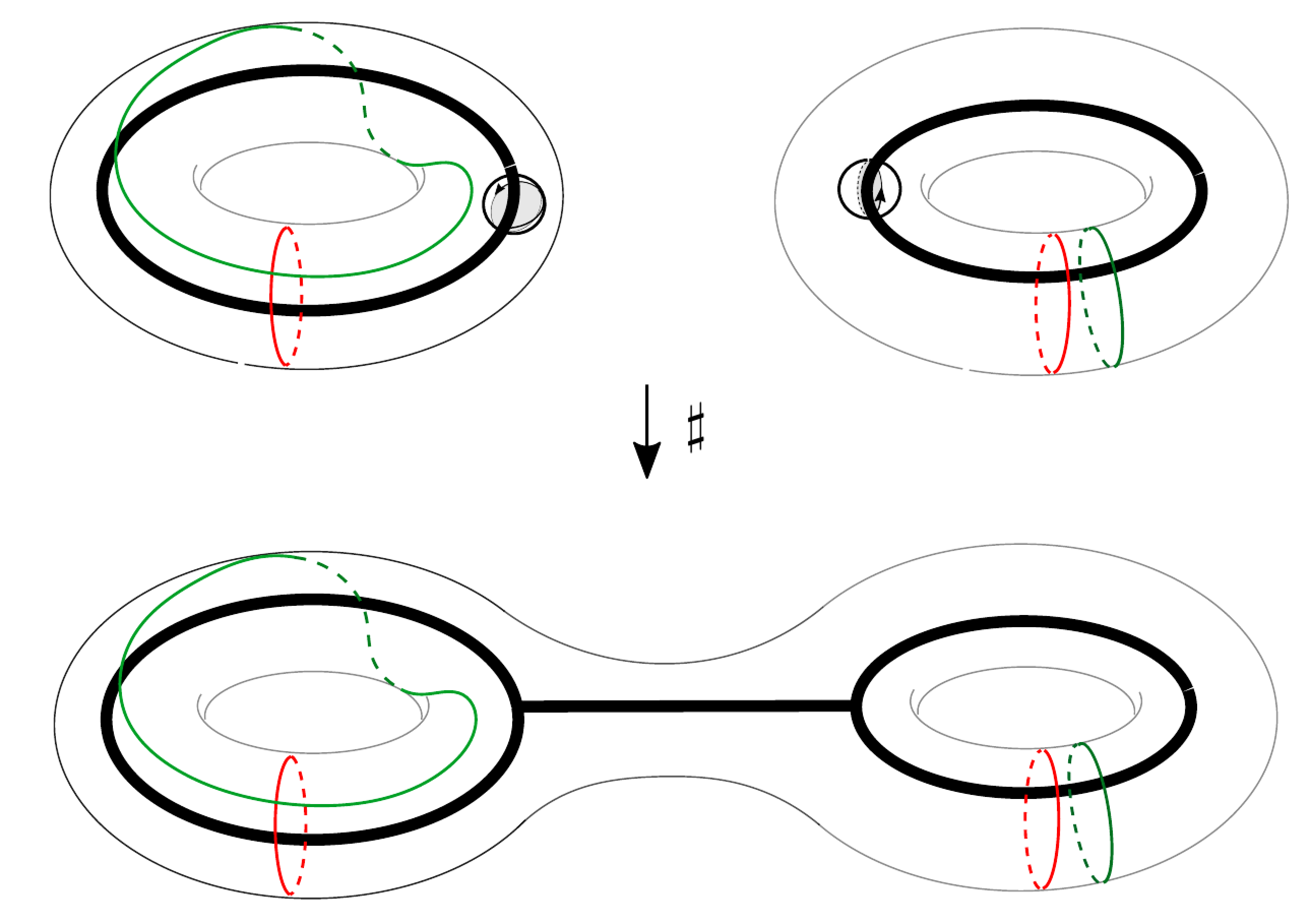}
\caption{Connected sum. We represent here the connected sum of $3$-manifolds via their Heegaard diagrams (Lens space $l(1, 1)$ on the left, $\Sp^1\times\Sp^2$ on the right). The picture shows the balls to be removed (castration) from the manifolds and how they intersect the Heegaard surfaces. Note that this correspond to the boundary-connected sum of the handlebodies. We show in the solid black thick line as the spine of the handlebodies, and the attaching curves are in red and in green. The arrows along the circles on the Heegaard surface show the reversed orientation.
}
\label{fig:connectsum}
\end{minipage}
\end{figure}

A few comments are in order. 
Firstly, we remark that the Heegaard splitting of closed manifolds is symmetric with respect to the two handlebodies. By this we mean that we can differentiate $H_1$ and $H_2$ through labels induced by the construction of the splitting, but ultimately their role (and therefore the role of $\alpha$-curves and $\beta$-curves) can be interchanged. For example, if we have in mind a handle decomposition of $M$ we can say that $H_1$ is given by the set of handles of index $i\leq 1$ while $H_2$ is given by the set of handles with $i\geq 2$ but, as we explained above, these characterizations can be easily switched for three-dimensional manifolds upon inverting the gluing order of the handles. If we induce the Heegaard splitting via a self-indexing Morse function $f$ via $f^{-1}(3/2)$, the role of the handlebodies can be switched upon sending $f$ to $-f+3$. In agreement with this feature of Heegaard splittings, we notice that $H_1$ and $H_2$ induce, as submanifolds of $M$, an opposite orientation of $\Sigma_M$. This might create an ambiguity in performing the connected sum $M\,\sharp\,N$ through the Heegaard splittings of $M$ and $N$ since reversing the orientation of one of the two Heegaard surfaces corresponds to a different boundary-connected sum of the handlebodies involved in the construction.
This ambiguity reflects the fact that the connected sum is unique only after specifying the orientation of the manifolds involved\footnote{An example of connected sum between three-dimensional manifolds in which reversing the orientation of one of the manifolds involved changes the result is $l(3,1)\,\sharp\,l(3,1)$, which is not homeomorphic to $l(3,1)\,\sharp\,\overline{l(3,1)}$, where $\overline{l(3,1)}$ represents $l(3,1)$ with the opposite orientation. A similar feature happens in four dimensions with the two possible connected sums of $\mathbb{CP}^2$ with itself.}.
Ultimately, a choice of $\alpha$- and $\beta$-curves for the two diagrams  corresponds to a choice of relative orientation for the two manifolds and specifies a connected sum constructed such that the set of $\alpha$-curves in $M\,\sharp\,N$ will be the union of the sets of $\alpha$-curves in $M$ and $\alpha$-curves in $N$ and similarly for the $\beta$-curves.

Secondly, we point out that the choice of a disc to be removed from each Heegaard surface during the connected sum operation is irrelevant, provided it does not intersect any attaching curve.
To convince oneself, it is sufficient to remember that cutting along all the $\alpha$-curves we obtain a pinched sphere on which any discs are equivalent, and similarly for the $\beta$-curves.

%%%%%%%%%%%%%%%%%%%%%%%%%%%%%%%%%%%%%%%%%%%%%%%%%%%
%\subsection{Smooth case and Morse theory formulation}
\subsection{Jackets as Heegaard surfaces} 
\label{sec:jackets-heeg}
%%%%%%%%%%%%%%%%%%%%%%%%%%%%%%%%%%%%%%%%%%%%%%%%%%%

Turning our attention to objects in the PL category, in particular to PL $3$-manifolds encoded in colored graphs, one might wonder whether there exists a natural formulation of Heegaard splittings in terms of combinatorial objects.
In \cite{Ryan:2011qm}, it is shown that the Riemann surfaces corresponding to the jackets of a rank-$3$ colored tensor model are Heegaard surfaces, and that if the corresponding triangulation is a manifold, then the triple $(K({\cal J}^{(ij, {\widehat{ij}})}), H^{(ij)}, H^{({\widehat{ij}})})$ is a Heegaard splitting of the triangulation. Although the complex structure of  the Riemann surfaces studied in \cite{Ryan:2011qm} was merely a consequence of the field content of the model examined, the Heegaard structure is purely combinatorial. In fact, this identification was already known in the crystallization theory literature, and led to the formulation of the concept of regular genus \cite{Gagliardi81}. Here, we revise such construction which will be of great importance in the following.

\begin{figure}[h]
     \begin{minipage}[t]{0.7\textwidth}
      \centering
\def\svgwidth{0.5\columnwidth}
\includegraphics[scale=.12]{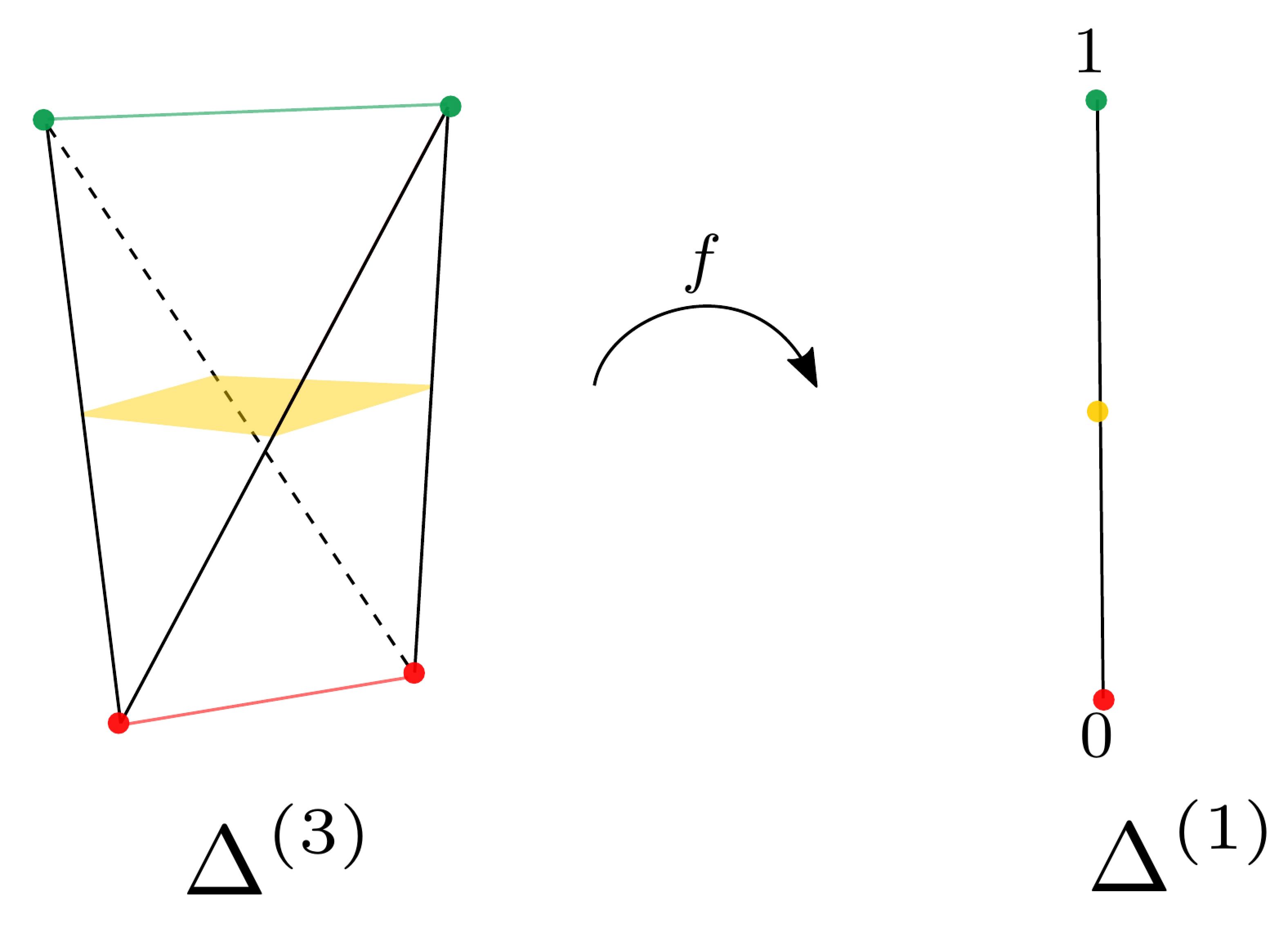}
\caption{Mapping of a 3-simplex $\simpT$.  $f$ is a map from a $d$-simplex 
$\Delta^{(d)}$ to a $d-2$-simplex 
$\Delta^{(d-2)}$; here, it is for $d=3$. As we will explain later, the $0$-skeleton of the first barycentric 
subdivision of 
$\Delta^{(d-2)}$ minus the $0$-skeleton of 
$\Delta^{(d-2)}$ itself defines the splitting and its preimage represents, in this case, a square in the Heegaard surface. The $0$-skeleton of 
$\Delta^{(d-2)}$ define the spine of the handlebodies. By linearly extending this identification we can reconstruct the entire tetrahedron.}
\label{fig:tetrahedronMorse}
\end{minipage}
\end{figure}

Let us consider a three-dimensional connected orientable closed manifold $M$ dual to a rank $3$ colored tensor model graph $\cal G$ which is introduced in section~\ref{sec:tensormodels}: $M=K({\cal G})$.
For every 3-simplex $\Delta^{(3)} \in  \T$, we consider a function $f$ mapping $\Delta^{(3)}$ onto $[0, 1]\in\mathbb{R}$ as in fig.~ \ref{fig:tetrahedronMorse}. 
We recall that in every $\Delta^{(3)}$, each edge is uniquely defined by a pair of colors $\{i, j\}$, where $i, j \in 0,1,2,3$. 
We can construct $f$ such that the preimage of the points $\{\{0\} , \{1\}\}$ under $f$ identifies everywhere in  $\T$
two non-intersecting edges of given colors $f^{-1}(0)=\{i, j\}$, $f^{-1}(1)=\{k, l\}$, $i, j, k, l\in\{0, 1, 2, 3\}$,
$i \ne j \ne k \ne  l$, while the preimage of any point in $(0, 1)$ gives us a square cross section of each $\Delta^{(3)}$.
We can glue these squares via their boundaries according to the colors, obtaining a surface embedded in $M$.
The surface $\Sigma$ constructed in this way is a realization of a quadrangulation represented by one the jackets ${\cal J}_{\{i, j\}\{k, l\}}$ of $\cal G$ and is dual to the corresponding matrix model obtained by removing the strands $\{i, j\}$ and $\{k, l\}$
\footnote{ Here we employ a slightly different notation for jackets with respect to the one introduced in section~\ref{sec:topcolgraph}. Notice that, if $d=3$, the set of bicolored cycles in the jacket is lacking only two elements from the set of bicolored cycles of $\cal G$. Thus, by writing ${\cal J}_{\{i, j\}\{k,l\}}$ we mean that $\{i, j\}\notin\{\{\eta_i, \eta_{i+1}\}\forall i\in \mathbb{Z}_4\}$ and similarly for $\{k,l\}$.
This notation is especially convenient in order to understand jackets in terms of Heegaard splittings.
}. 
Since the graph $\cal G$ is closed, bipartite and connected, so is $\cal J$. 
The surface $\Sigma$ therefore splits $M$ in two manifolds $H_0$ and $H_1$ with their common boundary being the surface $\Sigma$ itself. It is easy to notice that the spine of each $H_i$ is one-dimensional. In fact, it is given by the set of edges $f^{-1}(i)$ for $i\in\{0, 1\}$. Thus, $H_0$ and $H_1$ are handlebodies and a jacket $\cal J$ identifies a Heegaard surface $\Sigma$.

\begin{figure}[h]
    \begin{minipage}[t]{0.8\textwidth}
      \centering
\def\svgwidth{0.4\columnwidth}
\centering
\includegraphics[scale=.12]{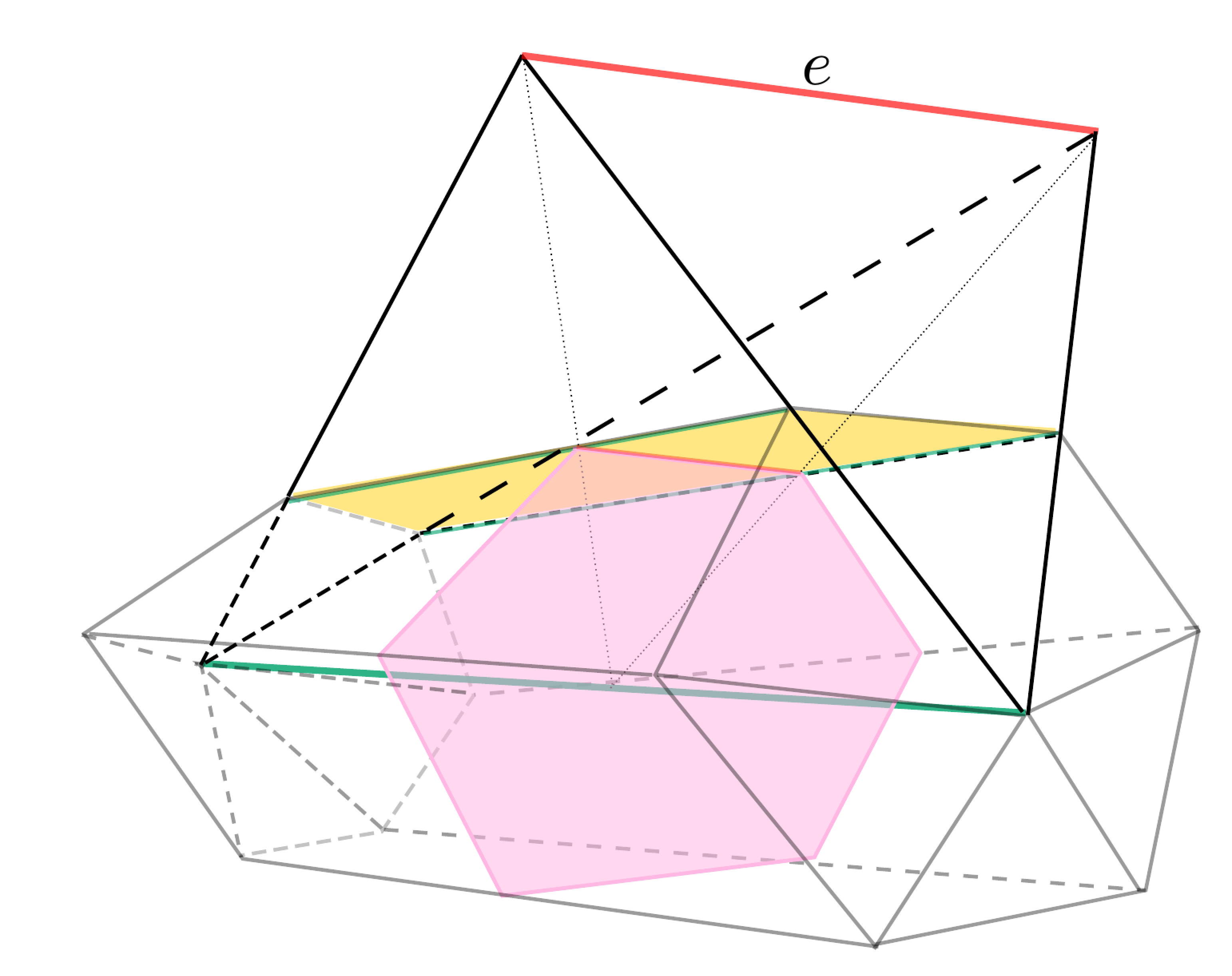}
\caption{A compression disc, an attaching curve and a spine of three-dimensional handlebody. The central green line is a single edge in the triangulation, shared by six $3$-simplices, which is to be identified as a spine of a three-dimensional handlebody. The rectangular faces (one of them colored in yellow) of the hexagonal prism are part of the Heegaard surface $\Sigma$. A compression disc is depicted in pink (a hexagon) and its intersection with the Heegaard surface is an attaching curve. As illustrated above, a segment of an attaching curve can be viewed as a projection on $\Sigma$ of the edge opposite to the spine ($e$) in each $3$-simplex.}
\label{fig:triangulated-compression}
\end{minipage}
\end{figure}

Once we identified a Heegaard splitting of $M$ in terms of combinatorial objects (i.e., via jackets) as described above, it is natural to wonder how the attaching curves arise. 
As we can see from fig.~\ref{fig:triangulated-compression}, for every edge $e$ in the spine of $H_i$ we can construct a compression disc in the shape of a polygon. The intersection of the compression disc with each of the tetrahedra sharing $e$ is a triangle (fig.~\ref{fig:triangulated-compression})
and the disc is therefore a polygon whose sides are as many as the number of the $3$-simplices sharing $e$.
Importantly, we see that the perimeter of the polygon is the projection of the edges opposite to the spine on the Heegaard surface.
This implies that, given the quadrangulation of $\Sigma$ defined by a jacket $\cal J$, we can draw the  attaching curves by connecting the opposite edges of each square.

A remark is in order. 
The construction of attaching curves drawn on a Heegaard surface we described so far is, in a way, overcomplete since it provides us with a redundant description; we will end up having many copies of the same curve (i.e. homotopically equivalent ones) and furthermore, curves that are homotopic to a point (which therefore should not be considered since they describe the attaching of a sphere). 
It is sufficient to consider only one representative of each equivalence class\footnote{We stress that an $\alpha$-curve and a $\beta$-curve can be homotopically equivalent and that the operation of modding out the equivalence class should be performed in either set independently.}, nevertheless, when constructing a trisection later on, a bit of care will be needed to convince ourselves
that such freedom does not imply any ambiguity in the construction.

For completeness, we compute here the genus of the Heegaard surface obtained with the procedure described above. Since $\Sigma = K({\cal J})$, we have that the genus $g_\Sigma$ is given by:
\begin{equation}
g_\Sigma = \frac{2-\chi_{\cal J}}{2} = \frac{1}{2}\left(2-V_{\cal J}+E_{\cal J}-F_{\cal J}\right)\,,
\end{equation}
where $\chi_{\cal J}$ is the Euler characteristic of $K({\cal J})$ and $V_{\cal J}$, $E_{\cal J}$ and $F_{\cal J}$ the vertices, edges and bicolored paths in $\cal J$ respectively. Since the vertices and the edges in the jacket are the same as those in $\cal G$, and they satisfy $2 E_{\cal G} = 4 V_{\cal G}$, we can further write:
\begin{equation}
\label{eq:jacket-genus-reduced}
g_\Sigma = 1 + \frac{1}{2}V_{\cal G} - \frac{1}{2}F_{\cal G} \geq 0\,.
\end{equation}

%%%%%%%%%%%%%%%%%%%%%%%%%%%%%%%%%%%%%%%%%%%%%%%%%%%%%%%%%%%%%%%%%%%%
\subsection{More Heegaard splittings in  triangulable manifolds} 
\label{sec:more-heeg-split}
%%%%%%%%%%%%%%%%%%%%%%%%%%%%%%%%%%%%%%%%%%%%%%%%%%%%%%%%%%%%%%%%%%%%

\begin{figure}[h]
    \begin{minipage}[t]{0.7\textwidth}
      \centering
\def\svgwidth{0.5\columnwidth}
\centering
\includegraphics[scale=0.5]{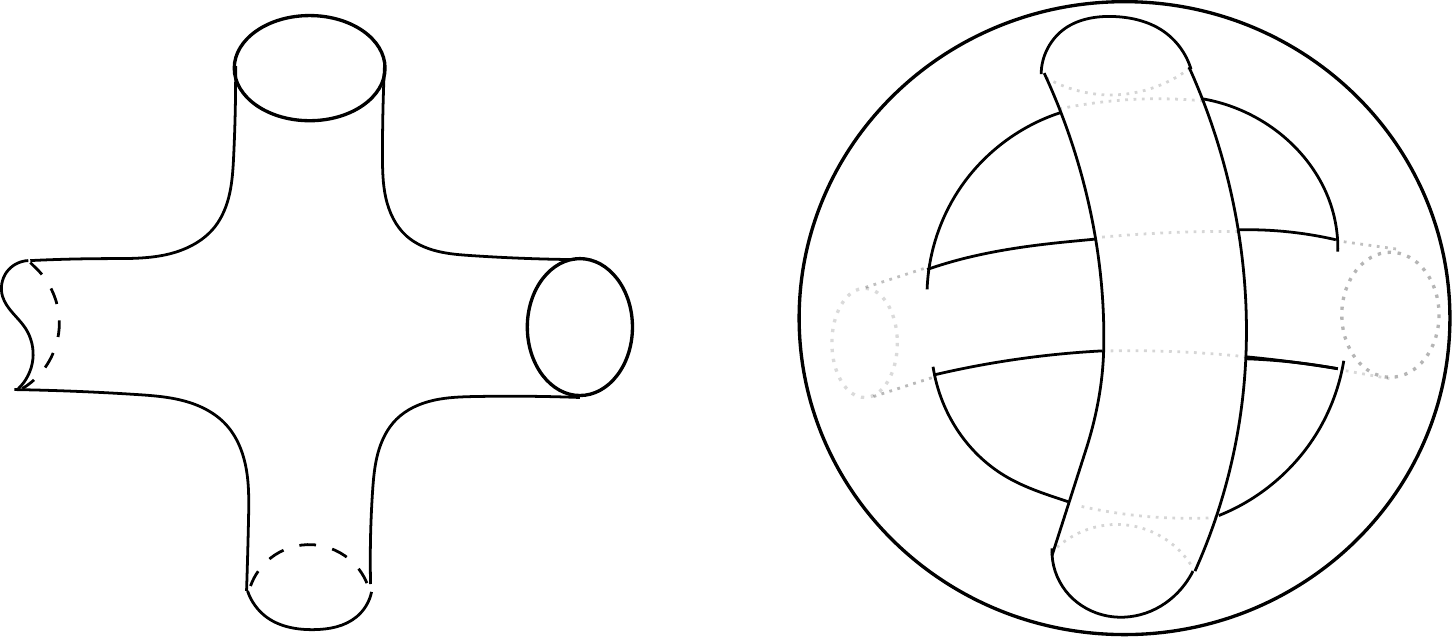}
\caption{A schematic representation of the two handlebodies obtained for a single tetrahedron using the $1$-skeletons of $\cal T$ and $\cal T^*$. 
}
\label{fig:1-skeleton-handlebodies}
\end{minipage}
\end{figure}

For later convenience, we illustrate now a different construction of Heegaard splittings from which we will borrow its technique later on.
Consider a triangulation $\cal T$  of a PL manifold $M$ and its dual cellular decomposition $\cal T^*$. 
The $1$-skeletons of $\cal T$ and $\cal T^*$ are perfect candidates to be identified as spines of $H_1$ and $H_2$. 
In fact, $H_1$ and $H_2$ are nothing but tubular neighborhoods of these two $1$-skeletons,
providing an orientation reversing homeomorphism between their boundaries, we can identify the Heegaard surface (see fig.~\ref{fig:1-skeleton-handlebodies}).
Note that if $\cal T$ is the triangulation associated with a colored graph $\cal G$, the 1-skeleton of $\cal T^*$ is the graph itself.
The Heegaard genus $g$ is then given by
\begin{equation}
\label{eq:trisec-genus-dual-graphs}
g = E_{\T} - V_{\T} +1 = E_{\T^*} - V_{\T^*} +1 
= V_{\T^*}+1
\,,
\end{equation}
where $E_{\T}$ ($E_{\T^*}$) and $V_{\T}$ ($V_{\T^*}$) are the number of edges and vertices in the $1$-skeletons of $\T$ ($\T^*$). 
The genus, then, corresponds to the number of independent loops of each graph, i.e., the dimension of the first homology groups of the $1$-skeletons. 
Note that, by definition, $V_{\T^*}$ corresponds to the number of tetrahedra in $\T$, which we denote by $F_{\Delta^{(3)}_{\T}}$, while $E_{\T^*}$ is the number of triangles in $\T$, which we denote by $F_{\Delta^{(2)}_{\T}}$.
Therefore eq.~\ref{eq:trisec-genus-dual-graphs} leads to the following identity for the Euler characteristic of $M$:
\begin{equation}
\chi(M) = V_{\T} - E_{\T} + F_{\Delta^{(2)}_{\T}} - F_{\Delta^{(3)}_{\T}}= 0\,,
\end{equation}
which is always true for odd-dimensional manifolds due to the Poincar\'e duality \cite{Nakahara}.

Finally, if we compare the present construction with the one obtained in sec.~\ref{sec:jackets-heeg} we can find from eq.~\eqref{eq:jacket-genus-reduced} (and using the fact that $\cal G$ is the 1-skeleton of $\T^*$):
\begin{equation}
\begin{split}
g_\Sigma - g &= -\frac{1}{2}\left(V_{\T^*}+F_{\cal J}\right) < 0\,.
\end{split}
\end{equation}

Therefore, we notice that $g_\Sigma < g$, which imply that this way of constructing a Heegaard splitting is actually less advantageous, as the topological invariant is the minimum genus of Heegaard surface.

%%%%%%%%%%%%%%%%%%%%%%%%%%%%%%%%%%%%%%%%%%%%%%%%%%%%%%%%%%%%%%%%%%%%
\section{Trisections}
\label{sec:trisections}
%%%%%%%%%%%%%%%%%%%%%%%%%%%%%%%%%%%%%%%%%%%%%%%%%%%%%%%%%%%%%%%%%%%%

A construction analogous to a Heegaard splitting (in three-dimensions) can be performed in four-dimensions, which is called \textit{trisection} \cite{GayKirby}. 
Note that one can perform trisections for non-orientable manifolds \cite{SpreerTillmann:2015}, however in this paper, we restrict ourselves to orientable manifolds.
Again, we start by working within the TOP category. We will restrict to objects in the PL category later in the paper.

\begin{figure}[h]
    \begin{minipage}[t]{0.8\textwidth}
      \centering
\def\svgwidth{0.7\columnwidth}
\begin{subfigure}{0.4\textwidth}
\begin{tikzpicture}
\draw (0, 0) circle (2.5cm);
\filldraw (0, 0) circle (2pt);
\draw (0:0) -- (90:2.5);
\draw (0:0) -- (-30:2.5);
\draw (0:0) -- (210:2.5);
\node at (150:1.25) {${X}_1$};
\node at (30:1.25) {${X}_2$};
\node at (-90:1.25) {${X}_3$};
\node[right] at (90:1.25) {${H}_{12}$};
\node[below] at (-30:1.25) {${H}_{23}$};
\node[left] at (210:1.25) {${H}_{13}$};
\node[below] at (0,0) {$\Sigma$};
\end{tikzpicture}
\caption{}
\end{subfigure}
\begin{subfigure}{0.4\textwidth}
    \begin{minipage}[t]{1\textwidth}
      \centering
\def\svgwidth{0.7\columnwidth}
\centering
\includegraphics[scale=.15]{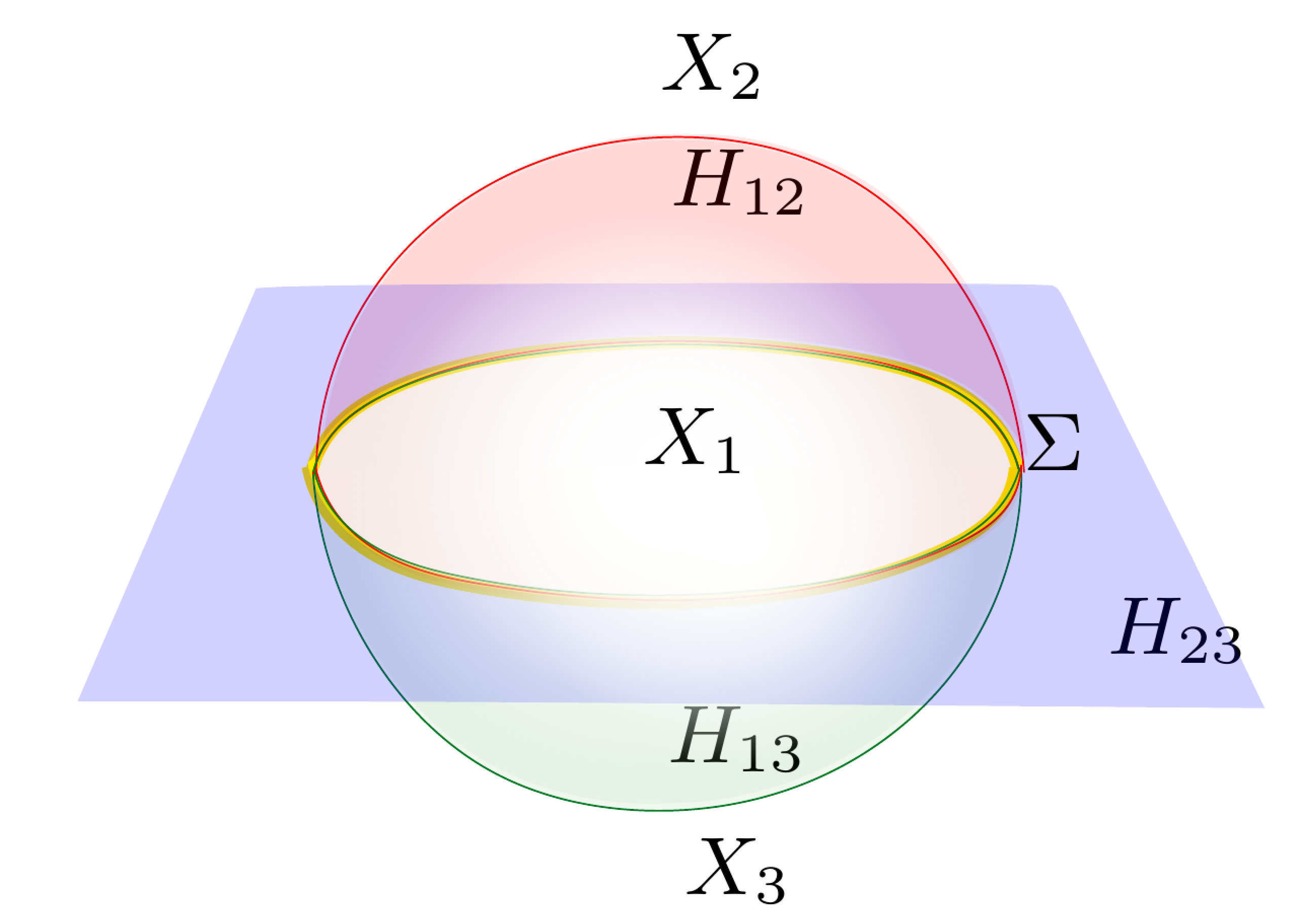}
\end{minipage}
\caption{}
\label{fig:trisec-3d-diagram}
\end{subfigure}
\caption{(a) Schematic representation of the trisection of a $4$-manifold $M$. 
$X_1$, $X_2$, and $X_3$ are four-dimensional submanifolds whose boundaries are $H_{12} \cup H_{13}$, $H_{12} \cup H_{23}$, and $H_{13} \cup H_{23}$ respectively. 
$H_{12}$,  $H_{13}$, and $H_{23}$ are three-dimensional handlebodies and $\Sigma$ is a Heegaard surfaces for the pairs $\{H_{12},  H_{13}\}$, $\{H_{12}, H_{23}\}$, and $\{H_{13}, H_{23}\}$.
$\Sigma$ is called the central surface of a trisection.
(b) Lower three-dimensional representation of the trisected manifold. $H_{12}$ is represented in half of $\Sp^2$ colored in red, $H_{13}$ in green, and $H_{23}$ in blue surface. Inside of $\Sp^2$, namely $\D^3$ bounded by $H_{12}$ and $H_{13}$ represents $X_1$, whereas the outside space above (below) $H_{12}$ ($H_{13}$) and $H_{23}$ represents $X_2$ ($X_3$). The yellow circle represents $\Sigma$. This representation of a trisection is very crude and strictly speaking wrong (e.g., keep in mind that $M$ ought to be closed); obviously, all the submanifolds and the given manifold itself are in principle general, however in this representation, they are depicted in a very special way.
}
\label{fig:trisection}
\end{minipage}
\end{figure}

\begin{definition}\label{def:trisection}
Let $M$ be a closed, orientable, connected 4-manifold. A trisection of $M$ is a collection of three submanifolds  $X_1, X_2, X_3 \subset M$ such that:
\begin{itemize}
\item each $X_i$ is a four-dimensional handlebody of genus $g_i$,
\item the handlebodies have pairwise disjoint interiors $\partial X_i\supset (X_i\cap X_j)\subset\partial X_j$ and $M= \cup_i X_i$,
\item the intersection of any two handlebodies  $X_i\cap X_j = H_{ij}$ is a three-dimensional handlebody,
\item the intersection of all the four-dimensional handlebodies $ X_1 \cap X_2 \cap X_3$ is a closed connected surface $\Sigma$ called \textbf{central surface},
\end{itemize}
for $i, j\in{1, 2, 3}$.
\end{definition}
Note that any two of the three-dimensional  handlebodies $\{H_{ij}, H_{jk}, \Sigma\}$ form a Heegaard splitting of $\partial X_j$.

In four dimensions, we have the following  \textit{extending theorem} \cite{montesinos}:

\begin{theorem}\label{th:extending-th}
Given a four-dimensional handlebody $H$ of genus $g$ and an homeomorphism $\phi:\partial H\rightarrow\partial H$, there exists a unique homeomorphism $\Phi:H\rightarrow H$ which extends $\phi$ to the interior of $H$.
\end{theorem}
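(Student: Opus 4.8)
The plan is to reduce the statement to the standard model for a four-dimensional genus-$g$ handlebody, namely $H\cong\natural^g(\Sp^1\times\D^3)$ with boundary $\partial H\cong\sharp^g(\Sp^1\times\Sp^2)$, and to exploit the handle decomposition of $H$ consisting of a single $0$-handle together with $g$ $1$-handles. Each $1$-handle $\D^1\times\D^3$ contributes a cocore $3$-ball $B_i$ whose boundary $S_i=\partial B_i$ is the belt $2$-sphere sitting inside $\partial H$; the collection $\{S_i\}_{i=1}^g$ is a complete meridian system, in the sense that cutting $H$ along the $B_i$ yields a single $4$-ball $\D^4$. Since $H$ collapses onto its one-dimensional spine, the entire homeomorphism type of $H$ is controlled by how a map behaves on this meridian system. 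I would phrase both existence and uniqueness of $\Phi$ relative to this structure; as usual ``unique'' is to be read up to isotopy rel $\partial H$, since composing any extension with a homeomorphism supported in the interior and equal to the identity on $\partial H$ manifestly produces another extension.

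For existence I would follow the Laudenbach--Po\'enaru strategy. First, given $\phi:\partial H\to\partial H$, consider the images $\phi(S_i)$, which are embedded $2$-spheres in $\sharp^g(\Sp^1\times\Sp^2)$. The crucial step is to isotope $\phi$ so that $\{\phi(S_i)\}$ is again a complete system of meridian spheres: this uses Laudenbach's theorem that two embedded, homotopic $2$-spheres in $\sharp^g(\Sp^1\times\Sp^2)$ are ambient isotopic, together with control of the induced action $\phi_*$ on $H_2(\partial H)$ and $\pi_1(\partial H)$, so that after composing with finitely many standard handle-slide and sphere-twist homeomorphisms the image system can be matched, up to permutation and orientation, with the original $\{S_i\}$. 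Once $\phi$ carries the meridian spheres to meridian spheres, it extends across each $1$-handle $\D^1\times\D^3$ (the map is already prescribed on the relevant $\Sp^0\times\D^3$ and $\D^1\times\Sp^2$ pieces of the handle boundary), and the remaining task is to extend a self-homeomorphism of the boundary $3$-sphere of the leftover $\D^4$ over the ball; this last extension is furnished by the Alexander trick (coning), yielding the global $\Phi$.

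For uniqueness up to isotopy, suppose $\Phi_1$ and $\Phi_2$ both restrict to $\phi$ on $\partial H$. Then $\Psi=\Phi_2^{-1}\circ\Phi_1$ is a self-homeomorphism of $H$ equal to the identity on $\partial H$, and it suffices to show $\Psi$ is isotopic to $\mathrm{id}_H$ rel $\partial H$, i.e.\ that $\pi_0$ of the group of self-homeomorphisms of $H$ fixing the boundary is trivial. Because $\Psi|_{\partial H}=\mathrm{id}$ it preserves the belt spheres $S_i\subset\partial H$, so it may be isotoped rel $\partial H$ to fix each meridian ball $B_i$ pointwise; cutting along the $B_i$ then reduces $\Psi$ to a homeomorphism of $\D^4$ which is the identity on $\partial\D^4=\Sp^3$, and such a map is isotopic to the identity rel boundary, again by the Alexander trick. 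I expect the genuinely hard part to be the realization step in the existence argument---showing that the images $\phi(S_i)$ can always be normalized to a standard meridian system---since this is exactly where the deep three-manifold input (Laudenbach's isotopy theorem for $2$-spheres in $\sharp^g(\Sp^1\times\Sp^2)$, and the analysis of the mapping class group of this boundary) is needed; the handle-by-handle extension and the two invocations of the Alexander trick are then comparatively routine.
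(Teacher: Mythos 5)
The paper does not actually prove Theorem~\ref{th:extending-th}: it is quoted as a known result and attributed to Montesinos \cite{montesinos}, whose argument in turn rests on the Laudenbach--Po\'enaru theorem that every self-homeomorphism of $\sharp^g(\Sp^1\times\Sp^2)$ extends over $\natural^g(\Sp^1\times\D^3)$. Your sketch reconstructs exactly that standard route (normalize the image of a complete meridian-sphere system using Laudenbach's ``homotopy implies isotopy'' theorem for embedded $2$-spheres in $\sharp^g(\Sp^1\times\Sp^2)$, extend handle by handle, finish with the Alexander trick), so for the existence half you are reproving the cited result rather than diverging from the paper, and your outline is sound. Two caveats. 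First, you are right that the stated uniqueness is literally false and must be read up to isotopy; note, however, that for the use the paper makes of the theorem (well-definedness of the cap-off, hence of the trisection) only existence is needed, since any two gluings differ by a boundary homeomorphism that extends, making the glued-up manifolds homeomorphic. Second, the one genuinely delicate step in your uniqueness half is the claim that a self-homeomorphism $\Psi$ of $H$ fixing $\partial H$ can be isotoped rel $\partial H$ to fix the cocore balls $B_i$ pointwise: this requires uniqueness up to isotopy rel boundary of properly embedded $3$-balls in a four-dimensional handlebody with prescribed boundary sphere, which is a nontrivial four-dimensional input (in the smooth category it borders on open territory; in the TOP/PL setting of this paper it is handled by Alexander/Schoenflies-type arguments but deserves an explicit citation rather than being folded into ``it may be isotoped''). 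With that step either justified or the uniqueness claim weakened to what the paper actually uses, your proposal is a faithful account of the standard proof.
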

It implies that closed $4$-manifolds are determined by their handles of index $i\leq 2$ and that there is a unique cap-off determining the remaining $3$- and $4$-handles (recall the symmetric roles of $i$-handles and $(4-i)$-handles in four dimensions). However, in the context of trisections, the extending theorem plays an even bigger role, for it can be applied to each handlebody $X_i$ in definition \ref{def:trisection}. Consequenstly, a trisection of $M$ is fully determined by the three three-dimensional handlebodies $H_{ij}$ which, in turn, can be represented by means of Heegaard diagrams.

Hence, similarly to the three-dimensional case of Heegaard splittings, 
a trisection can be represented with a diagram consisting of the central surface\footnote{From now on we may adopt the term ``central surface'' for both the case of trisections and Heegaard splittings 
when a feature is clearly common to the central surface of a trisection and the Heegaard surface of a Heegaard splitting.}
$\Sigma$ and three sets of curves: $\alpha$-curves, $\beta$-curves and $\gamma$-curves (collectively, attaching curves).
These curves are constructed, as before, by means of compression discs and represent the belt spheres of the $1$-handle of each of the three-dimensional handlebodies $H_{ij}$. 
A trisection diagram therefore combines the three Heegaard diagrams for $\partial X_i$ into a single diagram. Therefore, one can say that the construction of trisection, together with the extending theorem, allows us to study four-dimensional topology, within a two-dimensional framework. 
Again, infinitely many possible trisection diagrams are viable for a given manifold and they are connected by a finite sequence of moves generalizing Heegaard moves. 
We therefore have the following:

\begin{definition}
\label{def:trisection-genus}
Given a 4-manifold $M$, the minimal genus over 
all the possible central surfaces trisecting $M$ is a topological invariant. We call this number \textbf{trisection genus}.
\end{definition}

We remark that the connected sum of two $4$-manifolds $M=\{H_{12}, H_{23}, H_{13}, \Sigma_M\}$ (defining implicitly the handlebodies $X_1$, $X_2$ and $X_3$) and $N=\{K_{12}, K_{23}, K_{13}, \Sigma_N\}$ (defining $Y_1$, $Y_2$ and $Y_3$) can be constructed in analogy to the three-dimensional case by removing $4$-balls which intersect all the elements of each trisection in balls of the appropriate dimension. The resulting manifold will 
support
a trisection of the form $M\,\sharp\,N = \{H_{12}\,\natural\,K_{12}, H_{23}\,\natural\,K_{23}, H_{13}\,\natural\,K_{13}, \Sigma_M\,\sharp\,\Sigma_N\}$ implicitly defining the handlebodies $X_1\,\natural\,Y_1$, $X_2\,\natural\,Y_2$ and $X_3\,\natural\,Y_3$.

%%%%%%%%%%%%%%%%%%%%%%%%%%%%%%%%%%%%%%%%%%%%%%%%%%%%%%%%%%%%%%%%%%%%
\subsection{Stabilization}
\label{sec:stab}
%%%%%%%%%%%%%%%%%%%%%%%%%%%%%%%%%%%%%%%%%%%%%%%%%%%%%%%%%%%%%%%%%%%%

\begin{figure}[h]
    \begin{minipage}[t]{0.9\textwidth}
      \centering
\def\svgwidth{0.9\columnwidth}
\centering
\begin{subfigure}{0.4\textwidth}
\includegraphics[scale=.4]{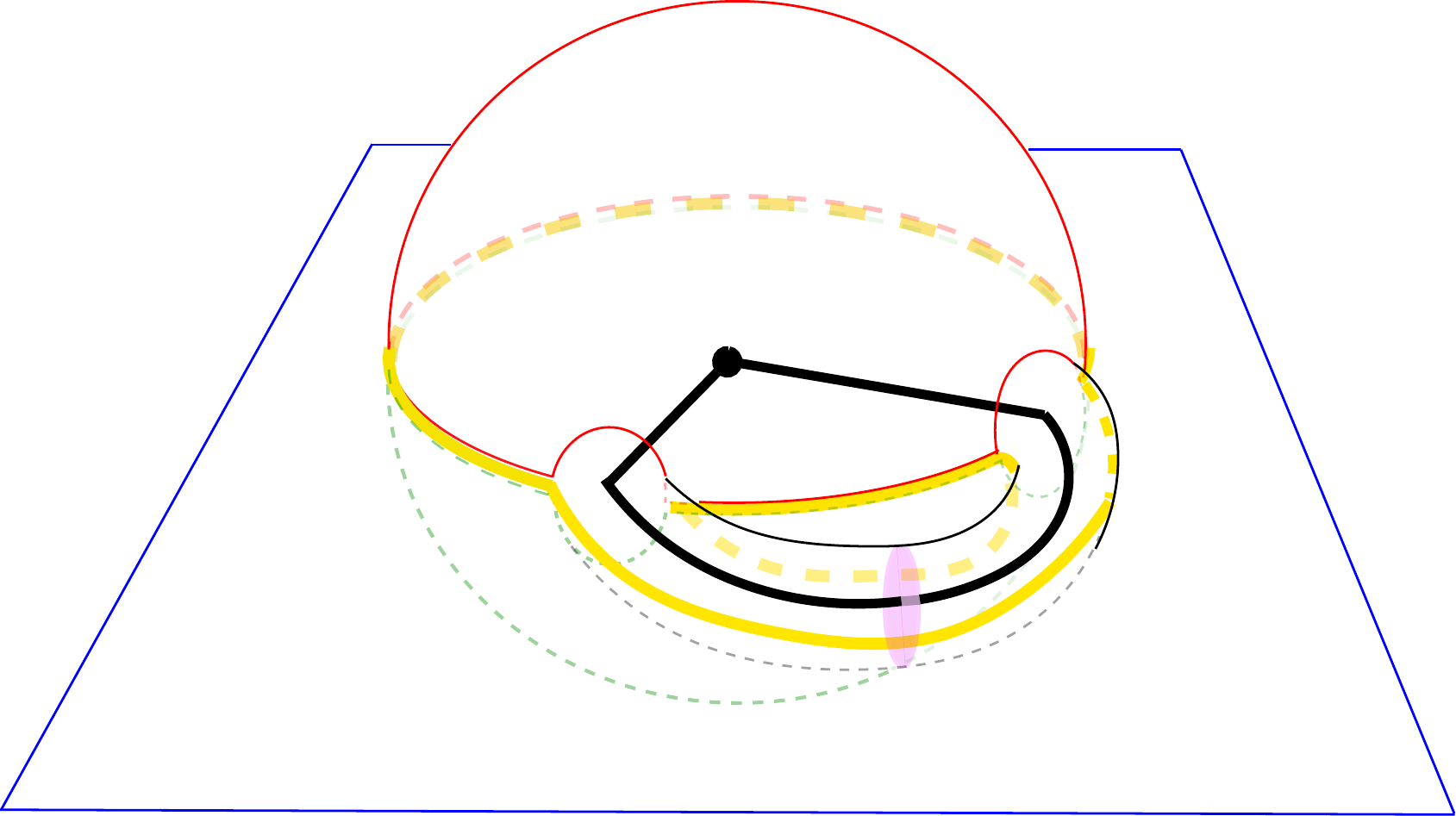}
\caption{}
\label{fig:stabilization-3d-scheme}
\end{subfigure}
\begin{subfigure}{0.55\textwidth}
\begin{minipage}{.2\textwidth}
\flushleft
\includegraphics[scale=.25, left]{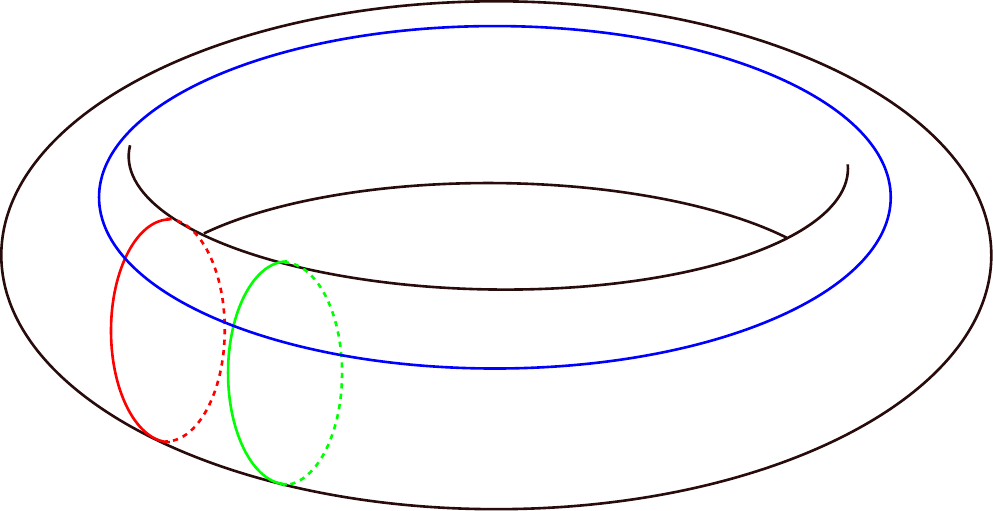}
\end{minipage}\hspace{2cm}
\begin{minipage}{.2\textwidth}
\flushright
\includegraphics[scale=.25, right]{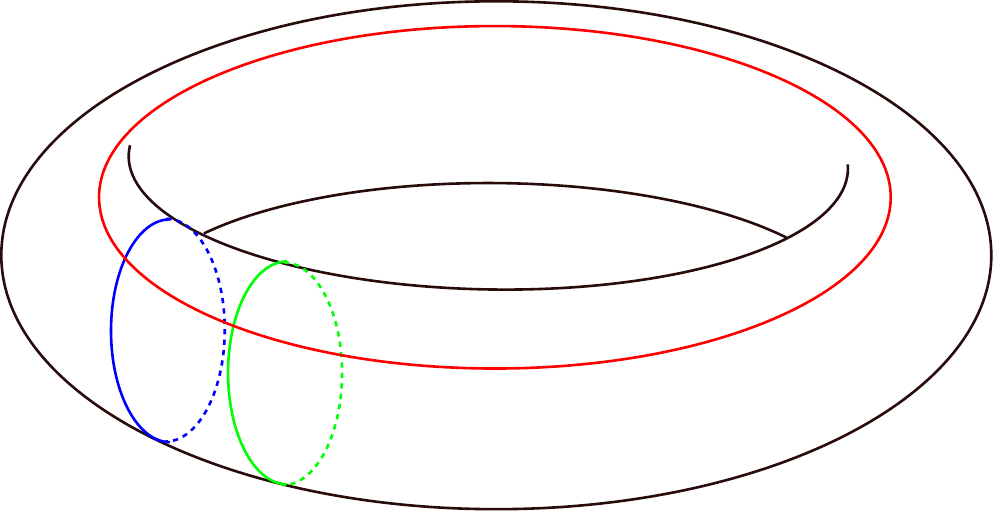}
\end{minipage}
\par\medskip
\centering
{\includegraphics[scale=.25]{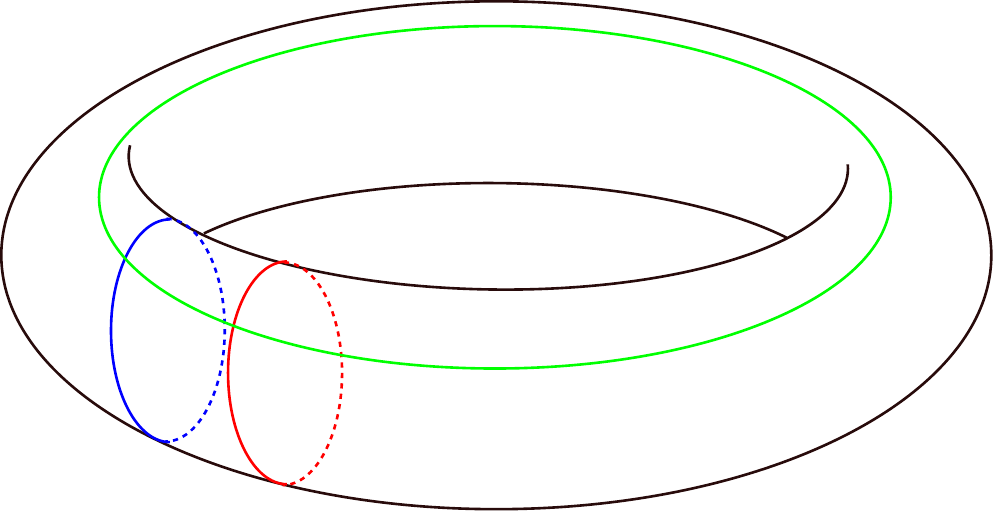}}
\caption{}
\label{fig:genus-one-trisection-sphere}
\end{subfigure}
\caption{Lower dimensional (in three dimensions) representation of stabilization in four dimensions. Fig.~\ref{fig:stabilization-3d-scheme} represents the process of adding a four-dimensional $1$-handle to one of the handlebodies preserving the trisection structure. The figure follows the conventions of fig.~\ref{fig:trisec-3d-diagram}. We also show the resulting spine of the four dimensional handlebody as well as the compression disc for the new handle. Fig.~\ref{fig:genus-one-trisection-sphere} shows the three genus-$1$ trisection diagrams for $\Sp^4$. Stabilization can be represented at the level of trisection diagrams as the connected sum with one of these three diagrams.} 
\label{fig:stabilization}
\end{minipage}
\end{figure}

Both in the context of Heegaard splittings and of trisections there is a move that increases the genus of the central surface by one. It is instructive to illustrate how this can be achieved and to point out small differences between the four-dimensional and three-dimensional cases.

We consider a three dimensional manifold $M^{(3)}$ with a Heegaard diagram of genus $g$, $\Sigma_M$, and  the genus $1$ Heegaard diagram of $\Sp^3$, which we call $T_\Sp$ (see fig.~\ref{fig:heegaardsiagramS3}).
Since $\Sp^3$ has trivial topology we have the following identity:
\begin{equation}
M^{(3)}\,\sharp\,\Sp^3 = M^{(3)}\,.
\end{equation}
As explained in sec.~\ref{sec:connected-sum}, this operation can be represented with the diagram $\Sigma_M\,\sharp\,T_\Sp$ which has genus $g'=g+1$. We can understand this operation in terms of carving a handle out of one of the two handlebodies in $M^{(3)}$ and adding it to the other one.
For a given $d$-manifold $N$ with boundary, the operation of drilling out a tubular neighborhood of a properly embedded ball $\D^{d-k-1}\subset N$ is equivalent to adding a $k$-handle whose attaching sphere bounds a ball $\D^k$ in $\partial N$. The properly embedded $(d-k-1)$-ball is bounded by the belt sphere of a $(k+1)$-handle which we may add in order to cancel the $k$-handle and to recover $N$. This describes how to increment the genus of the central surface of a Heegaard diagram if we consider the case\footnote{ Note that for $k=1$ we are identifying two discs on the boundary of a handlebody and represent their identification through the spine of the resulting $1$-handle. From this point of view, we can treat the operation of increasing the genus of a handlebody and the connected sum of two handlebodies (see fig.~\ref{fig:connectsum}) on the same footing, with the only difference being whether the considered discs lie on the boundary of the same handlebody or not. Note that in both cases it is sufficient to specify the spine of the new handle in order to recover the full topological information.} $d=3$ and $k=1$; note that a $2$-handle for one handlebody plays the role of a $1$-handle for the other handlebody.
In this way it is clear how we are actually not changing anything in the overall manifold but rather rearranging its handle decomposition.

In four dimensions there exist a similar operation which takes the name of \textit{stabilization}. The genus $1$ trisection diagrams of $\Sp^4$ are shown in fig.~\ref{fig:genus-one-trisection-sphere} and each represents a trisection where two handlebodies $X_i$ and $X_j$ are $4$-balls while the third $X_k$ has genus-$1$. Note that the boundary $\partial X_k$ has the topology of $\Sp^1\times\Sp^2$ as can be seen from each diagram by removing the curve circulating around the toroidal direction.

If we consider a $4$-manifold $M^{(4)}$ we can clearly increment the genus of its central surface by considering the connected sum of its trisection diagram with one of the three in fig.~\ref{fig:genus-one-trisection-sphere} . Although this is not within the investigation scope of the present work, we should mention that the stabilization operation allows us to always obtain a trisection where all the four-dimensional handlebodies have the same genus. This type of trisection is referred to as \textit{balanced}. In fact, it is worth noticing that the stabilization operation, although affecting the topology of all the three-dimensional handlebodies $H_{ij}$, only affects one of the four-dimensional one, while leaving the other two unmodified.

Stabilization too can be understood as a specific carving operation. As before, we identify a $\D^1$ that will constitute the spine of the carved $1$-handle. Since we are going to increase the genus of, say, $X_1$, the $1$-ball will need to be properly embedded in the complement $M^{(4)}\setminus X_1$ (we will carve the handle out of the complement and add it to $X_1$). The central surface simultaneously represents the boundary of all the three-dimensional handlebodies which, therefore, need to have their genus increased as well. Since we are only specifying one $1$-handle, and with simple symmetry considerations, it is easy to guess that the spine shall be a disc $\D^1$ embedded in $H_{23}$, with endpoints on the central surface. Fig.~\ref{fig:stabilization} shows a schematic representation of this procedure following the same conventions of fig.~\ref{fig:trisec-3d-diagram}.
Under such a move, the topology of $X_2$ and $X_3$ remains unaffected. To understand this, it  is sufficient to notice that $X_2$ in fig.~\ref{fig:trisec-3d-diagram} (respectively $X_3$) intersects only half of the $1$-handle and the intersection is a $4$-disc intersecting $\partial X_2$ (respectively $\partial X_3$) in $\D^3$ (see fig.~\ref{fig:stabilization}). In other words, carving the $1$-handle leads to two manifolds $X'_2$ and $X'_3$ satisfying:
\begin{equation}
\begin{split}
X'_2\,\natural\,\D^4 = X_2\,,\\
X'_3\,\natural\,\D^4 = X_3\,.
\end{split}
\end{equation}

Note that the portion of the boundary of the four-dimensional $1$-handle that does not constitute the attaching sphere has the topology of $\D^1\times\Sp^2$. Upon the following decomposition
\begin{equation}
\D^1\times\Sp^2 = (\D^1\times \D^2)\cup_{\D^1\times\Sp^1}(\D^1\times\D^2)\,,
\end{equation}
we can understand it as a pair of three-dimensional 1-handles with common boundaries and ``parallel'' spines. Therefore a regular neighborhood of a one-dimensional disc properly embedded in one of the three-dimensional handlebodies intersects all the elements of a trisection without spoiling the construction, but rather defining an alternative trisection for the same manifold.

%%%%%%%%%%%%%%%%%%%%%%%%%%%%%%%%%%%%%%%%%%%%%%%%%%%%%%%%%%%%%%%%%%%%
\subsection{Subdividing $4$-simplices}
\label{sec:cutting-simplices}
%%%%%%%%%%%%%%%%%%%%%%%%%%%%%%%%%%%%%%%%%%%%%%%%%%%%%%%%%%%%%%%%%%%%

We would like to understand trisections from {\it colored triangulations}, i.e., triangulations dual to colored graphs which can be generated by colored tensor models.
It amounts to formulating trisections relying on combinatorics.
We will do so, by generalising the three-dimensional Heegaard splittings formulated in the colored tensor models  \cite {Ryan:2011qm}.
From now on, we therefore restrict to the PL category.

\begin{figure}[h]
     \begin{minipage}[t]{0.85\textwidth}
\def\svgwidth{0.85\columnwidth}
\includegraphics[scale=.15]{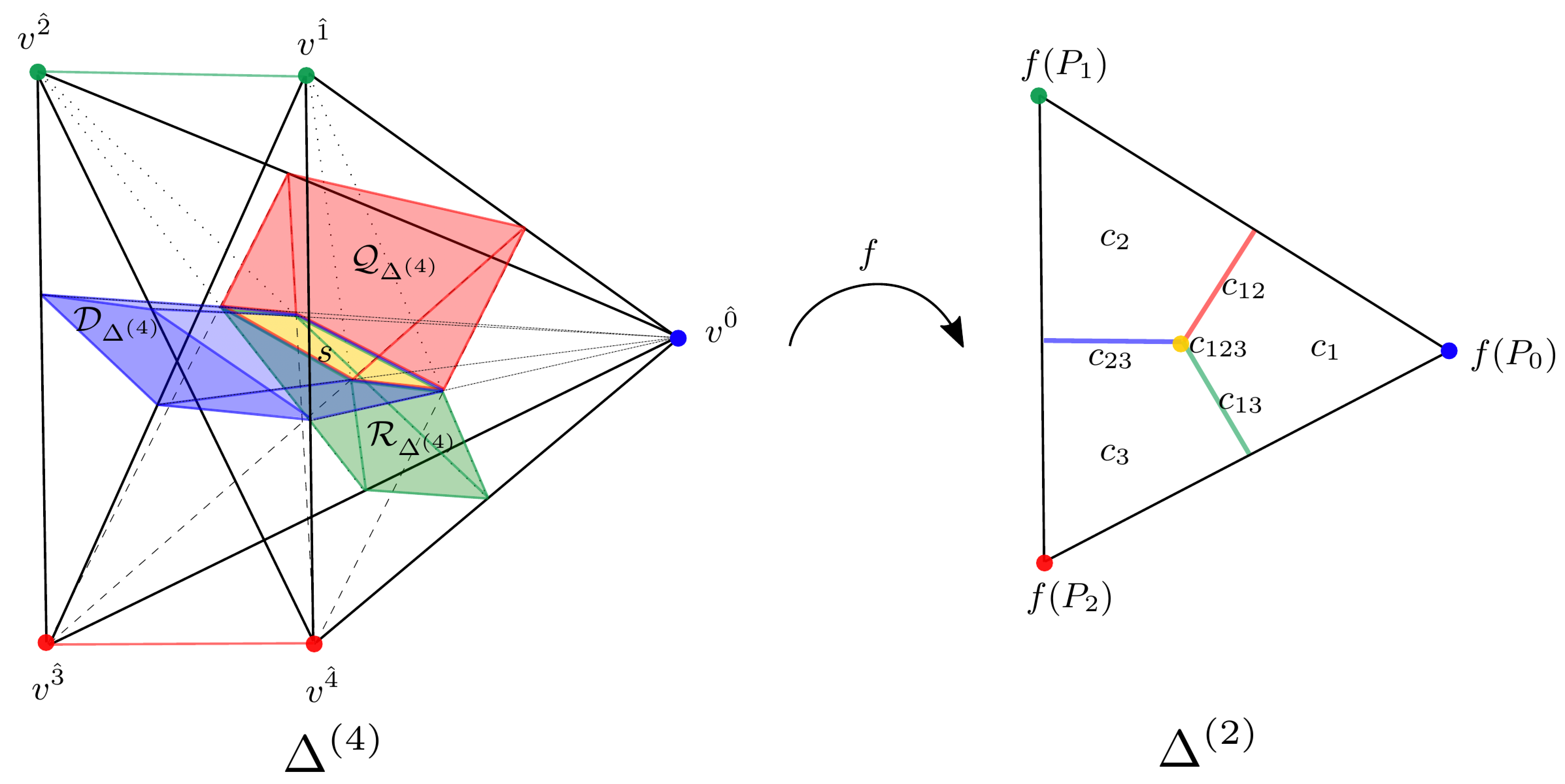}
\caption{
Illustration of the linear map from $\simpF$ to $\simpT$. The sets $P_i$ partitioning the vertices of the $4$-simplex $\simpF$, as well as their images in 2-simplex $\simpT$, are shown. Removing the $0$-skeleton of $\simpT$ from the $0$-skeleton of its first barycentric subdivision provides with the cubical decomposition. The preimage of $\simpT$ under $f$ splits the $4$-simplex in three four-dimensional pieces, whose boundaries are three-dimensional blocks ${\cal D}_{\simpF}$, ${\cal Q}_{\simpF}$,  and ${\cal R}_{\simpF}$. The latter three three-dimensional blocks meet at one two-dimensional square $s$.
${\cal D}_{\simpF}$ is  in blue,  ${\cal Q}_{\simpF}$ in red, ${\cal R}_{\simpF}$ in green, and  the common surface $s$ in yellow.
$f({\cal D}_{\simpF})=c_{23}$, $f({\cal Q}_{\simpF})=c_{12}$, $f({\cal R}_{\simpF})=c_{13}$, and $f(s)=c_{123}$.
}
\label{fig:4simplexmap}
\end{minipage}
\end{figure}

Following \cite{bell2017}, let us consider a $4$-simplex $\simpF$ and define a  partition of its vertices in three sets $P_0$, $P_1$ and $P_2$ such that one vertex belongs to one of the sets and the rest is divided in two pairs. For example, labeling each vertex with the color of its opposite 3-face we might have that the vertex $v^{\widehat{0}}$ is assigned to $P_0$, the vertices $v^{\widehat 1}$ and $v^{\widehat 2}$ are assigned to $P_1$ and the vertices $v^{\widehat 3}$ and $v^{\widehat 4}$ to $P_2$. 
Given such a partition, any pair of sets $(P_i, P_j)$ is identified with a $n$-dimensional subsimplex on the boundary of $\simpF$
while the third set $P_k$ is identified with the opposite $(3-n)$-dimensional subsimplex, where $i, j, k \in \{0,1,2\}$ and $i\ne j \ne k$. 
For example, $(P_1 = \{v^{\widehat 1}, v^{\widehat 2}\}, P_2  = \{v^{\widehat 3}, v^{\widehat 4}\})$ and $P_0  = \{v^{\widehat 0}\}$ give a 3-simplex spanned by $v^{\widehat 1}, \; v^{\widehat 2}, \; v^{\widehat 3},\; v^{\widehat 4}$  and a 0-simplex $v^{\widehat 0}$, whereas  $(P_0 = \{v^{\widehat 0}\}, P_1 = \{v^{\widehat 1}, v^{\widehat 2}\})$ and $P_2 = \{v^{\widehat 3}, v^{\widehat 4}\}$ give a 2-simplex spanned by $v^{\widehat 0}, \; v^{\widehat 1}, \; v^{\widehat 2}$ and a 1-simpex with endpoints $v^{\widehat 3}$ and $ v^{\widehat 4}$.

Then, we can define a map $f$ from $\simpF$ to  
$\simpT$ such that each set $P_i$ is sent into each of the three vertices in $\simpT$ and 
extend it linearly to the interiors of $\simpF$ and $\simpT$.
We proceed by considering the subcomplex spanned by the $0$-skeleton of the first barycentric subdivision of $\simpT$ 
 minus the $0$-skeleton of $\simpT$. 
The resulting cubical decomposition of $\simpT$ is shown in fig.\ref{fig:4simplexmap}. 
$\simpT$ is decomposed in three $2$-cubes 
$c_i$
with $i \in 1,2,3$, pairwise intersections of which result in $1$-cubes 
$c_{ij}$,
all sharing a central $0$-cube, $c_{123}$.
The preimage of this construction under $f$ gives us the splitting of $\simpF$ we are looking for. 
Notice that the boundary faces of $\simpT$ (each spanned by two vertices) are subdivided into two $1$-cubes. The preimage of $f$ therefore induces splittings of the subsimplices on the boundary of $\simpF$ identified with the pairs $(P_i, P_j)$.
Focusing on $\Delta^{(3)}\ni \{ v^{\widehat 1}, v^{\widehat 2}, v^{\widehat 3}, v^{\widehat 4}\}$,
which is sitting opposite to $v^{\hat 0}$, 
and considering  the partition $P_1 = \{v^{\widehat 1}, v^{\widehat 2}\}$, $P_2  = \{v^{\widehat 3}, v^{\widehat 4}\}$ and $P_0  = \{v^{\widehat 0}\}$, 
$\Delta^{(3)}$ is mapped via $f$ to a 1-simplex of $\simpT$ in precisely the same manner as in 
fig.~\ref{fig:tetrahedronMorse}.
The coning  of the splitting  surface of $\Delta^{(3)}$ 
with respect to $v^{\widehat 0}$, 
generates a square prism which we call ${\cal D}_{\simpF}$, whose image under $f$ is $1$-cube $c_{23}$. 
Similarly, in the two $2$-subsimplices of $\simpF$ defined by 
$\{P_0, P_1\}$ and $\{P_0, P_2\}$,
 we identify a one-dimensional cross section, which then will be coned  toward $P_2$ and $P_1$ respectively.
These conings will generate triangular prisms ${\cal Q}_{\simpF}$ and ${\cal R}_{\simpF}$, whose images are $c_{12}$ and $c_{13}$ in $\simpT$.
The intersection ${\cal Q}_{\simpF} \cap {\cal R}_{\simpF} \cap D_{\simpF}$ 
 is a two-dimensional cube\footnote{The bidimensionality of the central square is ensured by the fact that all the pairwise intersections of the three-dimensional blocks are transverse.}. 
Fig.~\ref{fig:4simplexmap} 
shows such coning operations.

%%%%%%%%%%%%%%%%%%%%%%%%%%%%%%%%%%%%%%%%%%%%%%%%%%%%%%%%%%%%%%%%%%%%
\subsection{Splitting $4$-bubbles}
\label{sec:split4bubbles}
%%%%%%%%%%%%%%%%%%%%%%%%%%%%%%%%%%%%%%%%%%%%%%%%%%%%%%%%%%%%%%%%%%%%

At this point, one would like to induce the above subdivision in every simplex $\sigma$ of a triangulation $\T$ of a given manifold $M$ and prove the emerging structure of a trisection, namely see that each of the sets ${\cal D}=\bigcup_\sigma {\cal D}_\sigma$, ${\cal Q}=\bigcup_\sigma {\cal Q}_\sigma$, ${\cal R}=\bigcup_\sigma {\cal R}_\sigma$ is connected and homeomorphic to a handlebody. In order to achieve this{\footnote{
Indeed we will have to define a new structure related to $\cal D$, which improves its topological properties in order to obtain a handlebody.
}} we will have to perform a few manipulations. For later reference, we will call the  attaching curves determined by manipulations of $\cal Q$ (respectively $\cal R$, $\cal D$) as $\alpha$ (respectively $\beta$, $\gamma$).

Let us therefore consider a colored triangulation  $\T$ of a $4$-manifold $M$, and a colored graph $\cal G$ dual to $\T$, i.e., $K(\cal G) = \T$ and $\vert K(\cal G)\vert  =M$.
If we take seriously a partition of the vertices induced by colors, we notice soon that 
the main immediate obstacle is achieving the connectedness of $ {\cal Q}$ and ${\cal R}$.
Evidently, the union ${\cal Q} \cup {\cal R} $ consists of disconnected  three-dimensional polytopes surrounding the vertices of $\T$ which belong to the isolated partition set whose element is only one vertex per $4$-simplex.

Let us elaborate on the structure of ${\cal Q}$ and ${\cal R}$.
In the triangulation $\T$, a $4$-bubble $\B_a^{\widehat{i}}$ identifies a three-dimensional subcomplex which surrounds a vertex $v_a^{\widehat{i}}$. 
In particular,  $v_a^{\widehat{i}}$ sits opposite to a $3$-face of color $i$ in every $4$-simplex containing it and the triangulation dual to $\B_a^{\widehat{i}}$, $K(\B_a^{\widehat{i}})$, is PL-homeomorphic to the union of such 3-faces. 
Moreover, we point out that such a triangulation, $K(\B_a^{\widehat{i}})$, is also homeomorphic to the link of $v_a^{\widehat{i}}$ which, for the case of $M$ 
being a manifold, turns out to be a topological $3$-sphere\footnote{Nevertheless, colored tensor models and colored graphs generate in general pseudo-manifolds and, therefore, the topology of $K(\B_a^{\widehat i})$ might turn out to be very different. 
We comment on this case in section \ref{sec:pseudo-mfd}.}. 
Given the combination of colors defining the $4$-bubble, though, a possibly more accurate way to address the corresponding triangulation is not as the union of the $3$-faces situating opposite to $v_a^{\widehat{i}}$, but rather as the union of a set of three-dimensional cross sections parallel to such $3$-faces which cut $4$-simplices midway between $v_a^{\widehat{i}}$ and its opposite $3$-faces, namely, ${\cal Q}_{\sigma_a} \cup {\cal R}_{\sigma_a}$ in fig.~\ref{fig:doublepentachoron}.
See fig.~\ref{fig:bubble-scheme} for a lower dimensional representation of $K(\B^{\widehat{0}})$.

\begin{figure}[h]
    \begin{minipage}[t]{0.8\textwidth}
      \centering
\def\svgwidth{0.6\columnwidth}
\centering
\includegraphics[scale=.12]{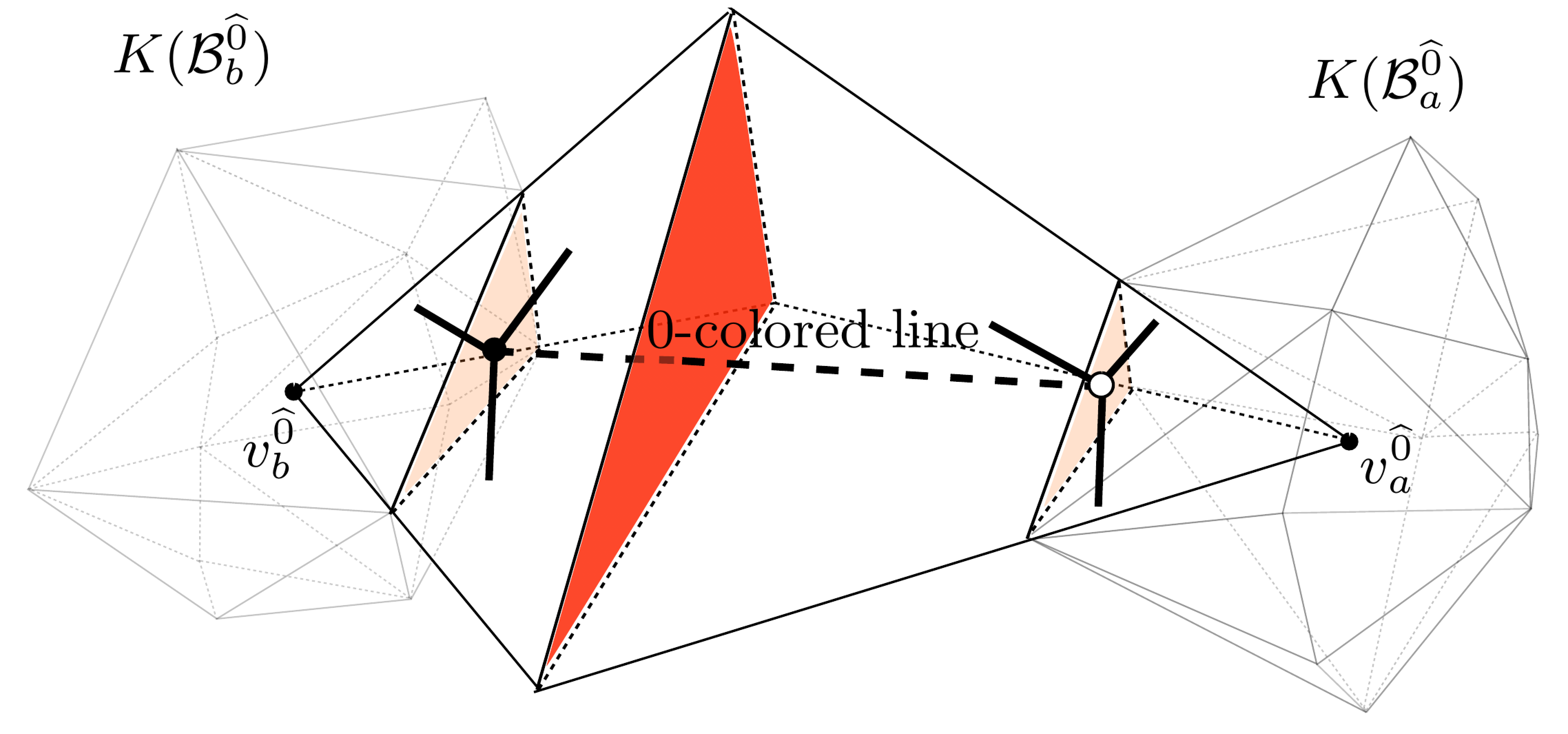}
\caption{Representation of two components of $K(\B^{\widehat{0}})$ in a three-dimensional complex. In three dimensions, $K(\B^{\widehat{0}})$ is a two-dimensional complex whose edges are shown in grey. In the picture we present two components:$K(\B^{\widehat{0}}_a)$ and $K(\B^{\widehat{0}}_b)$, surrouding $v_a^{\widehat{0}}$ and $v_b^{\widehat{0}}$ respectively. 
The same $0$-colored face (shown in red and shared by two 3-simplices) gives rise to two different building blocks (shown in orange) in $K(\B^{\widehat{0}})$.}
\label{fig:bubble-scheme}
\end{minipage}
\end{figure}

Consequently, given the set $\Delta_a =\{\sigma\; 4\text{-simplex s.t. } v_a^{\widehat{i}}\in \sigma\}$, the $4$-bubble identifies the union
\begin{equation}
K(\B_a^{\widehat{i}}) = \bigcup_{\sigma\in\Delta_a}{\cal Q}_\sigma\cup{\cal R}_\sigma\,.
\end{equation}
For later reference, we call the four-dimensional neighborhood{\footnote{Note that we choose to call this $X_1^a$ as it will be part of one of the trisection four-dimensional handlebodies defined earlier in definition~\ref{def:trisection}.}} of $v_a^{\widehat{0}}$ bounded by 
$K(\B_a^{\widehat{0}})$,
$X^a_{1}$
and we define the following unions: ${\cal Q}_a = \bigcup_{\sigma\in\Delta_a}{\cal Q}_\sigma$, ${\cal R}_a = \bigcup_{\sigma\in\Delta_a}{\cal R}_\sigma$. 

We pick $0$ as a special color and define a specific partition\footnote{Here we picked the color $0$ to identify the vertex that in every $4$-simplex is ``isolated'' by the partition, nevertheless, we stress that at this level any permutation of the colors would be an equivalent choice.}, i.e.,
$P_0 =\{v^{\widehat{0}}\}$, $P_1 =\{v^{\widehat{1}}\}\cup\{v^{\widehat{2}}\}$ and $P_2 =\{v^{\widehat{3}}\}\cup\{v^{\widehat{4}}\}$. 
Then we consider the $4$-bubbles $\B_a^{\widehat{0}}$ and, in each such 4-bubble, the jacket ${\cal J}_{P_1, P_2}={\cal J}_{\{1, 2\}, \{3, 4\}}$. Combining the constructions described in sec.~\ref{sec:jackets-heeg} and sec.~\ref{sec:cutting-simplices}, we readily obtain the sets ${\cal Q}$ and ${\cal R}$. Nevertheless, each of these sets, is disconnected and constituted by as many connected components as many vertices $v^{\widehat{0}}$ are in the triangulation $\T$. 
Recalling how jackets identify Heegaard surfaces for the realizations of $4$-colored graphs, it is easy to see that ${\cal Q}_a$ and ${\cal R}_a$ are the two handlebodies in a Heegaard splitting of a given $\B_a^{\widehat{0}}$. Looking at the Heegaard splittings $\B_a^{\widehat{0}} = (H_{1, a}, H_{2, a}, K({\cal J}(\B_a^{\widehat{0}})))$, we have that:
\begin{equation}\label{eq:interm-bubble-decomposition}
\begin{split}
{\cal Q}_a = H_{1, a}\,,\\
{\cal R}_a = H_{2, a}\,,\\
{\cal Q} = \bigsqcup_a H_{1, a}\,,\\
{\cal R} = \bigsqcup_a H_{2, a}\,,
\end{split}
\end{equation}
with $\sqcup$ representing the disjoint union of sets.

It is now clear that there is a limitation of partitioning the vertices in the triangulation according to colors if we try to identify a trisection naively.
Moreover, the information on $\cal D$, although formally present,  appears to be 
implicit and  hidden in the construction. 
In previous works \cite{bell2017}, as we briefly mentioned, these problems have been tackled in two different ways. In \cite{bell2017}, the authors perform Pachner moves on the triangulation. The specific type of Pachner move employed ($2\to 4$ Pachner move) increases the number of  $4$-simplices in $\T$ without affecting the topology (replaces a $4$-ball with another $4$-ball having the same triangulation on the boundary). This allows to connect the spines of the four-dimensional handlebodies at will, as well as to clearly infer the structure of compression discs for all the three-dimensional handlebodies. Nevertheless, Pachner moves are not compatible with the colors in the present case, since the complete graph with six vertices cannot be consistently $5$-colored. In  \cite{Casali:2019gem}, on the other hand, the authors considered a special class of colored graphs encoding crystallizations. By definition, all $K(B^{\widehat{i}})$
are connected in crystallization theory. 
Such requirement imposes a limited amount of nodes in the graph encoding a manifold $M$, which results in a very powerful tool to study the topology of PL-manifolds\footnote{As we will explain later, the authors of \cite{Casali:2019gem} actually consider a wider class of graphs. Nevertheless they still base their construction on connectedness of some chosen $\widehat{i}$-bubble.}.  However, crystallization graphs
only reflect a small amount of cases of interest to the tensor model community. In the following section we present an alternative approach which allows to generalize the construction of trisections to a wider class of graphs.

%%%%%%%%%%%%%%%%%%%%%%%%%%%%%%%%%%%%%%%%%%%%%%%%%%%%%%%%%%%%%%%%%%%%
\subsection{Connecting $4$-bubbles}
\label{sec:connect4bubbles}
%%%%%%%%%%%%%%%%%%%%%%%%%%%%%%%%%%%%%%%%%%%%%%%%%%%%%%%%%%%%%%%%%%%%

\begin{figure}[h]
\centering
     \begin{minipage}[t]{0.8\textwidth}
      \centering
\def\svgwidth{0.7\columnwidth}
\includegraphics[scale=.18]{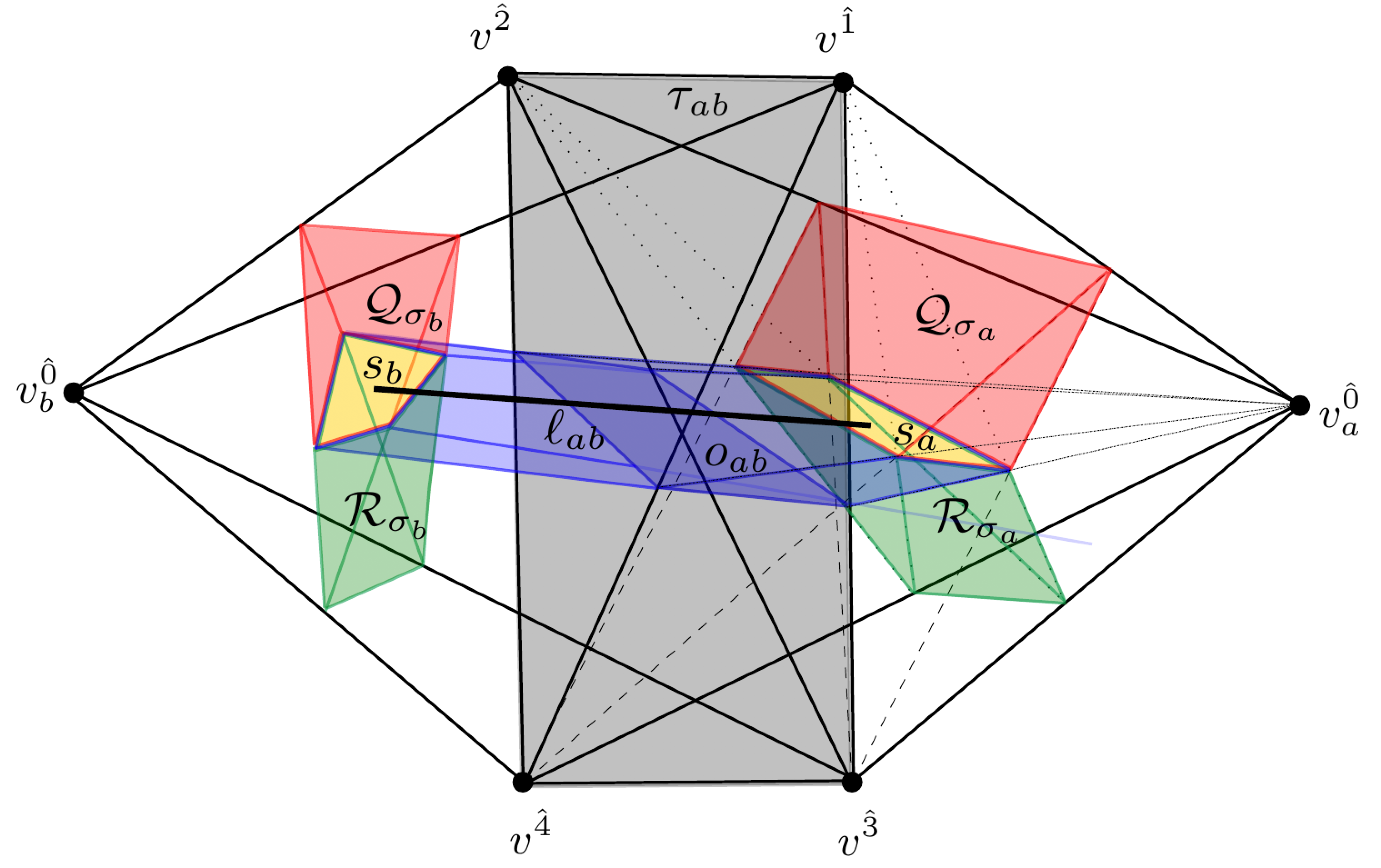}
\caption{
$\pi_{ab}$.
}
\label{fig:doublepentachoron}
\end{minipage}
\end{figure}

In order to overcome the issues coming from having disconnected  realizations of $4$-bubbles, we follow a similar 
construction of Heegaard splittings discussed in section~\ref{sec:more-heeg-split}.
Let us start by embedding the colored graph dual to $\T$ in $\T$ itself via the prescription described in section~\ref{sec:coloredtensormodels}. 
We consider four-dimensional regular neighborhoods of the $0$-colored lines embedded in $\T$.
Topologically, each such four-dimensional neighborhood $n$ is $\D^3\times \D^1$, and its boundary  is $(\Sp^2\times \D^1)  \cup (\D^3\times\Sp^0)$.
This boundary intersects ${\cal D}$ (three-dimensional) transversally and, therefore, the longitudinal component $\Sp^2\times \D^1$ is split by $\cal D$ into two parts: $\partial_+ n$ and $\partial_- n$, each of topology  $\D^2\times \D^1$. As a convention we fix $\partial_+ n$ to be between $\cal D$ and $\{v^{\widehat{1}}, v^{\widehat{2}}\}$ and $\partial_- n$ between $\cal D$ and $\{v^{\widehat{3}}, v^{\widehat{4}}\}$.

\begin{construction}
\label{const:aboutD}
Given a colored triangulation $\T$ of a manifold $M$, dual to a colored graph $\cal G$, and a choice of a jacket for its $\widehat{0}$-bubbles, ${\cal J}(\B^{\widehat{0}}_a)$, there exist three 3-submanifolds of $\T$: $\cal Q'$, $\cal R'$ and $\cal D'$, such that they share the same boundary
\begin{equation}
\Sigma = \partial{\cal Q}' = \partial{\cal R}' = \partial{\cal D}'\,,
\end{equation}
and which are constructed carving regular neighborhoods of the embedded $0$-colored lines of $\cal G$ as:
\begin{equation}
\begin{split}
{\cal Q}' &=[ {\cal Q}\setminus (\bigcup_l n)]\cup [\bigcup_l \partial_+ n]\,,\\
{\cal R}' &=[ {\cal R}\setminus (\bigcup_l n)]\cup [\bigcup_l \partial_- n]\,,\\
{\cal D}' &= {\cal D}\setminus [\bigcup_l \dot{n}]\,,
\end{split}
\end{equation}
where $l$ runs over the set of $0$-colored lines and $\dot{n}$ indicates the interior of $n$.
\end{construction}
In order to understand construction~\ref{const:aboutD}, let us consider two vertices $v_a^{\widehat{0}}$ and  $v_b^{\widehat{0}}$ sitting opposite to the same $0$-colored $3$-face, $\tau_{ab}$ and  call  $n_{ab}$, the regular neighborhood of the $0$-colored line $\ell_{ab}$ dual to $\tau_{ab}$.
We call the $4$-simplex spanned by $v_a^{\widehat 0} \cup \tau_{ab}$, $\sigma_{a}$, and similarly for $b$.

One can view the $3$-ball $\D^3$ in this four-dimensional regular neighborhood of a $0$-colored line, $n_{ab}$, as a retraction of the 
tetrahedron ${\cal Q}_{\sigma_a} \cup {\cal R}_{\sigma_a} = K(\B^{\widehat{0}}_a) \cap \sigma_a$  (or for $b$)
 inside each 4-simplex, $\sigma_{a}$ (or $\sigma_{b}$),  where ${\cal Q}_{\sigma_a}$ is ${\cal Q} \cap \sigma_a$, etc.
Using $n_{ab}$, we perform a connect sum  of  the  $3$-submanifolds defined by $4$-bubbles $K(\B^{\widehat{0}})$'s and, at the same time, perform a boundary-connect sum of the four-dimensional neighborhoods of vertices in the triangulation.

The union $\sigma_a\cup\sigma_b$ via their shared face $ \tau_{ab} $ defines a polytope{\footnote{These are called a double pentachora in \cite{bell2017}.}}  $\pi_{ab}$, spanned by $v_a^{\widehat 0} \cup \tau_{ab}   \cup v_b^{\widehat 0}$.
In each $\pi_{ab}$, there are two central squares $s_a$ and $s_b$ which are the intersections of $\pi_{ab}$ with the  realization of jackets of $\widehat{0}$-bubbles $K ({\cal J} (\B_a^{\widehat{0}}))$ and $K( {\cal J} (\B_b^{\widehat{0}}))$ respectively.
We note that 
a neighborhood of the barycenter of $s_a$ (resp. $s_b$) intersects $K(\B^{\widehat{0}}_a)$ (resp. $K(\B^{\widehat{0}}_b)$) in a $3$-ball satisfying the requirements presented in sec.~\ref{sec:connected-sum}. 
Therefore, by removing such neighborhoods and identifying their boundaries, we can easily construct the connected sum $K(\B^{\widehat{0}}_a)\,\sharp\,K(\B^{\widehat{0}}_b)$ preserving the Heegaard splitting defined by the chosen jackets, i.e., connecting the components of $\cal Q$ (resp. $\cal R$) surrounding  $v_a^{\widehat{0}}$ and $v_b^{\widehat{0}}$. 
For later convenience we require the neighborhood of the barycenter of $s_a$ (resp. $s_b$) to be small enough not to intersect $\partial s_a$ (resp. $\partial s_b$). 
Note that, by construction, this also yields $ X_1^a\,\natural\,X_1^b$. 
As we discussed in sec.~\ref{sec:connected-sum}, we can represent the boundary-connected sum of handlebodies through a line connecting the boundaries. 
This is precisely the role of $\ell_{ab}$; $X_1^{ab}=X_1^a\cup n_{ab}\cup  X_1^b$ is homeomorphic to $ X_1^a\,\natural\,X_1^b$. 
The intersections $q_a=s_a\cap n_{ab}$ and  $q_b=s_b\cap n_{ab}$ identify smaller squares splitting each $\D^3$ in $(\D^3\times\Sp^0)\subset\partial n_{ab}$.
The interiors of $q_a$ and $q_b$ now belong to the interior of $X_1^{ab}$ while their boundaries define a surface
\begin{equation}
\Sigma_{ab}=\partial q_a \times \ell_{ab} = \partial q_b \times \ell_{ab} \,.
\end{equation}
It is now straightforward to see that we just constructed the connected sum of the surfaces dual to ${\cal J}(\B_a^{\widehat{0}})$ and ${\cal J}(\B_b^{\widehat{0}})$ by simply considering the following union:
\begin{equation}
[K({\cal J}(\B_a^{\widehat{0}}))\setminus q_a] \cup \Sigma_{ab} \cup [K({\cal J}(\B_b^{\widehat{0}}))\setminus q_b]\,.
\end{equation}
With a similar construction and following the arguments of sec.~\ref{sec:connected-sum}, it is not hard to see that we also constructed the boundary-connected sums ${\cal Q}_{\sigma_a}\,\natural\,{\cal Q}_{\sigma_b}$ and ${\cal R}_{\sigma_a}\,\natural\,{\cal R}_{\sigma_b}$.
Notice that the boundary-connected sum of the three-dimensional handlebodies is made preserving the combinatorics defined by the chosen jacket, and this clarifies any ambiguity due to a choice of orientation.

Since $\ell_{ab}$ is transversal to $\tau_{ab}$, it is easy to see that it lies inside ${\cal D}_{\sigma_a}\cup{\cal D}_{\sigma_b}$.  The intersections $(n_{ab}\cap{\cal D}_{\sigma_a})\cup(n_{ab}\cap{\cal D}_{\sigma_b})$ identify what shall be carved out of ${\cal D}_{\sigma_a}$ and ${\cal D}_{\sigma_b}$. Here, we require $(n_{ab}\cap\partial{\cal D}_{\sigma_a}) = \emptyset$ (and similarly for $\partial{\cal D}_{\sigma_b}$) in order to avoid singularities. The operation is, thus, very similar to a stabilization up to the fact that we are identifying balls on the boundaries of two disconnected handlebodies. The boundary of the (three-dimensional) carved region in ${\cal D}_{\sigma_a}\cup{\cal D}_{\sigma_b}$ is, again, $\Sigma_{ab}$. Hence, $\Sigma_{ab}$ is identified as the central surface obtained through such a carving operation.

In general, there are more than one $0$-colored 3-faces sitting opposite to the same pair $v_a^{\widehat{0}}$ and $v_b^{\widehat{0}}$; we denote this number $E_{ab}$.
It means that there are $E_{ab}$-many embedded $0$-colored lines connecting the two realizations of the bubbles $K(\B_a^{\widehat{0}})$ and $K(\B_b^{\widehat{0}})$. 
Repeating the above procedure for all $E_{ab}$ lines not only defines the boundary-connected sums ${\cal Q}_a\natural{\cal Q}_b$ and ${\cal R}_a\natural{\cal R}_b$, but also adds to each of them $E_{ab}-1$ extra $1$-handles via stabilization.

We are left to clarify how ${\cal D}=\bigcup_\sigma {\cal D}_\sigma$ behaves under the iterated carving operation. 
Let us first notice that each ${\cal D}_{\sigma}$ is bounded by six rectangular faces. 
One, as we defined earlier, is $s$ and is determined by the intersection of $\sigma$ with the realization of a jacket of a $\widehat{0}$-bubble. $s$ is the only face of ${\cal D}_{\sigma}$ whose interior lies in the interior of $\sigma$. 
The interior of the other five faces lies inside the interior of one of the five boundary faces of $\sigma$. 
Hence, each boundary face of ${\cal D}_{\sigma}$ naturally carries a single color from the colored graph $\cal G$. 
The face carrying the color $0$ is the one sitting opposite to $s$ and we call it $o$. 
For every ${\cal D}_{\sigma}$ in $M$ there is one and only one ${\cal D}_{\sigma'}$ sharing $o$ with ${\cal D}_{\sigma}$. 
The union ${\cal D}_{\sigma}\cup{\cal D}_{\sigma'}$ can be thought as the effective building blocks of $\cal D$ and they are in one to one correspondence with the $0$-colored lines of $\cal G$. 
These building blocks are also bounded by ten faces; in $\pi_{ab}$, we have: $s_a$, $s_b$, four lateral faces carrying colors $i\neq 0$ coming from ${\cal D}_{\sigma_a}$ and four lateral faces carrying colors $i\neq 0$ coming from ${\cal D}_{\sigma_b}$. 
Note that faces of the same color coming from ${\cal D}_{\sigma_a}$ and ${\cal D}_{\sigma_b}$ are glued to each other via a boundary edge. 
When we compose such blocks to build $\cal D$, each block glues to another sharing a lateral face according to the colors.

\begin{figure}[h]
    \begin{minipage}[t]{0.8\textwidth}
      \centering
\def\svgwidth{0.9\columnwidth}
\begin{subfigure}{0.5\textwidth}
\centering
\includegraphics[scale=.4]{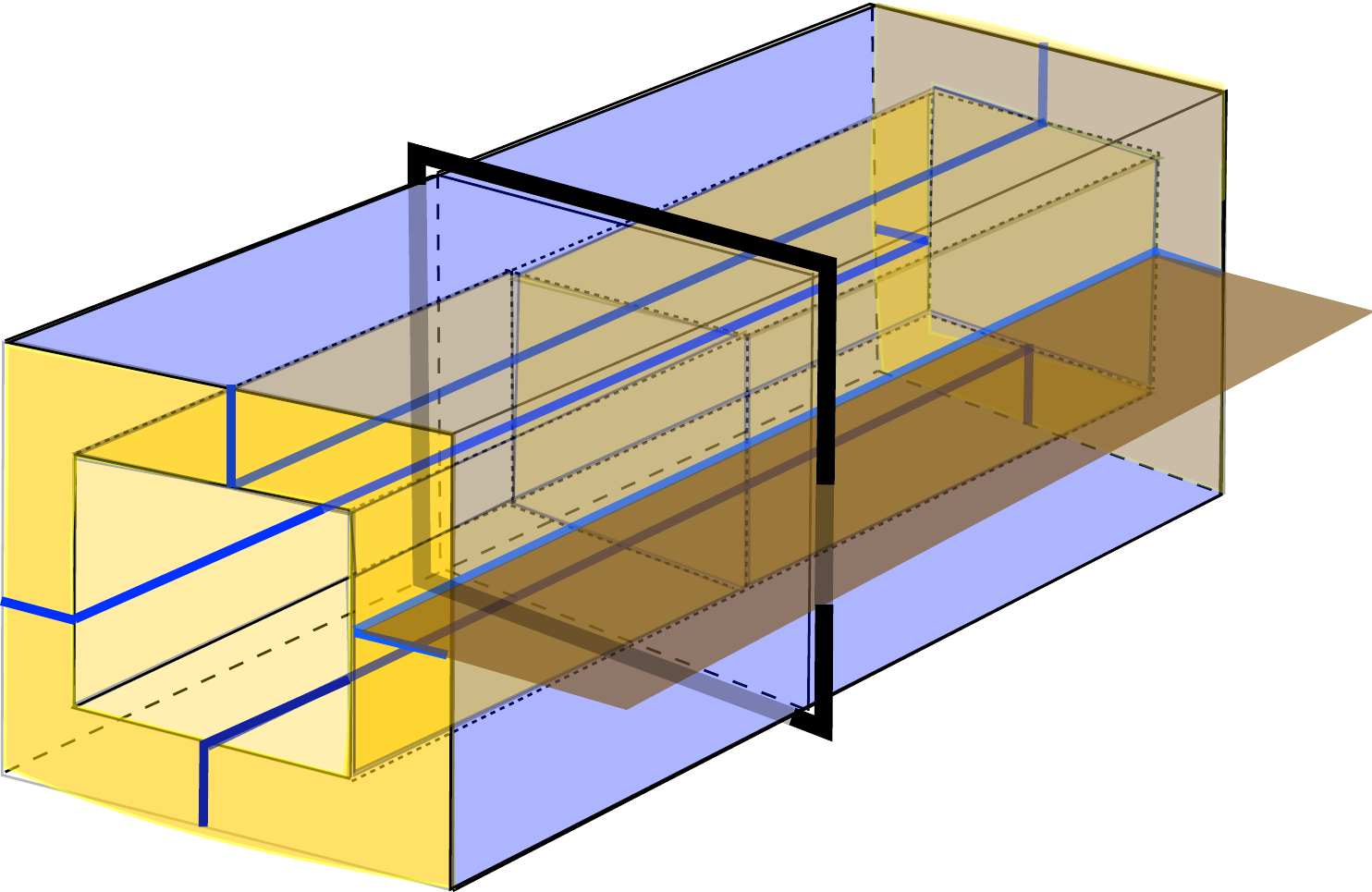}
\caption{}
\label{fig:carved-D}
\end{subfigure}%
\begin{subfigure}{0.5\textwidth}
\centering
\includegraphics[scale=.35]{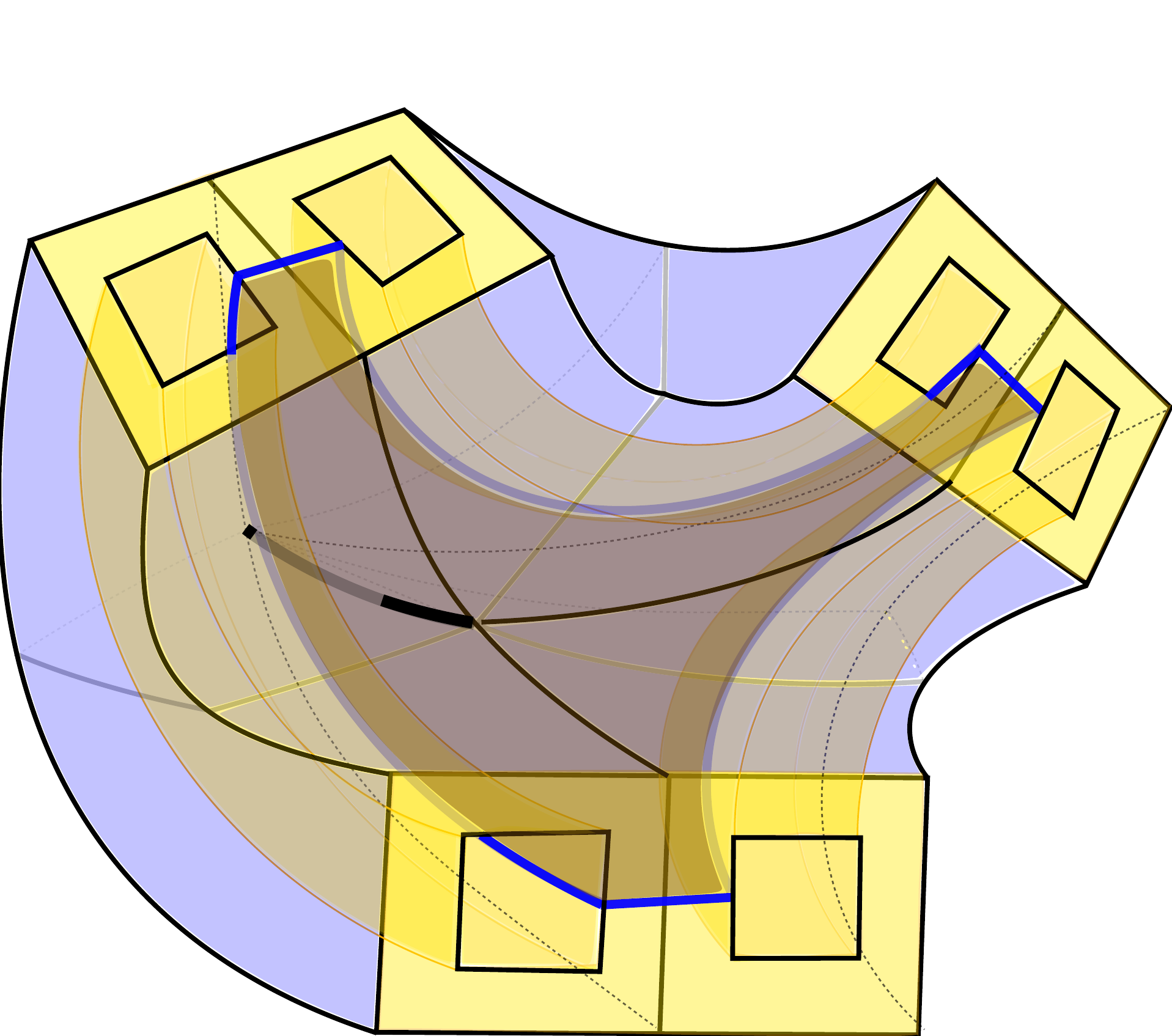}
\caption{}
\label{fig:carved-triple-D}
\end{subfigure}
\caption{Structure of $\cal D'$. Fig.~\ref{fig:carved-D} shows an effective building block of $\cal D'$, namely ${\cal D}_{ab}^{(\ell)}$.
There are eight blue $2$-faces which are to be glued to other effective building blocks of $\cal D'$. 
The yellow surface is going to be part of the central surface of the trisection and, therefore, will constitute the boundary of $\cal D'$, i.e., to be glued onto $\cal Q$ and $\cal R$.
Four pieces of the $\gamma$-curves describing $\cal D'$ are pictured in blue lines.
The brown rectangle with one of the $\gamma$-curves as boundary represents part of a compression disc.
The spine of the effective block of $\cal D'$ is shown in thick solid black loop piercing through the compression disc.
Fig.~\ref{fig:carved-triple-D} shows three effective building blocks of $\cal D'$ glued along their  $i$-colored faces (let us pick $i=1$).
Here we show a $\gamma$-curve in blue, circulating along all of the three effective building blocks and defining the boundary of a compression disc (shown in brown).
In this example, the $\gamma$-curve is defined by the color set $\{0, 1\}$.
All the other $\gamma$-curves, which we do not show, only travel through one block
and then move away on other patches of the central surface which are not shown in the picture.
As before, patches of the central surface are shown in yellow an lateral faces in blue.
}
\label{fig:carved-both-D}
\end{minipage}
\end{figure}

It is important to realize that the embedding of $0$-colored lines connects opposite faces of such building blocks, namely $s_a$ and $s_b$, therefore a tubular neighborhood of a $0$-colored line always intersects $s_a$ and $s_b$.
After carving such neighborhoods out of ${\cal D}$,
each building block  is turned into a solid torus (pictorially, we can think of tunneling through them along a $0$-colored line, see fig.~\ref{fig:carved-D}). In $\pi_{ab}$, we refer to such new effective building blocks as 
\begin{equation}
{\cal D}_{ab}^{(\ell)} = ({\cal D}_{\sigma_a}\cup{\cal D}_{\sigma_b})\setminus n_{ab}\,,
\end{equation}
and the resulting entire structure corresponds to
\begin{equation}
{\cal D}' = {\cal D}\setminus({\mathlarger{\cup}}_{\ell_{ab}} n_{ab}) = \bigcup_{\ell_{ab}} {\cal D}^{(\ell)}_{ab}={\mathlarger{\natural}}{\cal D}^{(\ell)}_{ab}\,.
\end{equation}

Before moving on, an important remark is in order. So far we discussed the case of $v^{\widehat{0}}_a$ is different from $v^{\widehat{0}}_b$. Nevertheless, it may easily happen that $\tau_{ab}$ opposes to the same $\widehat{0}$-colored vertex (in fact, it is sufficient that the two $4$-simplices in $\pi_{ab}$ share one more face, beside $\tau_{ab}$, for this to be true). In this case, as explained in section \ref{sec:stab}, most of the features we just discussed would still hold. Simply, instead of performing a connected sum between two ${\widehat 0}$-bubbles, we would be adding a $1$-handle to a single ${\widehat 0}$-bubble via stabilization (as in fig.~{\ref{fig:stabilization}})and increase by one the genus of the central surface defined by $K({\cal J}(\B_a^{\widehat{0}}))$. In particular, this situation would correspond to a single building block ${\cal D}_{ab}^{(\ell)}$ in which two lateral faces of the same color $i$ are identified. One can understand such operation as the retraction to a point of a disc on the boundary of ${\cal D}_{ab}^{(\ell)}$, bounded by a trivial element in the first homotopy group of the $2$-torus\footnote{Remember that two faces of the same color in ${\cal D}_{ab}^{(\ell)}$ already share a side.}. Topologically, such ${\cal D}_{ab}^{(\ell)}$ would therefore remain a solid torus.

We are now ready to state the main result of this work.

\begin{theorem}
\label{theor:ourtheor}
$\cal Q'$, $\cal R'$ and $\cal D'$ are handlebodies.
\end{theorem}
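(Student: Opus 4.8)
The plan is to prove each of the three sets is a handlebody by exhibiting a one-dimensional spine onto which it collapses, invoking the characterization from Def.~\ref{def:handlebody} that a manifold is a handlebody iff it collapses to a one-dimensional spine. I would organize the argument into three separate claims, one for each of $\cal Q'$, $\cal R'$ and $\cal D'$, and handle $\cal Q'$ and $\cal R'$ together since they are symmetric under the exchange of the partition sets $P_1$ and $P_2$.

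First I would treat $\cal Q'$ (and identically $\cal R'$). The key input is eq.~\eqref{eq:interm-bubble-decomposition}, which tells us that before carving, ${\cal Q}=\bigsqcup_a H_{1,a}$ is a disjoint union of the three-dimensional handlebodies $H_{1,a}$ arising as one side of the Heegaard splitting of each $\widehat 0$-bubble via the chosen jacket. Each $H_{1,a}$ is a handlebody by the result reviewed in sec.~\ref{sec:jackets-heeg}, with a one-dimensional spine given by the relevant edges $f^{-1}(i)$. The construction in Construction~\ref{const:aboutD} replaces, along each embedded $0$-colored line $\ell_{ab}$, the neighborhood $n_{ab}$ by $\partial_+ n_{ab}$ (of topology $\D^2\times\D^1$), which is precisely the operation of a boundary-connected sum of handlebodies joined by the line $\ell_{ab}$, as explained in sec.~\ref{sec:connect4bubbles}. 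By the fact recalled in sec.~\ref{sec:connected-sum} that a boundary-connected sum of handlebodies is again a handlebody, with spine obtained by joining the two spines through a line, I would conclude that the connected pieces obtained this way are handlebodies. The extra $0$-colored lines beyond the first between a given pair (the $E_{ab}-1$ surplus lines) each add a $1$-handle via stabilization, which preserves the handlebody property and merely raises the genus. Since $\cal G$ is connected, iterating over all $0$-colored lines connects every component into a single handlebody with a spine built from the original bubble spines joined along (a spanning subset of) the $0$-colored lines.

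The part I expect to be the main obstacle is $\cal D'$, because, unlike $\cal Q$ and $\cal R$, the set $\cal D$ is not presented as a disjoint union of handlebodies; its building blocks ${\cal D}_\sigma\cup{\cal D}_{\sigma'}$ are solid (simply connected) blocks, and the carving of $n_{ab}$ turns each effective block ${\cal D}_{ab}^{(\ell)}$ into a solid torus (see fig.~\ref{fig:carved-D}). Here I would argue as follows: each ${\cal D}_{ab}^{(\ell)}$ is a solid torus, hence a genus-one handlebody whose spine is the core loop encircling the drilled $0$-colored line. The blocks are glued to one another along their lateral $i$-colored faces ($i\neq0$) according to the adjacency dictated by the colored graph. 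I would show that each such gluing of two handlebodies along a disc (or a collection of discs) on their boundaries is again a handlebody, so that the whole of $\cal D'={\mathlarger{\natural}}{\cal D}^{(\ell)}_{ab}$, being an iterated boundary-connected sum of solid tori, collapses to a one-dimensional spine --- namely the graph formed by the core loops of the solid tori joined along the gluing faces. The degenerate case $v^{\widehat0}_a=v^{\widehat0}_b$ is already handled in the preceding discussion and leaves each block a solid torus, so it does not affect the argument.

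To complete the proof I would verify the connectivity of $\cal D'$ (which follows from the connectivity of $\cal G$, since the blocks are in bijection with the $0$-colored lines and glued according to the color adjacencies of the connected graph) and confirm that the gluing faces are discs so that each elementary gluing is a genuine boundary-connected sum rather than a more complicated identification. The care needed is exactly to ensure that the lateral-face identifications do not create any higher-genus boundary component that would obstruct collapsibility; this is controlled by the fact, noted in the text, that two faces of the same color in a block already share a side, so the gluing regions are simply connected. Having established that $\cal Q'$, $\cal R'$ and $\cal D'$ each collapse to a one-dimensional spine, the claim follows from the handlebody characterization of Def.~\ref{def:handlebody}.
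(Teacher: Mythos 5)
Your proposal is correct and follows essentially the same route as the paper's own proof: $\cal Q'$ and $\cal R'$ as stabilized boundary-connected sums of the bubble handlebodies $\{{\cal Q}_a\}$, $\{{\cal R}_a\}$ joined along the embedded $0$-colored lines, and $\cal D'$ as a boundary-connected sum of the solid tori ${\cal D}_{ab}^{(\ell)}$ with spine built from their core loops. Your extra checks (connectivity inherited from $\cal G$, and that the lateral-face gluings are along discs) are sensible elaborations of points the paper leaves implicit, not a different argument.
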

\begin{proof}
The submanifolds $\cal Q'$ and $\cal R'$, as explained in the construction~\ref{const:aboutD}, are stabilizations of the boundary connected sum of the handlebodies $\{{\cal Q}_a\}$ and $\{{\cal R}_a\}$ respectively and, as such, are handlebodies themselves. Their spines are defined as described in sections \ref{sec:connected-sum}, \ref{sec:jackets-heeg} and \ref{sec:stab}, i.e., via the bicolored paths defining the jacket, joined by the embedded $0$-colored lines of $\cal G$.

${\cal D}'$ is the boundary-connected sum of the building blocks ${\cal D}_{ab}^{(\ell)}$ performed via their lateral faces. Since the ${\cal D}_{ab}^{(\ell)}$ are solid tori, ${\cal D}'$  is a handlebody by construction. The prescription to perform such boundary-connected sum is encoded in the combinatorics of $\cal G$. 
Eventually, no lateral 2-face of ${\cal D}_{ab}^{(\ell)}$ will be left free (for any $a$ and $b$ in the graph ) and the only contributions to the boundary of $\cal D'$ will come from $s_a\setminus q_a$, $s_b\setminus q_b$ and $\Sigma_{ab}$ (for any $a$ and $b$). 
Its spine can be identified by noticing that each solid torus ${\cal D}^{(\ell)}_{ab}$ can be collapsed along $\ell_{ab}$ onto a $\Sp^1$ homeomorphic to the boundary of $o_{ab}$. The spine of ${\cal D}'$ can, thus, be constructed by gluing the spines of each building block\footnote{We recall that the boundary of each $o_{ab}$ face consists of four sides carrying colors $i\neq 0$.}.
\end{proof}

Let us turn our attention draw a set of $\gamma$-curves on the boundary of $\cal D'$. 
Four sectors  of compression discs can be built in each ${\cal D}_{ab}^{(\ell)}$ intersecting the central surface on $\Sigma_{ab}$ as well as on $s_a \setminus q_a$ and $s_b \setminus q_b$ (see fig.~\ref{fig:carved-D}).
The resulting four  arcs of  $\gamma$-curves correspond to arcs of four circles coplanar to the axis of revolution of the torus boundary of each ${\cal D}_{ab}^{(\ell)}$. Each arc starts from one of the sides of $s_a$ (determined by a color $i\neq 0$), proceed along $\Sigma_{ab}$ (therefore parallel to a $0$-colored line of $\cal G$), and end on the side of $s_b$ carrying the same color as the side they started from, as depicted in fig.~\ref{fig:carved-both-D}.
Here, each arc will connect to another one coming from a neighboring building block of $\cal D'$.
Thanks to the combinatorics of $\cal G$, inherited by the building block of $\cal D'$,  the composition of a $\gamma$-curve through the union of such arcs will go on according to the $\{0i\}$-colored cycles
in the graph and close after as many iteration as the half of the length of the $\{0i\}$-cycle.
Therefore from each $0$-colored tetrahedron 
$\tau_{ab}$, four $\gamma$-curves depart each going around a boundary triangle. 
We remark here that this procedure will give us redundant $\gamma$-curves.

We conclude this section by simply performing the following identifications with respect to our definition \ref{def:trisection}:
\begin{equation}
\begin{split}
H_{23} &= {\cal D}' \,,\\
H_{12} &= {\cal Q}'\,,\\
H_{13} &= {\cal R}'\,.
\end{split}
\end{equation}

\begin{figure}[h]
    \begin{minipage}[t]{0.8\textwidth}
      \centering
\def\svgwidth{0.9\columnwidth}
\centering
\includegraphics[scale=.4]{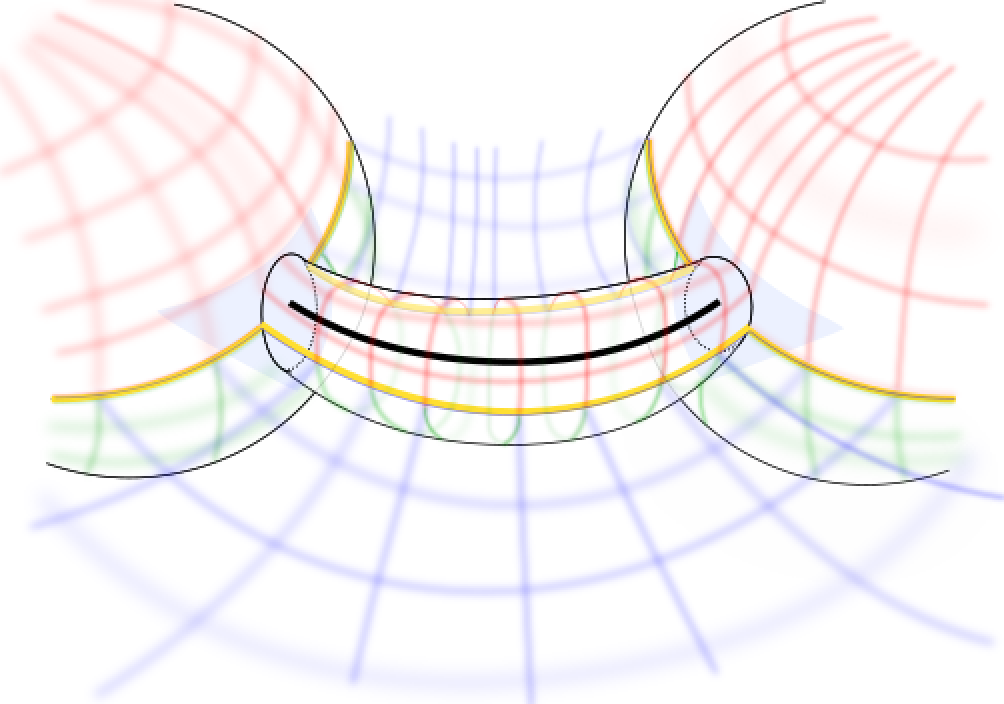}
\caption{
We illustrate how we connect the isolated dual of $4$-bubbles via the carving operation explained in sec.~{\ref{sec:connect4bubbles}}.
For simplicity, the figure is an analogue in lower three dimensions rather than four.
Two $\Sp^2$s on the right and on the left represent $K(B^{0}_a)$ and $K(B^{0}_b)$ and they are connected via a tubular neighborhood of $\ell$ (here represented with a solid black line).
Part of $H_{12}$ is shown as a red surface, $H_{13}$ in green, and $H_{23}$ in blue. 
Part of the central surface to be is depicted as lines in yellow.
The three-dimensional space above the red and the blue surfaces is an analogue of $X_2$, the one below is $X_3$, whereas the tubular neighborhood $\ell$ and the spheres constitute $X_1$.
The light blue triangles represent ${\cal Q}_{\sigma_a} \cup {\cal R}_{\sigma_a}$ and ${\cal Q}_{\sigma_b} \cup {\cal R}_{\sigma_b}$. (See fig.~{\ref{fig:doublepentachoron}}.)
}
\label{fig:bubblesconnected}
\end{minipage}
\end{figure}

%%%%%%%%%%%%%%%%%%%%%%%%%%%%%%%%%%%%%%%%%%%%%%%%%%%%%%%%%%%%%%%%%%%%
\subsection{Four-dimensional handlebodies}
\label{sec:4dhandlebodies}
%%%%%%%%%%%%%%%%%%%%%%%%%%%%%%%%%%%%%%%%%%%%%%%%%%%%%%%%%%%%%%%%%%%%

Let us briefly comment on the four-dimensional pieces $X_1$, $X_2$, and $X_3$ we obtained with our prescription. 
As we discussed at the beginning of section~\ref{sec:trisections}, theorem~\ref{th:extending-th} implies that
there is a unique cap-off of ${\cal D}'\cup{\cal Q}'\cup{\cal R}'$, i.e., there is a unique way of defining $X_1$, $X_2$, and $X_3$ using only $3$- and $4$-handles such that the pairwise unions  ${\cal D}'\cup{\cal Q}'$, ${\cal D}'\cup{\cal R}'$ and ${\cal Q}'\cup{\cal R}'$, are the boundaries of $X_1$, $X_2$, and $X_3$.
Due to the symmetric nature of $i$-handles and $(d-i)$-handles in $d$ dimensions, all  $X_1$, $X_2$, and $X_3$ are guaranteed to be handlebodies. 
The statement, therefore, is equivalent to saying that there is a unique set of three handlebodies with the given boundaries. 
Nevertheless one might wonder whether, given a triangulation, these handlebodies actually reconstruct the PL-manifold or not. 
In fact, embedding ${\cal D}'$, ${\cal Q}'$ and ${\cal R}'$ in the triangulation as we illustrated above provides us with three four-dimensional submanifolds $\overline{X}_1$, $\overline{X}_2$ and $\overline{X}_3$. 
These manifolds share the same boundaries as $X_1$, $X_2$, and $X_3$ but they are a priori different. If that were the case, $\overline{X}_1$, $\overline{X}_2$ and $\overline X_3$ would automatically not be handlebodies due to the aforementioned uniqueness. In order to clarify this point we look for the spines of $\overline{X}_1$, $\overline{X}_2$ and $\overline{X}_3$.
\begin{corollary}
\label{corollary:4dhbody}
Given a colored triangulation $\T$ of a manifold $M$, dual to a colored graph $\cal G$, and a choice of a jacket for its $\widehat{0}$-bubbles, ${\cal J}(\B^{\widehat{0}}_a)$, construction \ref{const:aboutD} defines a trisection of $M$.
\end{corollary}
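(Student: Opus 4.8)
The plan is to verify the four clauses of Definition~\ref{def:trisection} for the triple $({\cal Q}',{\cal R}',{\cal D}')$, identified with $(H_{12},H_{13},H_{23})$, and above all to confirm that the four-dimensional regions cut out by Construction~\ref{const:aboutD} genuinely reconstruct $M$ rather than merely matching the boundary data of the abstract cap-offs. By Theorem~\ref{theor:ourtheor} the three blocks are three-dimensional handlebodies sharing the common boundary $\Sigma$ produced in Construction~\ref{const:aboutD}, so the pairwise intersections $X_i\cap X_j$ are automatically three-dimensional handlebodies and the triple intersection is $\Sigma$. First I would record that $\Sigma$ is a closed connected surface: it is assembled from the jacket surfaces $K({\cal J}(\B_a^{\widehat 0}))$, each connected since every $\widehat 0$-bubble is by definition a connected component of the subgraph on colors $\{1,2,3,4\}$, and these are connect-summed through the patches $\Sigma_{ab}$ along the embedded $0$-colored lines. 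Contracting each $\widehat 0$-bubble to a point turns ${\cal G}$ into the graph whose edges are exactly the $0$-lines; as ${\cal G}$ is connected this quotient is connected, so all bubble surfaces are linked and $\Sigma$ is connected.

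The heart of the argument is the four-dimensional reconstruction, namely showing that the embedded pieces $\overline{X}_1,\overline{X}_2,\overline{X}_3$ are honest four-dimensional handlebodies and not just manifolds sharing the boundaries of the unique cap-offs $X_i$. By Definition~\ref{def:handlebody} it suffices to exhibit a one-dimensional spine for each. For $\overline{X}_1$ I would collapse each $X_1^a$, a cone on $K(\B_a^{\widehat 0})\cong\Sp^3$ and hence a $4$-ball, to its central vertex $v_a^{\widehat 0}$ and retain the $0$-lines $\ell_{ab}$ (the handles $n_{ab}$) as edges; the result is a graph. For $\overline{X}_2$ and $\overline{X}_3$ I would use the cell structure of Section~\ref{sec:cutting-simplices}: on each $4$-simplex the map $f$ collapses the edge $[v^{\widehat 1},v^{\widehat 2}]$ carrying $P_1$ (resp.\ $[v^{\widehat 3},v^{\widehat 4}]$ carrying $P_2$) to a single vertex of $\simpT$, and the prism nearest that vertex deformation retracts onto the edge. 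Globally $\overline{X}_2$ (resp.\ $\overline{X}_3$) therefore collapses onto the subcomplex of $\T$ spanned by the vertices of colors $1$ and $2$ (resp.\ $3$ and $4$). The decisive point is that this subcomplex is one-dimensional: in a colored triangulation the vertices of any simplex carry distinct colors, so no face holds more than one vertex of a given color, and a subcomplex spanned by two color classes consists of vertices and edges only. Hence each $\overline{X}_i$ is a regular neighborhood of a $1$-complex, i.e.\ a four-dimensional handlebody.

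Connectivity of the spines, and therefore of the $\overline{X}_i$, then comes for free: were the color-$\{1,2\}$ subcomplex disconnected, $\overline{X}_2$ would split as a disjoint union of handlebodies and $\partial\overline{X}_2={\cal Q}'\cup_\Sigma{\cal D}'$ would be disconnected, contradicting the connectedness of $\Sigma$ established above. With each $\overline{X}_i$ a connected four-dimensional handlebody whose boundary is the Heegaard-split $3$-manifold $\partial X_i$, Theorem~\ref{th:extending-th} together with the symmetric roles of $i$- and $(4-i)$-handles identifies $\overline{X}_i$ with the unique cap-off $X_i$. Since the three prisms tile every $4$-simplex, $M=\overline{X}_1\cup\overline{X}_2\cup\overline{X}_3$ with pairwise disjoint interiors, so all four clauses of Definition~\ref{def:trisection} hold. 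I expect the main obstacle to be exactly this reconstruction step: controlling the global gluing of the local prisms well enough to produce an honest one-dimensional spine, where the color-distinctness of the vertices in each simplex is the property that saves the day. The one situation needing separate care is when $\tau_{ab}$ faces a single $\widehat 0$-vertex, which by Section~\ref{sec:connect4bubbles} is a stabilization and leaves the handlebody conclusion intact.
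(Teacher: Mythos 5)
Your proposal is correct and follows essentially the same route as the paper's proof: exhibit one-dimensional spines for $\overline{X}_1$ (bubbles collapsed to points joined by the $0$-colored lines) and for $\overline{X}_2,\overline{X}_3$ (the edges spanned by the color classes $\{1,2\}$ and $\{3,4\}$), deduce connectedness from the connectedness of the boundary, and invoke the uniqueness of the cap-off to identify $\overline{X}_i$ with $X_i$. Your added observation that the subcomplex spanned by two color classes is necessarily one-dimensional (since vertices of a simplex in a colored triangulation carry distinct colors) makes explicit a point the paper leaves implicit, but it is the same argument.
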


\begin{proof}
Since the three-dimensional handlebodies ${\cal Q}'$, ${\cal D}'$ and ${\cal R}'$ satisfy the hypothesis of definition \ref{def:trisection} by construction (i.e., they share the same boundary and their interiors are disjoint), we can focus on the four-dimensional submanifolds $\overline{X}_1$, $\overline{X}_2$ and $\overline{X}_3$. Their interior is disjoint by construction, therefore the only issue is to prove that they are handlebodies.
$\overline{X}_1$ is bounded by ${\cal Q}'\cup {\cal R}'$. 
Its spine is easily found by collapsing $K(\B_a^{\widehat{0}})$ to points \footnote{For the moment we are only dealing with manifolds rather than pseudomanifolds therefore this just represents the retraction of a topological ball to its center.} and keeping the connection
encoded by $0$-colored embedded lines. 
Therefore $\overline{ X}_1$ is a handlebody by construction.
$\overline{ X}_2$ is bounded by ${\cal Q}'\cup {\cal D}'$.
Bearing in mind the linear map from a $\Delta^{(4)}$ to $\Delta^{(2)}$ as in section \ref{sec:cutting-simplices} 
(fig.~\ref{fig:4simplexmap}), 
we notice that in every four-simplex, $\overline{X}_2$ can be retracted to an edge identified by the set of colors $\{1, 2\}$ via its endpoints: $v^{\widehat{1}}$ and $v^{\widehat{2}}$. 
The set of these edges therefore constitutes a spine of $\overline{X}_2$.  
Moreover, $\overline{X}_2$ is connected since its boundary $\partial\overline{X}_2$ is connected by construction. 
This is enough to prove that $\overline{X}_2$ too is a handlebody.
The argument for $\overline{X}_3$ follows in complete analogy with the one for $\overline{X}_2$ upon replacing the set of colors $\{1, 2\}$ with $\{3, 4\}$ and the boundary  $\partial\overline{X}_2 = {\cal Q}'\cup {\cal D}'$ with $\partial\overline{X}_3 ={\cal R}'\cup {\cal D}'$.

The uniqueness of the handlebodies with the given boundary implies $\overline{X}_1 = { X}_1$, $\overline{X}_2 = { X}_2$ and $\overline{X}_3 = { X}_3$.
\end{proof}

%%%%%%%%%%%%%%%%%%%%%%%%%%%%%%%%%%%%%%%%%%%%%%%%%%%%%%%%%%%%%%%%%%%%
\subsection{Central surface and trisection diagram}
\label{sec:central-surface}
%%%%%%%%%%%%%%%%%%%%%%%%%%%%%%%%%%%%%%%%%%%%%%%%%%%%%%%%%%%%%%%%%%%%

In this present section, we discuss the trisection diagram encoded in what we illustrated in section \ref{sec:trisections}.

Let us slowly reveal the topological information somewhat deeply hidden in our construction.
From our construction, in general, the genus of the central surface will not coincide with the trisection genus.
In a rare case the genus of the central surface is equal to the trisection genus,  one could imagine it being a very special type of triangulation and is suppressed in the statistical theory dictated by the tensor model. 
This is not necessarily a dramatic problem, provided that there is a clear understanding of $\alpha$-, $\beta$- and $\gamma$-curves. 
This information of curves, however, is also not necessarily trivial to extract since we generate many copies of the same curve which, in principle, intersect other curves on the diagram differently and choosing one curve over the other corresponds to a different diagram with the same central surface\footnote{Therefore connected by a series of handle slides and by as many handle addition as handle cancellations.}. 
Nevertheless we are hopeful that future works might unentangle this information and overcome this ambiguity.

To start, we look at the genus of the central surface. 
Let us define the following graph $\widetilde{\cal G}$ derived from a colored graph $\cal G$.
Starting from the original colored graph $\cal G$, we collapse all the $\widehat{0}$-bubbles to points which will become the  
nodes of $\widetilde{\cal G}$. Then, we connect these nodes  via the $0$-colored lines of $\cal G$ encoding
the same combinatorics of the original graph $\cal G$.
Effectively,  the $0$-colored lines of $\cal G$ simply become the lines of $\widetilde {\cal G}$.
Note that the number of connected components of a graph is preserved under this operation; if $\cal G$ is connected, $\widetilde{\cal G}$ is connected.
The number of loops\footnote{We refer here to the notion of loops of a graph that is commonly used in physics in the framework of Feynman diagrams, not to the graph theoretical notion of a line connecting a node to itself. 
What we refer to as loop is, in graph theory, sometimes referred to as independent cycles.} of $\widetilde {\cal G}$ corresponds to the dimension of its first homology group and evaluates to:
\begin{equation}
L=\vert {\cal E} \vert-\vert {\cal V} \vert +1\,,
\end{equation}
where $\vert {\cal E} \vert$ is the number of lines and $\vert {\cal V} \vert$ is the number of 
nodes of $\widetilde {\cal G}$.
By construction, $\vert {\cal V} \vert$ corresponds to the number of different  $\widehat{0}$-bubbles which, in turn, is the number of vertices
opposing to $0$-colored tetrahedra.
$\vert \cal E \vert$, on the other hand, corresponds to the number of $0$-colored tetrahedra and evaluates to $p/2$ for a triangulation of $p$ simplices\footnote{Note that we are considering only orientable manifolds and, therefore, the original graph $\cal G$ is bipartite.}.
\begin{proposition}
Construction \ref{const:aboutD} defines a trisection with a central surface $\Sigma$ of genus $g_c$ given by
\begin{equation}
\label{eq:central-genus}
g_c = \sum_{a=1}^{\vert \cal V \vert} g_{{\cal J}(\B_a^{\widehat{0}})} +L\,,
\end{equation}
with $g_{{\cal J}(\B_a^{\widehat{0}})}$ being the genus of the jacket ${\cal J}_{\{1, 2\}\{3, 4\}}$ of the bubble $\B_a^{\hat{0}}$.
\end{proposition}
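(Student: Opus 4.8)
The plan is to read off the genus directly from the way the central surface is assembled in Construction~\ref{const:aboutD}, rather than computing an Euler characteristic from scratch. Before any carving, the surface attached to the $\widehat{0}$-bubbles is the disjoint union $\bigsqcup_{a=1}^{\vert{\cal V}\vert} K({\cal J}(\B_a^{\widehat{0}}))$ of the jacket (Heegaard) surfaces of the individual bubbles, so that the sum of their genera is $\sum_{a} g_{{\cal J}(\B_a^{\widehat{0}})}$. The carving of the regular neighborhoods of the $0$-colored lines, performed one line at a time, is precisely what glues these pieces together, and each such operation is indexed by a line of the auxiliary graph $\widetilde{\cal G}$. Thus the whole computation reduces to tracking how the genus of an orientable surface changes under the two elementary moves identified in Sec.~\ref{sec:connect4bubbles}.

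First I would classify each line of $\widetilde{\cal G}$. As established in Sec.~\ref{sec:connect4bubbles}, a $0$-colored line joining two bubbles that still lie in distinct connected components of the partially assembled surface realizes a connected sum $K({\cal J}(\B_a^{\widehat{0}}))\,\sharp\,K({\cal J}(\B_b^{\widehat{0}}))$, under which genus is additive: the connected sum of orientable surfaces of genera $g_1$ and $g_2$ has genus $g_1+g_2$. By contrast, a line whose endpoints already belong to the same component — this includes each surplus line between an already-connected pair (the $E_{ab}-1$ extra lines) as well as every self-loop arising when $v_a^{\widehat{0}}=v_b^{\widehat{0}}$ — acts as a stabilization, which by Sec.~\ref{sec:stab} raises the genus by exactly one. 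Since the connected sum is performed on trivial discs disjoint from the attaching curves (Sec.~\ref{sec:connected-sum}), no extra handles appear in the merging moves, so the two moves genuinely split into ``genus-additive'' and ``genus-plus-one''.

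The counting step is then purely graph-theoretic. Since $\cal G$ is connected, so is $\widetilde{\cal G}$; I would process its lines in an order that first exhausts a spanning tree. The $\vert{\cal V}\vert-1$ tree lines are exactly the merging moves, and after them the $\vert{\cal V}\vert$ jacket surfaces have fused into a single connected surface of genus $\sum_{a} g_{{\cal J}(\B_a^{\widehat{0}})}$. Each of the remaining lines closes a loop and is therefore a stabilization; their number is $\vert{\cal E}\vert-(\vert{\cal V}\vert-1)=L$, contributing $L$ to the genus. Collecting the contributions yields
\begin{equation}
g_c = \sum_{a=1}^{\vert{\cal V}\vert} g_{{\cal J}(\B_a^{\widehat{0}})} + L\,,
\end{equation}
as claimed.

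The only real obstacle I anticipate is establishing that this classification is independent of the order in which the lines are processed: a priori whether a given line ``merges'' or ``closes a loop'' could depend on the sequence. I would resolve this by observing that the genus of the final closed orientable surface is an intrinsic invariant, so every admissible ordering must return the same $g_c$; the spanning-tree-first ordering is chosen merely to make the bookkeeping transparent, since in that ordering the partition into $\vert{\cal V}\vert-1$ connected sums and $L$ stabilizations is forced. A secondary subtlety is confirming that the self-loop case is tallied consistently, but this is automatic: a self-loop contributes to $\vert{\cal E}\vert$ and hence to $L$, matching the fact that it is realized as a single stabilization, so no separate bookkeeping is required.
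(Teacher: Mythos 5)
Your proposal is correct and follows essentially the same route as the paper, which (without writing out a formal proof) justifies the formula by observing that the central surface is assembled as the connected sum of the jacket surfaces $K({\cal J}(\B_a^{\widehat{0}}))$ along the embedded $0$-colored lines, with each line beyond those needed to connect the bubbles acting as a stabilization that raises the genus by one, so that the $L$ independent loops of $\widetilde{\cal G}$ account for the additive term. Your spanning-tree bookkeeping and the order-independence remark simply make explicit what the paper leaves implicit in Sections~\ref{sec:connect4bubbles} and \ref{sec:central-surface}.
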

Notice that $g_c$ is invariant under the insertion of $d$-dipoles in the $0$-colored lines, while inserting a $d$-dipole in a line of color $i \ne 0$ increases $L$, and therefore $g_c$, by one. In fact, as we show in the appendix \ref{app:examples}, the 
elementary melon 
yields the genus $1$ trisection diagram for $\Sp^4$ and the insertion of a $d$-dipole can be understood as the connected sum with 
 the elementary melon
at the level of the colored graph.

Let us look at the curves we have drawn on $\Sigma$. We remark that the genus $g_c$ also corresponds to the number of independent $\alpha$-, $\beta$- and $\gamma$-curves.
The $\gamma$-curves are obtained as paths on $\Sigma$ and composed by segments parallel to the  lines of $\widetilde {\cal G}$, and segments crossing the boundaries between different $s$'s, according to an associated color $i\neq 0$. 
The composition of these segments according to the combinatorics of ${\cal G}$ will force the $\gamma $ curve to close in a loop (see fig.~\ref{fig:carved-triple-D}).
This tells us that the $\gamma$-curves are isomorphic to embedded $\{0 i\}$-cycles  in $\T$.
Note that by representing the graph $\cal G$ in stranded notation, these curves are literally drawn on the surface\footnote{Note that every vertex of $\cal G$ corresponds to a square in the surface dual to ${\cal J}(\B^{\widehat{0}})$ and the $0$-colored embedded lines are interpreted as handles. 
Therefore the $\{0 i\}$-strand is really isomorphic to one of the $\gamma$-curves.}

Similarly, given a chosen jacket ${\cal J}_{\{i, j\}\{k, l\}}$ the $\alpha$- and $\beta$-curves are given by the $(i, j)$- and $(k, l)$-strands of $\cal G$ (see fig.~\ref{fig:trisection-curves}). 
Furthermore, we shall add one $\alpha$- and one $\beta$-curve for every 
 line $\widetilde{\cal G}$. 
These last additions correspond to the attaching cuves of the Heegaard splitting of $\Sp^1 \times \Sp^2$ in the genus one trisection diagram of $\Sp^4$ (see section~\ref{sec:stab}).

\begin{figure}[h]
    \begin{minipage}[t]{0.8\textwidth}
      \centering
\def\svgwidth{0.3\columnwidth}
\centering
\includegraphics[scale=.35]{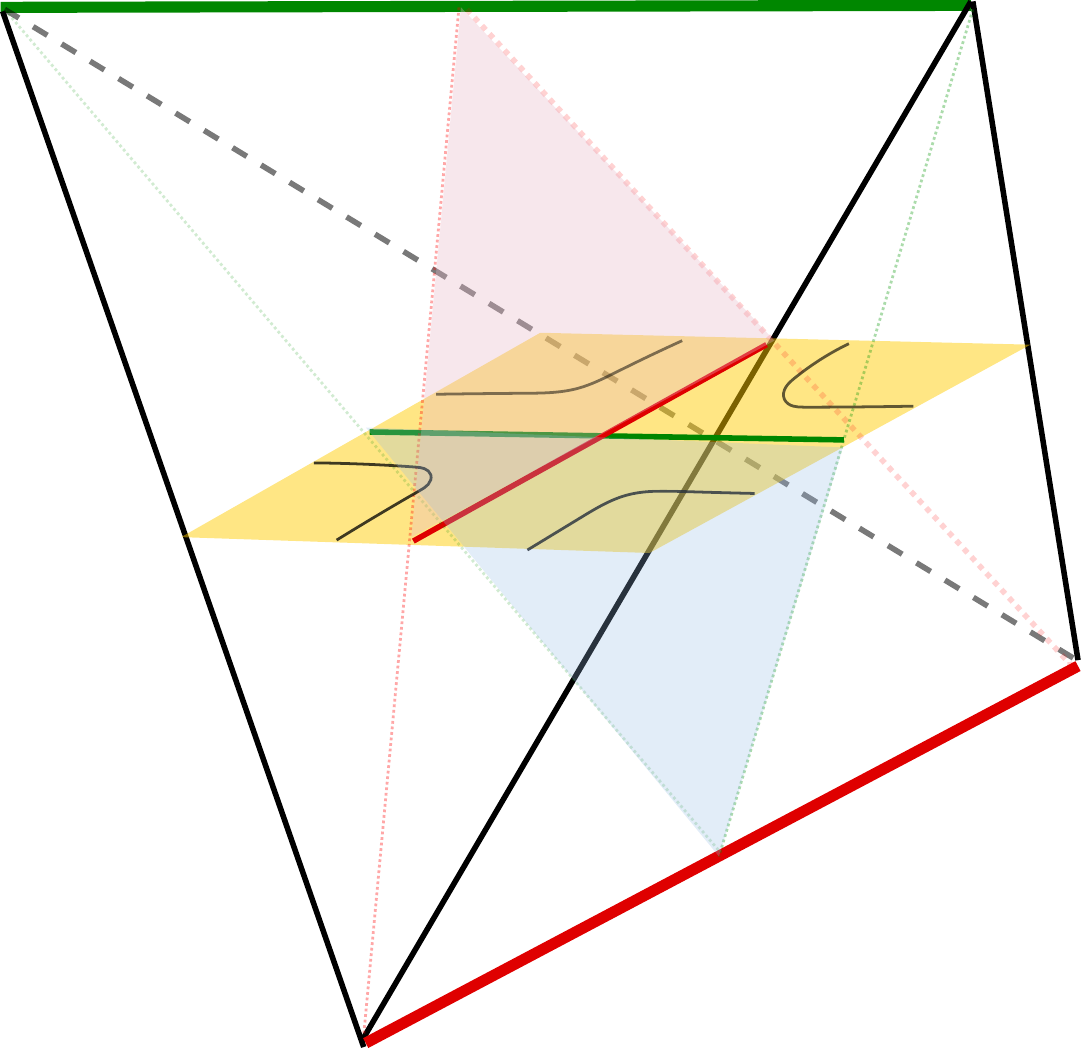}
\caption{
This sub-$3$-simplex is an element in the triangulation dual to a $\widehat{0}$-bubble.
We can identify $\alpha$-curve (red line)  and $\beta$-curve (green line) on the central surface (yellow square) by projecting the edges (red and green) of the sub-$3$-simplex sitting opposite to the central surface down to the central surface itself. These edges become part of the spine of the corresponding three-dimensional handlebodies.
Part of compression discs are shown in pink (light blue) which is bounded by an $\alpha$-curve ($\beta$-curve).
We highlight the matrix model dual to a jacket ${\cal J}(\B^{\widehat{0}})$.  The yellow  square which is part of central surface is nothing but the quadrangulation dual to a jacket ${\cal J}(\B^{\widehat{0}})$. 
We illustrate that the trisection curves ($\alpha$-curve in red and $\beta$-curve in green) coincide with drawing the strands of the original graph directly on the quadrangulation of the central surface.
}
\label{fig:trisection-curves}
\end{minipage}
\end{figure}

As we stated above, not all these curves are independent. Each of them is a viable attaching curve, 
but not all of them should be considered at the same time.
For the $\alpha$-curves (and similarly for the $\beta$-curves), we can constrain slightly more; 
the independent ones should be chosen to be $g_{{\cal J}(\B_a^{\widehat{0}})}$-many in each realization of a  $\widehat{0}$-bubble plus $L$-many among the extra ones we draw around the now embedded lines of $\widetilde {\cal G}$ (up to Heegaard moves). 
Remember that attaching curves of a graph are defined by the condition that cutting along them we obtain a connected punctured sphere (see fig.~\ref{fig:cutting}). 
$L$ is by construction the maximal number of lines we can cut before disconnecting the graph $\widetilde {\cal G}$.
Once these first $L$ curves are cut, we can proceed identifying the rest of the $\alpha$-curves given by each of the $\vert \cal V \vert$-many $\widehat 0$-bubbles through ${\cal J}(\B^{\widehat{0}})$. 

So far, we have treated color $0$ to be special, however, of course that is an arbitrary choice for an easy illustration, and any other color choice will suffice.
Hence, there are $15$ possible trisections (up to handle slides) that can be generated with our construction ($5$ choices of $4$-bubbles and $3$ choices of jackets per each choice of $4$-bubble).

A final remark is in order. If we compare our results with the one presented in \cite{Casali:2019gem}, the genus of the central surface we obtain is obviously higher and less indicative of the topological invariant. A more striking difference is that we have an extra combinatorial contribution. By construction, and due to the properties of the graphs considered, the result presented in \cite{Casali:2019gem} is only affected by the Heegaard splitting of an embedded $3$-manifolds, in particular, the Heegaard splitting of the link of a vertex. Moreover, for a closed compact $4$-manifold $M$, such link is always PL-homeomorphic to $\Sp^3$.
We can, thus, understand the trisection genus of a manifold $M$, which is a smooth invariant, as a lower bound for the possible Heegaard splittings of embedded spheres induced by colored triangulations of $M$. In our construction, though, an extra contribution to the genus of the central surface is produced in the form of $L$ in equation \eqref{eq:central-genus}. One may wonder whether this contribution is actually necessary or just an artifact of our construction of trisections. In other words, if the relevant topological information could indeed be rephrased in terms of Heegard splittings of embedded $3$-manifolds, it might be enough to consider the connected sum of the realizations of $4$-bubbles, without systematically stabilizing the trisection with $L$ extra of $1$-handles.

\subsection{Singular manifolds}
\label{sec:pseudo-mfd}

What we have discussed so far strictly applies only to manifolds, i.e., to graphs where all  $\widehat{i}$-bubbles are dual to PL-spheres. Nevertheless, colored graphs generated by a colored tensor model of the form \eqref{def:simplicial-tensor-model} encode pseudo-manifolds as well.
It is natural to wonder whether our construction might encode any sensible topological information for such wider class of graphs.
In \cite{Casali:2019gem} such an extension has been made clear starting from crystallization graphs. 
We will follow similar steps in order to extend the same construction beyond graphs encoding closed compact manifolds.

Let us restrict to the case of $\widetilde{M}=K({\cal G})$ being singular manifolds.
Then, all the $\widehat{i}$-bubbles are dual to PL-manifolds and the singularity is only around vertices in $\T$ (rather then higher dimensional simplices). 
One can obtain a compact manifold $M$ out of $\widetilde{M}$ by simply removing open neighborhoods of the singular vertices in $\T$. The number of connected components of $\partial M$ will increase by the number of singular vertices with respect to the number of connected components of $\partial \widetilde{M}$. Conversely, one can obtain a singular manifold by coning all the boundary components of a manifold with (non-spherical) boundary. If $\cal G$ is a closed graph, then the above correspondence is a bijection between the set of manifolds with non-spherical boundary components and singular manifolds. 

Though such bijection allows us to work with manifolds in a larger class of graphs,
the definition of trisections as formulated in definition~\ref{def:trisection} only applies to closed manifolds. Hence, we shall extend it to include boundary components in order to connect with our combinatorial construction. Following \cite{Casali:2019gem} we define a \textit{quasi-trisection} by allowing one of the four-dimensional submanifolds not to be a handlebody:

\begin{definition}\label{def:quasi-trisection}
Let $M$ be an orientable, connected 4-manifold with $n$ boundary components $\partial M_1\,, \dots\,,\partial M_n$. A quasi-trisection of $M$ is a collection of three submanifolds  $X_1, X_2, X_3 \subset M$ such that:
\begin{itemize}
\item each $X_1$ and $X_2$ are four-dimensional handlebodies of genus $g_1$ and $g_2$ respectively,
\item $X_1$ is a  compression body with topology ${\mathlarger{\natural}}_{r=1}^n (\partial M_r\times [0,1]) \bigcup_{s=0}^{g_1}{\rm h}_s$, ${\rm h}_s$ being $1$-handles,
\item $X_i$s have pairwise disjoint interiors $\partial X_i\supset (X_i\cap X_j)\subset\partial X_j$ and $M= \cup_i X_i$,
\item the intersections  $X_i\cap X_j = H_{ij}$ are three-dimensional handlebody,
\item the intersection of all the four-dimensional handlebodies $ X_1 \cap X_2 \cap X_3$ is a closed connected surface $\Sigma$ called \textbf{central surface}.
\end{itemize}
\end{definition}

Let us further denote with $G_s^{(0)}$ the set of connected $5$-colored graphs with only one $\widehat{0}$-bubble and with all $\widehat{i}$-bubbles dual to topological spheres, and let us denote with $\xbar{G}_s^{(0)}$ the set of connected $5$-colored graphs whose only non-spherical bubbles are $\widehat{0}$-bubbles (but we do not restrict the number of such bubbles). Obviously, an element in $\xbar{G}_s^{(0)}$ describes a manifold that can be decomposed into the connected sum of realizations of elements of $G_s^{(0)}$. The connected sum, in this case, can be performed at the level of two graphs ${\cal G}_1$ and ${\cal G}_2$ by cutting a $0$-colored line in each graph and connecting the open lines of ${\cal G}_1$ to the open lines of ${\cal G}_2$. The construction of trisections we illustrated in the previous sections can be straightforwardly applied to graphs in $\xbar{G}_s^{(0)}$ and is easy to see that the outcome satisfies the conditions in def.~\ref{def:quasi-trisection}. In this regard, the result is the simplest generalization of the result presented in \cite{Casali:2019gem}. A more complicated extension would require the inclusion of singular vertices defined by different color sets; we leave such study for future works.

%%%%%%%%%%%%%%%%%%%%%%%%%%%%%%%%%%%%%%%%%%%%%%%%%%%%%%%%%%%%%%%%%%%%
\section{Conclusions}
\label{sec:conclusions}
%%%%%%%%%%%%%%%%%%%%%%%%%%%%%%%%%%%%%%%%%%%%%%%%%%%%%%%%%%%%%%%%%%%%

We have formulated trisections in the colored triangulations encoded in colored tensor models, restricting to the ones which are realized by manifolds (as opposed to pseudo-manifolds). 
We utilized the embedding of colored tensor model graphs in their dual triangulations to facilitate our construction of trisections.
Generally speaking, the genus of the central surface of the trisection, given a colored tensor model graph, is higher as the graph is bigger (i.e., the number of nodes is larger).
Therefore, statistically speaking, it is unlikely to obtain the trisection genus (which is a topological invariant) of the corresponding manifold of a given colored tensor model graph. Nevertheless, it would be interesting to investigate whether the construction of trisections might lead to new insights on the organization of the partition function of colored tensor models.

With the Gurau degree classifying tensor model graphs, we can achive a large $N$ limit, where we only select the dominating melonic graphs which are a subclass of spheres. Melons in the continuum limit have been shown to behave like  branched polymers  with Hausdorff dimension $2$  and the spectral dimension $4/3$ \cite{Bonzom:2011zz, Gurau:2013cbh}.
Reflecting and motivated by the quantum gravity context, we dream of a possibility of finding a new parameter for colored tensor model which may classify the graphs in a new large $N$ limit, which may then give some new critical behavior.
There have been works in this direction \cite{Bonzom:2012wa, Bonzom:2015axa, Bonzom:2016dwy, BenGeloun:2017xbd}, where the authors studied how to achieve different universality classes than the melonic branched polymer (tree). 
In \cite{Lionni:2017yvi}, given random discrete spaces obtained by gluing families of polytopes together in all possible ways, with a systematic study of different building blocks, the author achieved the right scalings for the associated tensor models to have a well-behaved $1/N$ expansion.
So far, one could achieve in addition to the tree-like phase, a two-dimensional quantum gravity planar phase, and a phase transition between them which may be interpreted as a proliferation of baby universes \cite{Lionni:2017xvn}.
In \cite{Valette:2019nzp}, they have defined a new large $N$ expansion parameter, based on an enhanced large $N$ scaling of the coupling constants. These are called generalized melons, however, this class of graphs is not yet completely classified, and it is not proven yet what kind of universality class they belong to in the continuum limit, but strong hints point toward branched polymers.
In our present case, knowing that in rank $3$, the realisation of a jacket is identified to be a Heegaard surface, and knowing that jackets govern the Gurau degree which is responsible for the melonic large $N$ limit, 
it is tempting to delve further into the possibility of finding a specific parameter for rank $4$ colored tensor model based on trisections which may classify the graphs in the large $N$ limit.
Our next hope is to explore possibilities around trisections to find such a parameter. 

Looking at the structure of equation \eqref{eq:central-genus} and its properties under $d$-dipoles insertion/contraction we expect melons to persist in dominating the large $N$.
Nevertheless, a different parameter of topological origin might be induced by the above construction. An example is the intersection form, which we plan to investigate in the future following \cite{Feller:2016}. Hopefully investigations in this direction might shed some light on the path integral of tensor models beyond the leading order in the large $N$.

%%%%%%%%%%%%%%%%%%%%%%%%%%%%%%%%%%%%%%%%%%%%%%%%%%%%%%%%%%%%%%%%%%%%
\section*{Acknowledgements}
%%%%%%%%%%%%%%%%%%%%%%%%%%%%%%%%%%%%%%%%%%%%%%%%%%%%%%%%%%%%%%%%%%%%
We would like to thank Andrew Lobb for giving us a lecture on Morse theory, for supervising us on a study on trisections and for other discussions while he was visiting OIST as an excellence chair in the OIST Math Visitor Program. We would also like to thank David O'Connell for leading study sessions on trisections with us. Furthermore, we thank Maria Rita Casali and Paola Cristofori as well as Razvan Gurau for checking our formulation and the manuscript.

%%%%%%%%%%%%%%%%%%%%%%%%%%%%%%%%%%%%%%%%%%%%%%%%%%%%%%%%%%%%%%%%%%%%
\appendix
\section{Examples}
\label{app:examples}
%%%%%%%%%%%%%%%%%%%%%%%%%%%%%%%%%%%%%%%%%%%%%%%%%%%%%%%%%%%%%%%%%%%%

In this section we report some particularly simple examples of trisections constructed via our procedure.

\begin{figure}[h]
    \begin{minipage}[t]{0.9\textwidth}
      \centering
\def\svgwidth{0.8\columnwidth}
\centering
\includegraphics[scale=.18]{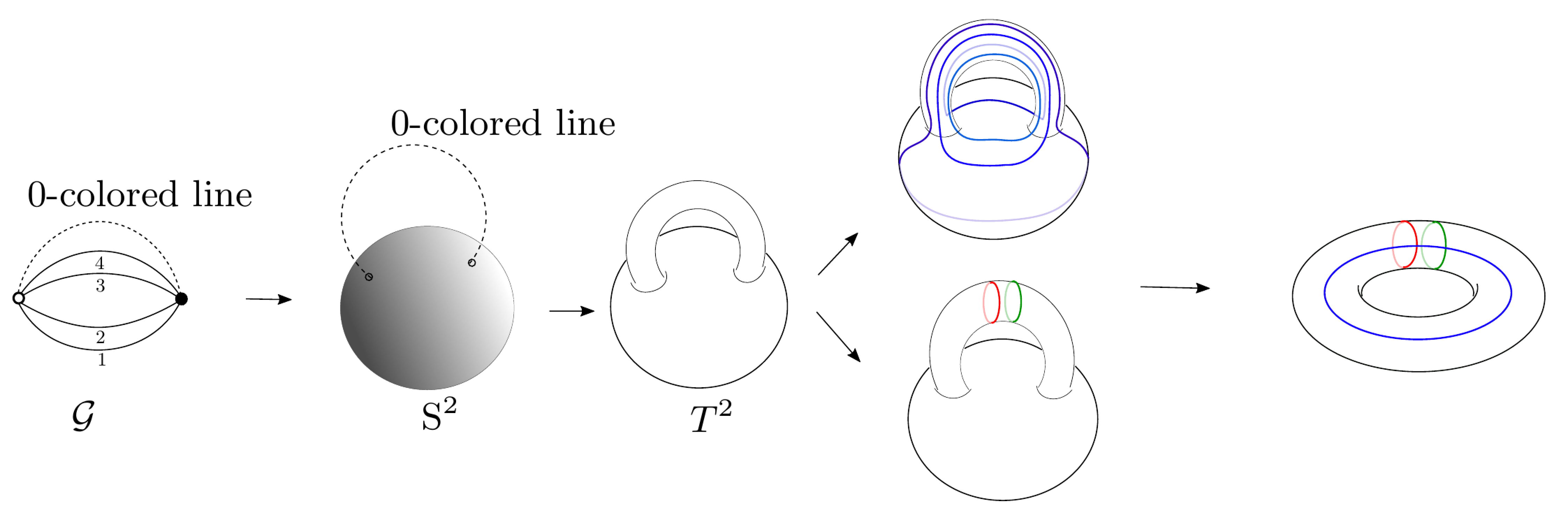}
\caption{}
\label{fig:exampleelementarymelon}
\end{minipage}
\end{figure}

The first graph we consider is the elementary melon, shown in fig.~\ref{fig:exampleelementarymelon}. This is the simplest graph we can draw and consists of only two nodes sharing all the lines. In fact, this is the graph corresponding to the crystallization of $\Sp^4$. Due to the melonic nature of this graph, we know that all the jackets are spheres. Also, all the bubbles are melons as well. Therefore, it affords the perfect playground to understand advantages and disadvantages of the procedure presented in sec.~\ref{sec:connect4bubbles}, as well as the differences with the work presented in \cite{Casali:2019gem}. As we know from the smooth case, the trisection genus of $\Sp^4$ is $g_{\Sp^4}=0$. Following the work in \cite{Casali:2019gem}, the trisection genus can be directly computed through the jackets of a bubble $\B^{\widehat{i}}$. Since all the bubbles are melons as well, their jackets have indeed genus $g_{\cal J}=0$. Following our construction, though, we add an extra handle to the central surface following the $i$-colored line. As shown in fig.~\ref{fig:exampleelementarymelon}, this step comes with the introduction of attaching curves. Following the conventions of the main text, we one $\alpha$-curve and one $\beta$-curve parallel to each other (red and green in the figure), and four $\gamma$-curves which collapse to the same one (in blue). As anticipated, the result is one of the genus one trisection diagrams for $\Sp^4$ that can be used to stabilize a trisection diagram.

\begin{figure}[h]
    \begin{minipage}[t]{0.9\textwidth}
      \centering
\def\svgwidth{0.8\columnwidth}
\centering
\includegraphics[scale=.2]{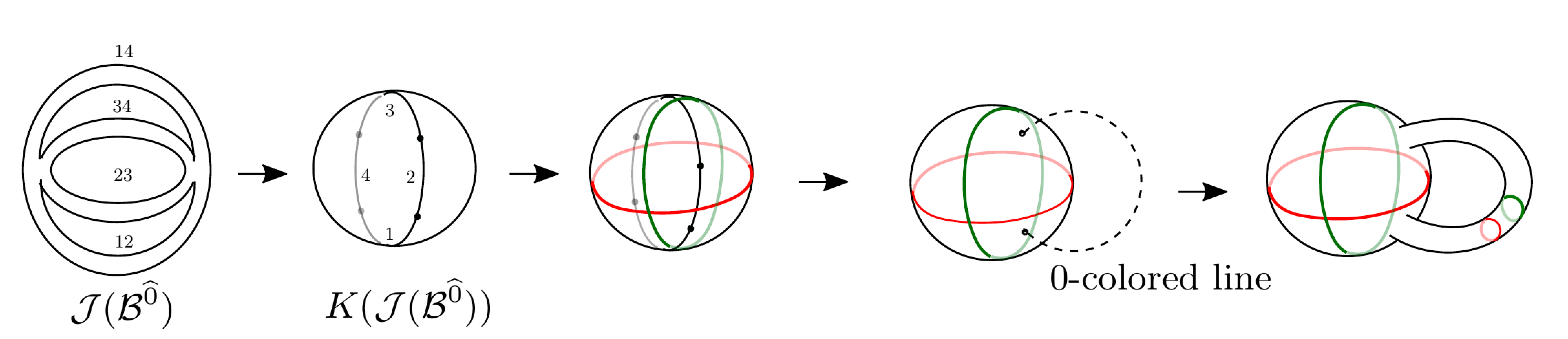}
\caption{}
\label{fig:example2}
\end{minipage}
\end{figure}

Fig.~\ref{fig:exampleelementarymelon} does not take into account the attaching curves coming from the jacket. This can be justified by the fact that the jacket is spherical and, therefore, every closed curve on it is homotopically trivial. Nevertheless, one may wonder whether retaining such curves until the end of the construction gives rise to further possibilities. In this example, we see easily from fig.~\ref{fig:example2} that the curves obtained by the spherical jacket of $\B^{\widehat{0}}$ end up being either redundant or trivial.

\begin{figure}[H]
    \begin{minipage}[t]{0.9\textwidth}
      \centering
\def\svgwidth{1\columnwidth}
\centering
\includegraphics[scale=.2]{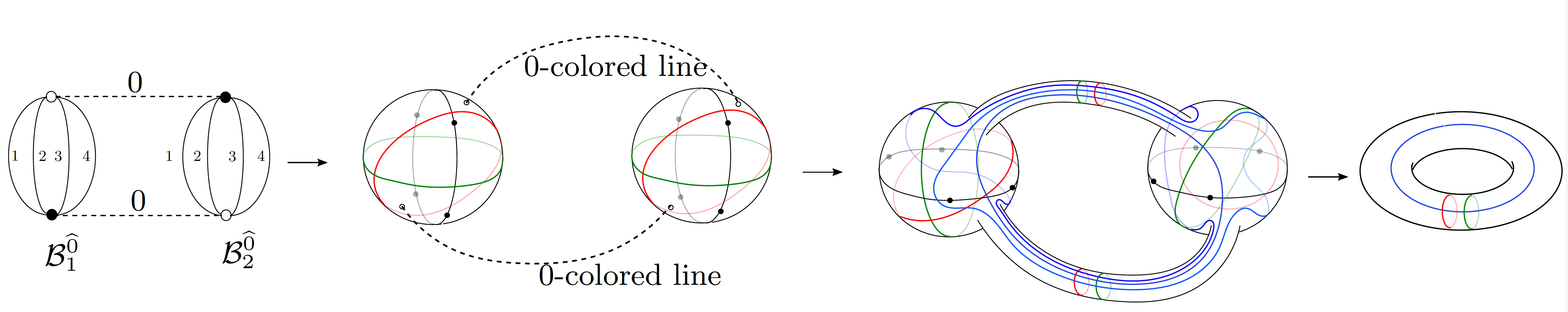}
\caption{}
\label{fig:example3}
\end{minipage}
\end{figure}

Another interesting example is given by the pillow diagram. This diagram is melonic and results from inserting a $d$-dipole into the elementary melon. Holding on to our choice of having $0$ as the special color, we have two possible ways of inserting such a dipole: inserting the dipole in a $0$-colored line or inserting it in a $i$-colored line for $i\neq 0$. As discussed in sec.~\ref{sec:central-surface}, such choices lead to different results. In fig.~\ref{fig:example3}, one can see how inserting a dipole in a $0$-colored leads to the same diagram we found before. Fig.~\ref{fig:example4}, on the other hand, shows the construction of a trisection diagram with genus $g=2$, due to the insertion of a $d$-dipole in an $i$-colored line of the elementary melon, for $i\neq 0$. Here, we observe that, up to isotopy, we can obtain five different trisection diagrams of genus $2$ for the sphere $\Sp^4$. Nevertheless, one may notice that, as expected, all these diagrams are connected by an appropriate handle slide. In fact, $(2)$ is obtained starting from $(1)$ by a handle slide of one of the blue curves, $(3)$ is obtained from $(1)$ by a handle slide of one of the red curves, while $(4)$ and $(5)$ are different handle slides a red curve in $(2)$. Note that the handle slides of the red curves relate a curve coming from the quadrangulation dual to the jacket with one of those added through stabilization.

\begin{figure}[H]
    \begin{minipage}[t]{0.9\textwidth}
      \centering
\def\svgwidth{0.75\columnwidth}
\centering
\includegraphics[scale=.2]{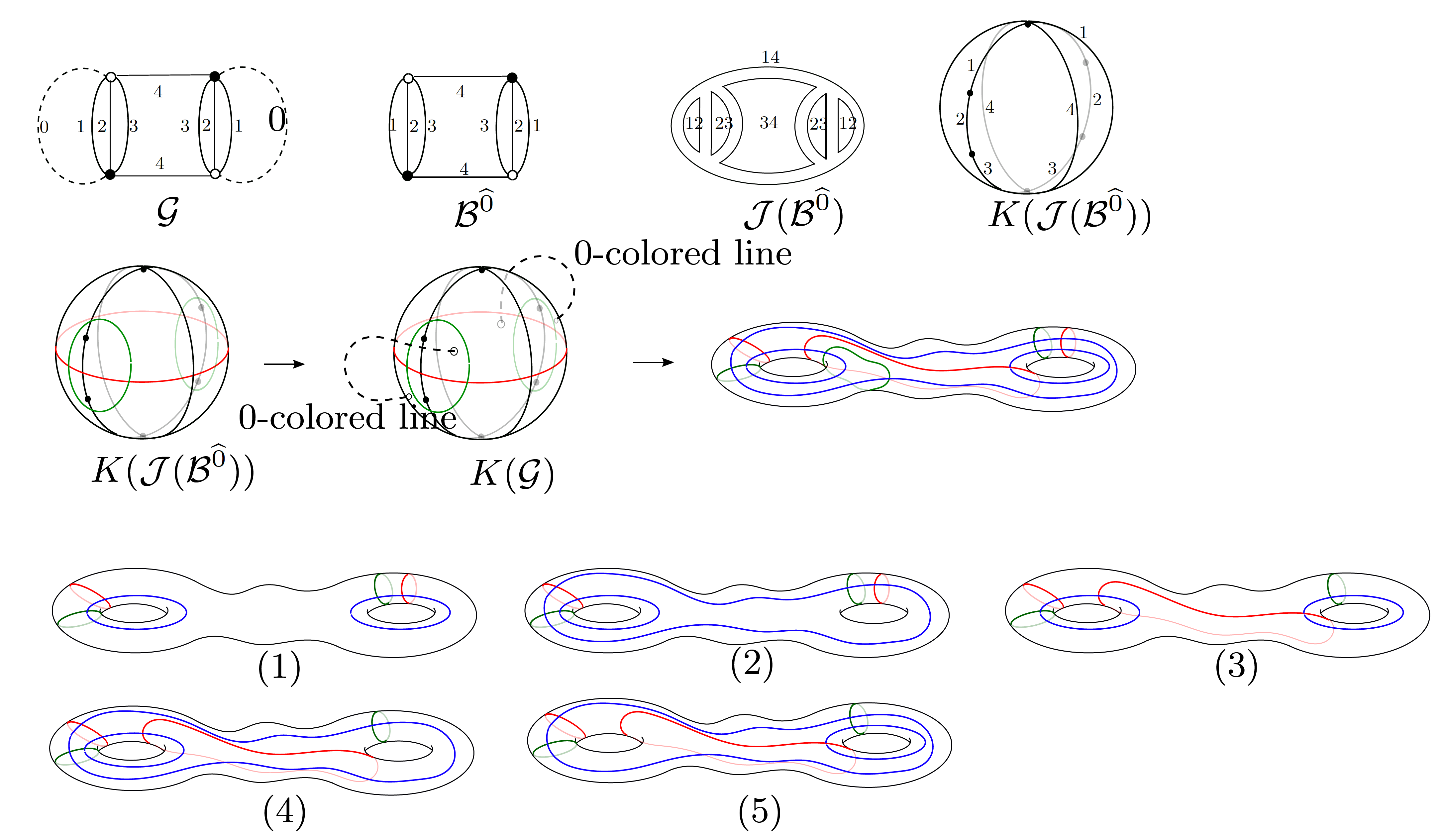}
\caption{}
\label{fig:example4}
\end{minipage}
\end{figure}

\begin{figure}[H]
    \begin{minipage}[t]{0.9\textwidth}
      \centering
\def\svgwidth{0.75\columnwidth}
\centering
\includegraphics[scale=.3]{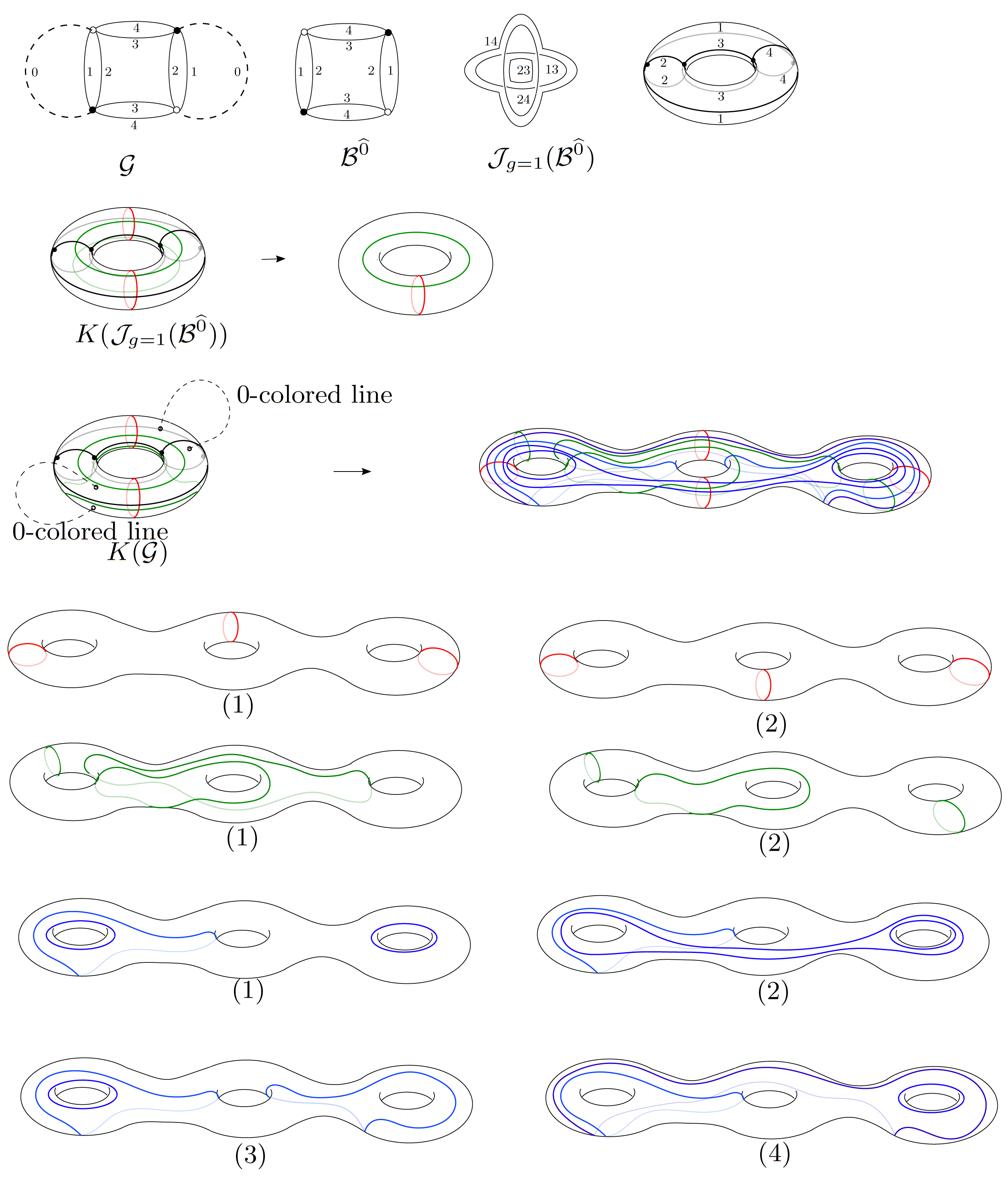}
\caption{}
\label{fig:example5}
\end{minipage}
\end{figure}

Finally, let us look at the graph shown in fig.~\ref{fig:example5}. Though not melonic, this diagram still corresponds to a sphere. out of the fifteen possible choices for constructing a central surface, we look at the less trivial one. In fact, removing either the color $3$ or the color $4$, leads again to a surface of genus $2$. On the contrary, removing the color $0$ (or equivalently the colors $1$ or $2$), leads to a necklace-like bubble. Such bubbles have one jacket which is not spherical, but rather dual to a torus. This is the choice we consider in the example shown in fig.~\ref{fig:example5}.
Let us note how, already for such a simple graph, we obtain a huge proliferation of redundant attaching curves, leading to different trisection diagrams. In particular, looking at the different ways we have to choose the attaching curves in this example, we obtain sixteen possible trisection diagrams of genus $3$ for the sphere $\Sp^4$, all out of a single combinatorial choice (out of fifteen possible choices).

%\newpage

\end{document}